\newcommand{\algmargin}{\the\ALG@thistlm}
\algnewcommand{\PARSTATE}[1]
{\STATE\parbox[t]{\dimexpr\linewidth-\algmargin}
{#1\vspace{0.1cm}}}
\newtheorem{theorem}{Theorem}
\newtheorem{corollary}{Corollary}
\newtheorem{lemma}{Lemma}
\newtheorem{definition}{Definition}
\newcommand{\F}{\mathbb{F}}
\newcommand{\Hs}{\mathcal{H}}
\newcommand{\ul}[1]{\underline{#1}}
\newcommand{\ol}[1]{\overline{#1}}
\DeclareMathOperator{\wt}{wt}
\DeclareMathOperator{\supp}{supp}
\DeclareMathOperator{\lcm}{lcm}
\begin{document}

\title{Concatenating Extended CSS Codes for\\Communication Efficient Quantum Secret Sharing}

\author{\vspace{0.3cm}
Kaushik Senthoor and Pradeep Kiran Sarvepalli
\thanks{The authors are with the Department of Electrical Engineering, Indian Institute of Technology Madras, Chennai 600036, India (email: ee15d042@ee.iitm.ac.in; pradeep@ee.iitm.ac.in).}
\thanks{This work was supported by the Department of Science and Technology, Govt. of India, under grant number DST/ICPS/QuST/Theme-3/2019/Q59.}
}

\maketitle

\begin{abstract}
Recently, a class of quantum secret sharing schemes called communication efficient quantum threshold secret sharing schemes (CE-QTS) was introduced.
These schemes reduced the communication cost during secret recovery.
In this paper, we introduce a general class of communication efficient quantum secret sharing schemes (CE-QSS) which include both threshold and non-threshold schemes.
We propose a framework for constructing CE-QSS schemes to generalize the earlier construction of CE-QTS schemes which was based on the staircase codes.
The main component in this framework is a class of quantum codes which we call the extended Calderbank-Shor-Steane codes.
These extended CSS codes could have other applications.
We derive a bound on communication cost for CE-QSS schemes.
Finally, we provide a construction of CE-QSS schemes meeting this bound using the proposed framework.
\end{abstract}

\begin{IEEEkeywords}
quantum secret sharing, communication complexity, quantum cryptography, staircase codes, extended CSS codes.
\end{IEEEkeywords}

\section{Introduction}
A quantum secret sharing (QSS) scheme is a quantum cryptographic protocol for securely distributing a secret among multiple parties with quantum information.
In these schemes, the authorized sets of parties are allowed to recover the secret while unauthorized sets of parties are not allowed to have any information about the secret.
QSS schemes were first proposed by Hillery et al for sharing a classical secret\cite{hillery99}.
QSS schemes with multiple parties sharing a quantum secret were discussed by Cleve et al\cite{cleve99}.
The connections of QSS with quantum error correction codes were studied in \cite{cleve99,gottesman00,rietjens05,ps12,gheorghiu12,marin13,matsumoto14}.
Quantum secret sharing continues to be extensively studied\cite{karlsson99,smith99,bandyopadhyay00,nascimento01,tyc02,imai03,singh05,ogawa05,markham08,ps10,ben12,qin20} and experimentally implemented\cite{tittel01,wei13,pinnell20,ding22}.

In this paper, we focus on quantum secret sharing schemes which share a \textit{quantum} secret.
An important class of such QSS schemes is the $((t,n))$ quantum threshold schemes (QTS).
In these schemes any set of $t$ or more parties is authorized and any set of less than $t$ parties is unauthorized.
Cleve et al\cite{cleve99} first gave a general construction for QTS schemes.
Later Ogawa et al\cite{ogawa05} introduced ramp QSS schemes which are non-threshold QSS schemes and gave a construction.
In these schemes, any set of size $t$ is authorized but only sets of size $z$ or less have to be unauthorized.
Sets of sizes $z+1$ to $t-1$ have partial information about the secret.

The part of the encoded state given to a party is called the \textit{share} of that party.
For recovering the secret, the parties in an authorized set send their shares to a new party called the \textit{combiner}.
The combiner then recovers the secret with suitable operations.
The amount of quantum information sent to the combiner by the parties during secret recovery is called the \textit{communication cost}.
Alternatively, the parties in an authorized set can collaborate among themselves to recover the secret.
The communication cost for secret recovery will be slightly different in this case.

In the standard $((t,n))$ QTS schemes, when the combiner accesses $t$ or more parties, some $t$ parties have to send their complete shares.
It was shown by Gottesman\cite{gottesman00} that the size of each share in a QTS scheme has to be at least the size of the secret.
This implies that the communication cost is at least $t$ times the secret size.
Recently, \cite{senthoor19} proposed a class of QTS schemes, called communication efficient quantum threshold secret sharing (CE-QTS) schemes,  which reduced the communication cost for secret recovery.
In CE-QTS schemes, each party can also send just a part of its share instead of the full share.
In these schemes, the combiner can access any $t$ parties to recover the secret as in the QTS schemes.
Additionally, the combiner can also access any $d$ ($>t$) parties downloading only a part of the share from each party.
The communication cost while accessing $t$ parties is $t$ for each qudit of the secret. 
This reduces to $\frac{d}{d-t+1}$ while accessing $d$ parties.
For the maximum value of $d=2t-1$, this reduced communication cost is less than half the communication cost for $t$ parties.
The construction of CE-QTS schemes was motivated by the constructions for classical communication efficient secret sharing schemes in \cite{huang16,bitar18}.
The theory of CE-QTS was further developed in  \cite{senthoor20,senthoor22}.

As mentioned earlier, the size of each share in a QTS scheme is at least the size of the secret.
However, Ogawa et al\cite{ogawa05} proved that in the non-threshold ramp QSS schemes, the size of each share can be just $\frac{1}{t-z}$ times the secret size.
In general, non-threshold QSS schemes have less storage overhead compared to threshold schemes.
This motivates us to look for communication efficient non-threshold QSS schemes as well.
The contributions in this paper are listed below.

\textit{(i)}
In this paper, we construct communication efficient quantum secret sharing schemes which include both threshold and non-threshold schemes.
This work generalizes the construction in \cite{senthoor19} for communication efficient QTS schemes.
We refer to the proposed class of communication efficient quantum secret sharing schemes as CE-QSS schemes.

\textit{(ii)}
As the main result of this paper, we propose a general framework for constructing CE-QSS schemes.
This framework constructs CE-QSS schemes by concatenating a CSS code with another CSS code.
We analyze the parameters of the CE-QSS scheme thus obtained in terms of the classical linear codes used in the two CSS codes.
Our work is inspired from the framework for constructing classical communication efficient secret sharing schemes by Mart{\'\i}nez-Pe{\~n}as~\cite{martinez18}.

\textit{(iii)}
For the outer code used in the concatenation, we propose a class of CSS codes called \textit{extended CSS codes}.
The encoding in the extended CSS code is by extending an underlying CSS code with extra ancilla qudits.
We then characterize the extended CSS code as a QSS scheme.
This characterization builds upon Matsumoto\cite{matsumoto14} which showed how CSS codes can be characterized as QSS schemes.
The extended CSS codes described in this paper may be of independent interest.

\textit{(iv)}
The framework can provide many different constructions for CE-QSS schemes based on the family of classical linear codes being used in the CSS codes.
We use the family of generalized Reed-Solomon codes to obtain the construction for CE-QSS schemes similar to the construction for CE-QTS schemes in \cite{senthoor19}.
We also derive the bound on communication cost for secret recovery in CE-QSS schemes.
We see that the CE-QSS schemes from this construction are optimal in both the storage overhead and the communication cost.

The paper is organized as follows.
In Section \ref{s:bg}, the necessary background on quantum secret sharing and CSS codes is given.
This section includes the definition of CE-QSS schemes.
In Section \ref{s:ce_qss_from_ecss}, we describe the extended CSS codes and then propose the framework to construct CE-QSS schemes from these codes.
In Section \ref{s:ce_qss_constrn_from_ecss}, we use this framework to provide a construction of optimal CE-QSS schemes based on generalized Reed-Solomon codes.

\section{Background}
\label{s:bg}

\subsection{Notation}
\label{ss:notation}
If $q$ is a prime power, then $\mathbb{F}_q$ denotes a finite field with $q$ elements. 
For any natural number $n$, we use the notation $[n]:=\{1,2,\ldots, n\}$.
For any $P\subseteq[n]$, its complement $[n]\setminus P$ is denoted as $\ol{P}$.
The collection of all subsets of $[n]$ given by $\{P:P\subseteq [n]\}$ is denoted as $2^{[n]}$.

If $M$ is an $\ell\times n$ matrix and $P\subseteq[n]$, then $M^{(P)}$ denotes the $\ell\times|P|$ submatrix of $M$ formed by taking the columns indexed by entries in $P$.
The $\ell\times\ell$ identity matrix is denoted as $I_\ell$.
For any linear code $C \subseteq \F_q^n$, we denote its generator matrix by $G_C$.
For linear codes $C_0$ and $C_1$ such that $C_1\subseteq C_0$, the term $G_{C_0/C_1}$ indicates the generator matrix of a complement of $C_1$ in~$C_0$.

We denote $\ket{x_1\,x_2\,\cdots\,x_\ell}$  by $\ket{\underline{x}}$ where $\underline{x}$ is the vector with entries from $(x_1,x_2,\hdots,x_\ell)$.
For a matrix $M\in \F_q^{a\times b}$, the notation $\ket{M}$ indicates the state $\ket{m_{11}m_{21}\hdots m_{a1}}$
$\ket{m_{12}m_{22}\hdots m_{a2}}$
$\hdots$
$\ket{m_{1b}m_{2b}\hdots m_{ab}}$ where $m_{ij}=[M]_{ij}$.

\begin{definition}[Minimum distance of a code]
For a linear code $C$, the minimum distance of the code is defined as
\begin{equation*}
\wt(C)=\min\{\wt(\underline{c})\ |\ \underline{c}\in C,\ \ul{c}\neq \ul{0}\}.
\end{equation*}
\end{definition}
\begin{definition}[Minimum distance of a nested code pair]
For linear codes $C_0$ and $C_1$ such that $C_1\subsetneq C_0$, the minimum distance of the nested code pair $(C_0,C_1)$ is defined as
\begin{equation*}
\wt(C_0\setminus C_1)=\min\{\wt(\underline{c})\ |\ \underline{c}\in C_0,\ \underline{c}\notin C_1\}.
\end{equation*}
\end{definition}

\subsection{Quantum secret sharing (QSS)}
\label{ss:bg_qss}
In this subsection, we review QSS  schemes and some of their properties.
A quantum secret sharing (QSS) scheme is a quantum cryptographic protocol where a secret is encoded and distributed among multiple parties such that only certain subsets of parties can recover the secret.
Subsets of parties which can recover the secret are called  \textit{authorized} sets (or \textit{qualified} sets) while subsets which have no information about the secret are called \textit{unauthorized} sets\ (or \textit{forbidden} sets).
We refer to the collection of all authorized sets as the \textit{access structure}, denoted by $\Gamma$.
The collection of all unauthorized sets is referred to as the \textit{adversary structure}, denoted by $\mathcal{A}$.
The access structure satisfies a monotonicity property in that if a set $A$ can recover the secret, then any set $A'\supseteq A$ can also recover the secret.
A QSS scheme is formally defined~as~follows.
\begin{definition}[QSS scheme]
\label{de:qss}
Consider two disjoint sets $\Gamma\subseteq 2^{[n]}$ and $\mathcal{A}\subseteq 2^{[n]}$.
An encoding of a quantum secret distributed among $n$ parties is defined as a QSS scheme for an access structure $\Gamma$ and an adversary structure $\mathcal{A}$ when the following conditions hold.
\begin{enumerate}
\item \textit{(Recovery)} For any $A\in\Gamma$, the secret can be recovered from parties in $A$.
\item \textit{(Secrecy)} For any $B\in\mathcal{A}$, the set of parties $B$ has no information about the secret.
\end{enumerate}
\end{definition}

A QSS scheme over $n$ parties is said to be {\em perfect} if every subset $P\subseteq[n]$ is either authorized or unauthorized \textit{i.e.} if $\Gamma\cup\mathcal{A}=2^{[n]}$.
In a {\em non-perfect} QSS scheme, there are sets of parties which are neither authorized nor unauthorized.
Such sets are called \textit{intermediate sets}.
Though an intermediate set of parties cannot recover the secret completely, it can retrieve partial information about the secret.
In this case, $\Gamma\cup\mathcal{A}\subsetneq 2^{[n]}$ and the collection of intermediate sets is given by $2^{[n]}\setminus(\Gamma\cup\mathcal{A})$.

In a classical secret sharing scheme, there may be two authorized sets which are disjoint.
This implies that there can be two copies of the secret which can be recovered independently.
However, the no-cloning theorem prohibits making copies of the quantum secret.
Hence any QSS scheme cannot have an access structure with two disjoint authorized sets.

\begin{lemma}
\label{lm:qss_forb}
In a QSS scheme, the complement of an authorized set is an unauthorized set  \textit{i.e.} if $A\in\Gamma$ then $\ol{A}\in\mathcal{A}$.
\end{lemma}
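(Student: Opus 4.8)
The plan is to work with a purification of the secret and to exploit the complementarity between recovery and secrecy that the no-cloning theorem forces on any QSS scheme. First I would make the encoding isometric: introduce a reference system $R$, prepare a maximally entangled state between $R$ and the secret register, and then apply the QSS encoding, so that the joint state $\ket{\psi}$ on $R$ together with the $n$ shares is pure. In this picture, the statement ``$A$ can recover the secret'' becomes the statement that there is a recovery map acting only on the shares in $A$ which restores the correlations with $R$; equivalently, $R$ is perfectly recoverable from $A$ alone. Dually, ``$\ol{A}$ has no information about the secret'' becomes the statement that the reduced state $\rho_{R\ol{A}}$ factorizes as $\rho_R\otimes\rho_{\ol{A}}$, that is, $R$ and $\ol{A}$ are uncorrelated.

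The core step is a decoupling principle for pure tripartite states: if $\ket{\psi}$ is pure on $R$, $A$, $\ol{A}$ and $R$ is perfectly recoverable from $A$, then $R$ must be in a product state with $\ol{A}$. This is exactly where no-cloning enters. If $\ol{A}$ retained any correlation with $R$ while $A$ already suffices to reconstruct the full secret, then one could use $A$ to produce one faithful copy of the secret and simultaneously extract nontrivial information from $\ol{A}$; combining these would let one distinguish nonorthogonal secret states and thereby duplicate an unknown quantum secret, contradicting the no-cloning property already invoked for disjoint authorized sets. The clean way to formalize the step is to observe that perfect recoverability of $R$ from $A$ forces the quantum mutual information $I(R:\ol{A})=0$ by data processing together with the purity of $\ket{\psi}$, and that $I(R:\ol{A})=0$ is equivalent to $\rho_{R\ol{A}}=\rho_R\otimes\rho_{\ol{A}}$.

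Finally, the factorization $\rho_{R\ol{A}}=\rho_R\otimes\rho_{\ol{A}}$ is precisely the secrecy condition of Definition \ref{de:qss}(2) applied to the set $\ol{A}$, so $\ol{A}\in\mathcal{A}$, which is the claim. I expect the main obstacle to be giving a fully rigorous version of the decoupling step, namely establishing that \emph{exact} recoverability of $R$ from $A$ in a pure state implies \emph{exact} decoupling of $R$ from $\ol{A}$ rather than merely approximate decoupling. This is the quantum-secret-sharing counterpart of the standard correspondence between QSS schemes and quantum erasure/error correction, and if a self-contained derivation is judged too long for the present context it can instead be invoked directly from the established QSS--QEC literature referenced earlier.
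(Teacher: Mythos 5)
Your overall strategy is the right one, and it is essentially the argument the paper relies on: the paper does not prove Lemma \ref{lm:qss_forb} itself but imports it from Cleve et al.\ \cite[Corollary 8]{cleve99}, noting the proof survives in the non-perfect setting, and the decoupling/data-processing machinery you describe is exactly what the paper later builds in Appendix~B (Lemma \ref{lm:qdpi} and the entropic characterizations in Lemmas \ref{lm:auth_set_qit} and \ref{lm:unauth_set_qit}). In particular, your identification of secrecy for $\ol{A}$ with $\rho_{R\ol{A}}=\rho_R\otimes\rho_{\ol{A}}$, i.e.\ $I(R:\ol{A})=0$, agrees with the paper's Lemma \ref{lm:unauth_set_qit}.

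There is, however, one genuine gap: your first step asserts that entangling the secret with the reference $R$ and applying the QSS encoding yields a \emph{pure} joint state on $R$ and the $n$ shares. That holds only for pure state encodings. Definition \ref{de:qss}, and hence Lemma \ref{lm:qss_forb}, covers mixed state schemes, where the encoding is a channel taking some pure secrets to mixed states; indeed the whole reason the paper states Lemma \ref{lm:qss_forb} separately from Lemma \ref{lm:pure_state_qss_forb} is that the former must hold without purity (only the converse direction needs it). As written, your tripartite pure state $\ket{\psi}_{RA\ol{A}}$ does not exist in the mixed case, so the step ``purity of $\ket{\psi}$ plus recoverability from $A$ forces $I(R:\ol{A})=0$'' has nothing to act on. The repair is routine: take a Stinespring dilation of the encoding channel, so that the pure state lives on $R$, the $n$ shares, and an environment $E$. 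Recoverability from $A$ means a recovery map acting on $A$ alone produces a register $S'$ with $RS'$ in the maximally entangled pure state $\Phi_{RS'}$; purity of that marginal forces the post-recovery global state to factorize as $\Phi_{RS'}\otimes(\text{rest})$, hence $I(R:\ol{A}E)=0$ after recovery, and since the recovery map never touches $R$, $\ol{A}$ or $E$, the marginal $\rho_{R\ol{A}E}$ is unchanged, so $I(R:\ol{A}E)=0$ in the original state as well; monotonicity under discarding $E$ gives $I(R:\ol{A})=0$, which is your secrecy condition. Note that this short computation also disposes of the obstacle you flagged at the end: exact recoverability gives exact decoupling directly, with no approximate-decoupling technology needed.
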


A QSS scheme is called a \textit{pure state QSS scheme} when any pure state secret is encoded into a pure state (shared among the $n$ parties).
The corresponding encoding is called a \textit{pure state encoding}.
A QSS scheme in which some pure state is encoded into a mixed state is called a \textit{mixed state QSS scheme}.

\begin{lemma}
\label{lm:pure_state_qss_forb}
In a pure state QSS scheme, the complement of a set is an unauthorized set if and only if the set is an authorized set \textit{i.e.} $A\in\Gamma\ \Leftrightarrow\ \ol{A}\in\mathcal{A}$.
\end{lemma}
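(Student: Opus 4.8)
The plan is to split the biconditional into its two implications and observe that only one of them requires new work. The forward implication, $A \in \Gamma \Rightarrow \ol{A} \in \mathcal{A}$, is exactly Lemma~\ref{lm:qss_forb} and holds for every QSS scheme, so nothing needs to be done there. The entire content lies in the converse, $\ol{A} \in \mathcal{A} \Rightarrow A \in \Gamma$, which is precisely where the pure-state hypothesis must be invoked.

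To prove the converse, I would pass to a purified picture. Let $V$ be the encoding isometry taking the secret space $\Hs_S$ into the joint share space, and introduce a reference system $R$ maximally entangled with the secret, so that applying $V$ yields a global state $\ket{\Psi}$ on $R \cup A \cup \ol{A}$; the fact that $\ket{\Psi}$ is pure is exactly the pure-state assumption. The secrecy of $\ol{A}$ says that the reduced state on $\ol{A}$ carries no information about the secret, which in the purified picture is the decoupling condition $\rho_{R\ol{A}} = \rho_R \otimes \rho_{\ol{A}}$. The goal thereby reduces to showing that this decoupling forces the secret (equivalently, the reference $R$) to be recoverable from $A$ alone.

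The key step is the complementarity of information in a pure tripartite state: when $\rho_{R\ol{A}}$ factorizes, the correlations between the reference and the code must reside entirely in $A$. Concretely, I would purify the product state $\rho_R \otimes \rho_{\ol{A}}$ as $\ket{\phi}_{RA_1} \otimes \ket{\chi}_{\ol{A}A_2}$ and invoke the uniqueness of purifications: since $\ket{\Psi}_{RA\ol{A}}$ and this factorized state are two purifications of the same $\rho_{R\ol{A}}$, they are related by an isometry $W$ acting only on the purifying system $A$. Applying $W$ disentangles $R$ from $\ol{A}$ and deposits a purification of $\rho_R$ inside $A$; because $\rho_R$ is maximally mixed, this purification is maximally entangled with $R$, so a final isometry maps it back to the secret. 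Composing these operations produces a recovery channel acting only on $A$, establishing $A \in \Gamma$.

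The main obstacle is exactly this decouple-implies-recover step. The no-cloning direction (recovery forbids complementary information) is the easy half and is essentially Lemma~\ref{lm:qss_forb}; the delicate half is constructing an explicit recovery from the mere factorization of $\rho_{R\ol{A}}$, where one must argue carefully that the isometry furnished by uniqueness of purifications acts on $A$ alone and that the resulting map is trace-preserving and independent of the particular secret. I expect this to demand the most care, and it is precisely where purity is indispensable: for a mixed-state encoding the global state need not be pure, the uniqueness-of-purifications argument breaks down, and the converse can genuinely fail.
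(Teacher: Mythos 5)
Your proof is correct, but note that the paper never actually proves this lemma: it is stated without proof, with the remark that Cleve et al.\ \cite[Corollary 8]{cleve99} established it for perfect schemes and that the argument therein extends to non-perfect ones. Your proposal therefore supplies what the paper delegates to a citation. Your decomposition is the right one---the direction $A\in\Gamma\Rightarrow\ol{A}\in\mathcal{A}$ is exactly Lemma~\ref{lm:qss_forb}, and the pure-state hypothesis enters only in the converse---and your decoupling argument is the standard proof of that converse: purity of the encoding makes the global state with the reference pure, secrecy of $\ol{A}$ becomes the factorization $\rho_{R\ol{A}}=\rho_R\otimes\rho_{\ol{A}}$, and uniqueness of purifications yields an isometry acting on $A$ alone that disentangles $\ol{A}$ and leaves $R$ maximally entangled with a register held by the combiner; maximal entanglement of $R$ with the input then upgrades this to a recovery channel valid for every secret state. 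This is essentially the argument underlying the cited result, and it is also the mechanism behind the equality condition of the quantum data processing inequality (Lemma~\ref{lm:qdpi}) that the paper invokes in its appendix. Two points merit the care you already flag: the equivalence between ``$\ol{A}$ has no information about the secret'' and the factorization of $\rho_{R\ol{A}}$ (a channel is constant iff its Choi state is a product), and the dimension bookkeeping ensuring the purification isometry can be realized on $A$ by adjoining ancillas. Finally, the seemingly natural alternative route---deriving the lemma from the entropic characterizations in Lemmas~\ref{lm:auth_set_qit} and \ref{lm:unauth_set_qit} combined with the Araki--Lieb identity for pure global states---would be circular within this paper, since the proof of Lemma~\ref{lm:unauth_set_qit} itself invokes the present lemma; your operational argument avoids that trap.
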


Although the above two lemmas were shown by Cleve et al in the context of perfect schemes\cite[Corollary 8]{cleve99}, the proof therein follows for non-perfect schemes as well.

In any QSS scheme with $n$ parties and non-empty $\Gamma$, we can find some $0<t\leq n$ such that any set of $t$ or more parties is authorized.
Similarly, some $0\leq z<t$ can be found such that any subset of $z$ or less parties is unauthorized.

\begin{definition}[Parameters for QSS schemes]
\label{de:thresholds_for_qss}
For $0\leq z<t\leq n$, a QSS scheme is called a $((t,n;z))$ QSS scheme when the following conditions hold.
\begin{enumerate}
\item Any $P\subseteq[n]$ is authorized if $|P|\geq t$.
\item Any $P\subseteq[n]$ is unauthorized if $|P|\leq z$.
\end{enumerate}
\end{definition}
This definition of $((t,n;z))$ QSS schemes follow the definition of the ramp QSS schemes given in \cite{marin13,matsumoto14}.
This definition differs from the stricter definition of ramp QSS schemes in \cite{ogawa05}.
Whereas Definition~\ref{de:thresholds_for_qss} allows the possibility that some sets of size between $z+1$ to $t-1$ are authorized or unauthorized, these sets have to be intermediate (neither authorized nor unauthorized) for the $(t,t$$-$$z,n)$ ramp QSS scheme in \cite[Definition 1]{ogawa05}.
To denote this difference, we call the schemes in Definition \ref{de:thresholds_for_qss} simply as $((t,n;z))$ QSS schemes instead of referring to them as ramp QSS schemes.

If $z=t-1$, then there are no intermediate sets in a $((t,n;z))$ QSS scheme and the scheme becomes a $((t,n))$ quantum threshold secret sharing (QTS) scheme.

For a QSS scheme with non-empty $\Gamma$, let $t_{\min}$ be the minimum value of $t$ such that any set of $t$ or more parties is authorized.
Let $z_{\max}$ be the maximum value of $z$ such that any set of $z$ or less parties is unauthorized.
The following lemma gives the relation between $n$, $t_{\min}$ and $z_{\max}$.
\begin{lemma}[Bound on number of parties]
For any QSS scheme with non-empty $\Gamma$, the number of parties $n\leq t_{\min}+z_{\max}$.
If the QSS scheme is a pure state scheme, then equality holds.
\end{lemma}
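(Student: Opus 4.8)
The plan is to derive both parts from the complementation lemmas, Lemma~\ref{lm:qss_forb} and Lemma~\ref{lm:pure_state_qss_forb}, by feeding them the ``boundary'' sets that the definitions of $t_{\min}$ and $z_{\max}$ guarantee. First I would unpack these two quantities. Since $t_{\min}$ is the \emph{smallest} threshold for which every set of size at least $t_{\min}$ is authorized, the value $t_{\min}-1$ must fail to be such a threshold; as every set of size $\geq t_{\min}$ is already authorized, the failure can only come from a set of size exactly $t_{\min}-1$, so there exists $S\subseteq[n]$ with $|S|=t_{\min}-1$ and $S\notin\Gamma$. Dually, maximality of $z_{\max}$ yields a set $B\subseteq[n]$ with $|B|=z_{\max}+1$ and $B\notin\mathcal{A}$.

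For the inequality $n\leq t_{\min}+z_{\max}$, I would argue by contradiction, assuming $n\geq t_{\min}+z_{\max}+1$. Take the set $B$ above with $B\notin\mathcal{A}$. The contrapositive of Lemma~\ref{lm:qss_forb} (``$A\in\Gamma\Rightarrow\overline{A}\in\mathcal{A}$'') reads ``$\overline{A}\notin\mathcal{A}\Rightarrow A\notin\Gamma$''; applying it with $\overline{A}=B$ gives $\overline{B}\notin\Gamma$. But $|\overline{B}|=n-(z_{\max}+1)\geq t_{\min}$ under the assumed bound, so $\overline{B}$ has size at least $t_{\min}$ and is therefore authorized, i.e.\ $\overline{B}\in\Gamma$, a contradiction. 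Hence $n\leq t_{\min}+z_{\max}$.

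For the pure-state equality I would establish the reverse inequality $n\geq t_{\min}+z_{\max}$, again by contradiction, assuming $n\leq t_{\min}+z_{\max}-1$. Now take the set $S$ with $|S|=t_{\min}-1$ and $S\notin\Gamma$. In the pure-state case Lemma~\ref{lm:pure_state_qss_forb} gives the biconditional $A\in\Gamma\Leftrightarrow\overline{A}\in\mathcal{A}$, whose negation is $A\notin\Gamma\Leftrightarrow\overline{A}\notin\mathcal{A}$; applied to $S$ this yields $\overline{S}\notin\mathcal{A}$. However $|\overline{S}|=n-(t_{\min}-1)\leq z_{\max}$ under the assumed bound, so $\overline{S}$ has size at most $z_{\max}$ and is therefore unauthorized, i.e.\ $\overline{S}\in\mathcal{A}$, a contradiction. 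Combining this with the first part gives $n=t_{\min}+z_{\max}$.

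The routine parts are the two contradiction arguments, which amount to size bookkeeping plus one application of the relevant lemma each. The step I expect to need the most care is the very first one: correctly converting the extremal (minimal/maximal) characterizations of $t_{\min}$ and $z_{\max}$ into the existence of witness sets of \emph{exactly} sizes $t_{\min}-1$ and $z_{\max}+1$, and checking the degenerate cases (for instance that these sizes lie in the valid range $0\leq\,\cdot\,\leq n$, using $0\leq z_{\max}<t_{\min}\leq n$) so that the sets genuinely exist. Everything else then follows directly from the complementation lemmas.
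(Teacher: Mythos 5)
Your proof is correct and rests on essentially the same ingredients as the paper's: Lemma~\ref{lm:qss_forb} plus the definition of $t_{\min}$ for the inequality $n\leq t_{\min}+z_{\max}$, and Lemma~\ref{lm:pure_state_qss_forb} plus the definition of $z_{\max}$ for the pure-state equality. The only difference is packaging---the paper argues directly (every set of size at most $n-t_{\min}$ is the complement of an authorized set, hence unauthorized, giving $z_{\max}\geq n-t_{\min}$, and dually $t_{\min}\leq n-z_{\max}$ in the pure-state case), while you run the contrapositive of the same applications of the two lemmas via extremal witness sets of sizes $z_{\max}+1$ and $t_{\min}-1$, which is logically equivalent.
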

\begin{proof}
By Lemma \ref{lm:qss_forb}, any set of size $n-t_{\min}$ is unauthorized which implies $z_{\max}\geq n-t_{\min}$.
Hence $n\leq t_{\min}+z_{\max}$.
Additionally, for a pure state QSS scheme, by Lemma \ref{lm:pure_state_qss_forb}, any set of $n-z_{\max}$ is authorized which implies $t_{\min}\leq n-z_{\max}$.
Hence $n\geq t_{\min}+z_{\max}$ as well.
\end{proof}
The above lemma says that for any pure state QSS scheme, $n=t_{\min}+z_{\max}$.
However, the converse need not be true.
There can be mixed state schemes with $n=t_{\min}+z_{\max}$.
In $((t,n))$
QTS schemes, we can see that $t_{\min}=t$ and $z_{\max}$ $=t-1$
which leads to the following lemma.
\begin{lemma}\cite[Theorem 2]{cleve99}
For any $((t,n))$ QTS scheme, the number of parties $n\leq 2t-1$.
If the QTS scheme is a pure state scheme, then equality holds.
\end{lemma}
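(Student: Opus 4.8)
The plan is to specialize the preceding \emph{Bound on number of parties} lemma to the threshold setting, so the entire argument reduces to identifying the threshold parameters $t_{\min}$ and $z_{\max}$ explicitly and then quoting that lemma. A $((t,n))$ QTS scheme is by the remark just above the statement a $((t,n;z))$ QSS scheme with $z=t-1$ and no intermediate sets, so every subset is either in $\Gamma$ or in $\mathcal{A}$, and these two collections are disjoint by Definition~\ref{de:qss}. I expect the whole proof to fit in a few lines once this observation is in place.

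First I would pin down $z_{\max}$. By the definition of a $((t,n;z))$ scheme with $z=t-1$, every set of size at most $t-1$ is unauthorized, which gives $z_{\max}\geq t-1$. For the reverse inequality I would use that every set of size $t$ is authorized; since $\Gamma$ and $\mathcal{A}$ are disjoint, no set of size $t$ can be unauthorized, so it is not true that all sets of size $t$ lie in $\mathcal{A}$, forcing $z_{\max}\leq t-1$. Hence $z_{\max}=t-1$.

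Next I would pin down $t_{\min}$ by the symmetric argument. Every set of size at least $t$ is authorized, so $t_{\min}\leq t$. If instead $t_{\min}\leq t-1$, then every set of size $t_{\min}$ would be authorized; but such a set has size at most $t-1$ and is therefore unauthorized, contradicting the disjointness of $\Gamma$ and $\mathcal{A}$. Hence $t_{\min}=t$.

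Finally I would invoke the \emph{Bound on number of parties} lemma, which yields $n\leq t_{\min}+z_{\max}=t+(t-1)=2t-1$, and which gives equality $n=t_{\min}+z_{\max}=2t-1$ precisely when the scheme is a pure state scheme. I do not anticipate a genuine obstacle here; the only point requiring care is the bookkeeping for $t_{\min}$ and $z_{\max}$, where one must explicitly use the disjointness of the access and adversary structures together with the absence of intermediate sets when $z=t-1$, rather than merely quoting the inequalities $t_{\min}\leq t$ and $z_{\max}\geq t-1$.
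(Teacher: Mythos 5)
Your proof is correct and takes essentially the same route as the paper: the paper likewise identifies $t_{\min}=t$ and $z_{\max}=t-1$ for a $((t,n))$ QTS scheme and then invokes the preceding bound $n\leq t_{\min}+z_{\max}$, with equality for pure state schemes. One small caution: that bound gives equality \emph{if} the scheme is a pure state scheme, not \emph{precisely when} --- the paper explicitly notes that mixed state schemes can also attain $n=t_{\min}+z_{\max}$ --- but since the statement asserts only the forward implication, this overstatement does not affect your argument.
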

A QSS scheme with the secret and all the shares having qudits of the same dimension $q$ is indicated by a subscript.
We refer to such a $((t,n;z))$ QSS scheme as $((t,n;z))_q$ QSS scheme.
The number of qudits (each of dimension $q$) in the secret is denoted by $m$.
The number of qudits in the $j$th share is denoted by $w_j$.
The lemma below gives the storage cost in distributing a secret in terms of number of the qudits.
\begin{definition}[Storage cost]
The storage cost for secret distribution in a $((t,n;z))_q$ QSS scheme equals $w_1+w_2+\cdots+w_n$.
\end{definition}

\subsection{Communication efficient QSS (CE-QSS)}
\label{ss:bg_ce_qss}
In this paper we are interested in designing QSS schemes which require a reduced communication overhead while reconstructing the secret. 
First we define the communication complexity of secret recovery for a given authorized set. 

\begin{definition}[Communication cost for an authorized set]
The communication cost for an authorized set $A\subseteq[n]$ in a $((t,n;z))_q$ QSS scheme is defined as
\begin{equation*}
\text{CC}_n(A)=\sum_{j\in A}h_{j,A}
\end{equation*}
where $h_{j,A}$ indicates the number of qudits sent to the combiner by the $j$th party during secret recovery from parties in $A$.
\end{definition}
Here we assume that for a given authorized set, which portion of an accessed share needs to be sent to the combiner is fixed a priori.
This assumption is necessary because there could be different ways to partition the shares from an authorized set for a successful secret recovery.

\begin{definition}[Communication cost for  $d$-sets]
The communication cost for a threshold $d$ such that $t\leq d\leq n$ in a $((t,n;z))_q$ QSS scheme is defined as
\label{de:comm-cost-d-set}
\begin{equation*}
\text{CC}_n(d)=\max_{A\subseteq[n],\,|A|=d}
\ \text{CC}_n(A).
\end{equation*}
\end{definition}

Based on the communication cost as described above, we define a communication efficient QSS scheme as follows.

\begin{definition}[Fixed $d$ CE-QSS]
\label{de:ce_qss}
Let $0\leq z<t<d\leq n$.
A $((t,n,d;z))_q$ communication efficient QSS scheme is a $((t,n;z))_q$ QSS scheme where
\begin{equation*}
\text{CC}_n(d)<\text{CC}_n(t).
\end{equation*}
\end{definition}
\noindent
Our previous work in \cite{senthoor19,senthoor20,senthoor22} has focused on communication efficient QSS schemes which are threshold schemes (with $z=t-1$), referred to as CE-QTS schemes.
However, the schemes in Definition \ref{de:ce_qss} include communication efficient QSS schemes which are non-threshold as well.
We refer to the schemes introduced here as CE-QSS schemes.

\subsection{CSS codes as QSS schemes}
\label{ss:bg_css_codes}
We briefly review the basics of CSS codes and the relevant results on QSS schemes based on CSS codes.
The CSS construction (for qubits) was independently proposed in \cite{calderbank96,steane96}.
For the generalization to qudits of prime power dimension, see \cite[Theorem 3]{grassl04} and \cite[Lemma 20]{ketkar06}.
\begin{definition}[CSS code]
Let $C_1\subsetneq C_0 \subseteq \F_q^n$ where $C_i$ is an $[n,k_i]_q$ linear code for $i\in\{0,1\}$. 
The CSS code of $C_0$ over $C_1$ is defined as the vector space spanned by the states
\begin{equation*}
\ket{\ul{x}+C_1}\ \equiv\ \frac{1}{\sqrt{|C_1|}}\ \sum_{\ul{y}\in C_1}\ket{\ul{x}+\ul{y}}
\end{equation*}
where $\ul{x}\in C_0$.
This code is denoted as CSS$(C_0,C_1)$.
It is an $[[n,k_0-k_1,\delta]]_q$ quantum code with distance
$\delta=\min\{\wt(C_0\setminus C_1),\wt(C_1^\perp\setminus C_0^\perp)\}$.
\end{definition}

Since $C_1\subsetneq C_0$, their generator matrices can be written as
\begin{equation}
G_{C_0}=
\left[
\begin{array}{c}
G_{C_0/C_1}\\
G_{C_1}
\end{array}
\right].
\end{equation}
Given $\ul{s}\in\F_q^{k_0-k_1}$, CSS$(C_0,C_1)$ encodes the state $\ket{\ul{s}}$ as 
\begin{equation}
\ket{\ul{s}}\ \mapsto\  \sum_{\ul{r}\in \F_q^{k_1}}
\ket{\,[\,G_{C_0/C_1}^T\ G_{C_1}^T\,]
\left[
\begin{array}{c}
\!\ul{s}\!\\
\!\ul{r}\!
\end{array}
\right]}
\ =\sum_{\ul{r}\in \F_q^{k_1}}
\ket{\,G_{C_0}^T
\left[
\begin{array}{c}
\!\ul{s}\!\\
\!\ul{r}\!
\end{array}
\right]}
\label{eq:css_encoding}
\end{equation}
where we dropped the normalization constant for convenience.

\begin{lemma}[QSS and QECC]\label{lm:qecc-qss-reln}
A pure state encoding of a quantum secret is a QSS scheme if and only if the encoded space corrects erasure errors on unauthorized sets and it corrects erasure errors on the complements of authorized sets.
\end{lemma}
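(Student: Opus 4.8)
The plan is to separate the two defining requirements of a QSS scheme and to match each one with one of the two erasure-correction conditions. Recall that a QSS scheme must satisfy both \emph{Recovery} (every $A\in\Gamma$ can reconstruct the secret) and \emph{Secrecy} (every $B\in\mathcal{A}$ has no information about the secret). I would establish the two equivalences
\begin{equation*}
\text{Recovery}\ \Longleftrightarrow\ \text{the code corrects erasures on }\ol{A}\text{ for all }A\in\Gamma,
\end{equation*}
\begin{equation*}
\text{Secrecy}\ \Longleftrightarrow\ \text{the code corrects erasures on }B\text{ for all }B\in\mathcal{A},
\end{equation*}
and then conjoin them: being a QSS scheme means precisely Recovery $\wedge$ Secrecy, while the right-hand side of the lemma is precisely the conjunction of the two erasure conditions.

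The first equivalence is essentially a restatement of definitions. Correcting the erasure of a set $E$ of positions means reconstructing the encoded state from the surviving positions $\ol{E}$; taking $E=\ol{A}$ this is exactly the ability to reconstruct the encoded state from the parties in $A$. Since the encoding is an isometry, recovering the encoded state and recovering the secret are the same thing, so Recovery for $A$ holds if and only if erasures on $\ol{A}$ are correctable. Letting $A$ range over $\Gamma$ gives the first line.

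The second equivalence is the substantive part. Writing $P$ for the projector onto the code space, the erasure (Knill--Laflamme) conditions say that erasures on $B$ are correctable if and only if $P\,O\,P=\lambda(O)\,P$ for every operator $O$ supported on $B$, with $\lambda(O)$ a scalar. Evaluated on logical basis states this reads $\bra{i_L}O\ket{j_L}=\lambda(O)\,\delta_{ij}$, which says both that the reduced state on $B$ is independent of the logical state and that no coherence between logical states leaks into $B$---that is, $B$ carries no information about the secret, which is exactly Secrecy for $B$. Equivalently, I could purify by maximally entangling the secret with a reference $R$, encode to obtain a global pure state on $R$, $B$ and $\ol{B}$, and invoke the decoupling duality for tripartite pure states: $R$ is decoupled from $B$ (no information on $B$) if and only if the reference, hence the secret, is recoverable from $\ol{B}$ (erasures on $B$ correctable). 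Letting $B$ range over $\mathcal{A}$ gives the second line.

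The hard part is this second equivalence, and it is exactly where the pure state hypothesis enters. For an isometric encoding the code space is the image of the encoding isometry, so the Knill--Laflamme conditions apply verbatim and ``no information on $B$'' and ``erasures on $B$ correctable'' become two faces of the same condition; equivalently, the purification is a genuine pure state, which is what makes the decoupling duality valid. For a mixed state encoding the secret would also be correlated with an inaccessible environment, the state on $R$, $B$, $\ol{B}$ would no longer be pure, and the two notions could come apart, so the biconditional need not hold. Conjoining the two established equivalences then yields the lemma.
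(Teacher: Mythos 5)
Your proposal is correct, but note that the paper does not actually prove this lemma: it cites \cite[Theorem 1]{gottesman00} and only remarks that the argument carries over to non-perfect schemes. What you have written is essentially a self-contained reconstruction of the argument underlying that citation. Your decomposition into Recovery $\Leftrightarrow$ erasure correction on $\ol{A}$ for each $A\in\Gamma$ (the definitional half) and Secrecy $\Leftrightarrow$ erasure correction on $B$ for each $B\in\mathcal{A}$ (the substantive half, via the Knill--Laflamme condition $P\,O\,P=\lambda(O)\,P$ for operators supported on $B$, equivalently the tripartite decoupling duality) is exactly the standard route. Two features of your write-up line up well with what the paper asserts without proof: treating the conditions set-by-set leaves intermediate sets unconstrained, so the argument covers non-perfect access structures, which is precisely the extension of \cite{gottesman00} the paper claims; and your diagnosis of where purity enters (the encoding is an isometry, so the code space is a genuine subspace to which Knill--Laflamme applies, and the purified state on reference, $B$ and $\ol{B}$ is actually pure, so decoupling holds) explains why the lemma is restricted to pure state encodings, consistent with the asymmetry between Lemma~\ref{lm:qss_forb} and Lemma~\ref{lm:pure_state_qss_forb}.
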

Lemma~\ref{lm:qecc-qss-reln} is due to \cite[Theorem 1]{gottesman00} which discusses about the connection between quantum codes and QSS schemes.
Though \cite{gottesman00} discusses only about perfect QSS schemes, this lemma holds for non-perfect QSS schemes as well.
Using this lemma, we can obtain a QSS scheme from a CSS code. 
\begin{theorem}[QSS schemes from a CSS code]
\label{th:css_code_to_qss}
For any $0\leq z<t\leq n$ satisfying
\begin{subequations}
\begin{eqnarray}
t&\geq&n-\min\{\wt(C_0\setminus C_1),\wt(C_1^\perp\setminus C_0^\perp)\}+1
\\z&\leq&\min\{\wt(C_0\setminus C_1),\wt(C_1^\perp\setminus C_0^\perp)\}-1
\end{eqnarray}
\end{subequations}
the encoding in Eq.~\eqref{eq:css_encoding} gives a $((t,n;z))_q$ QSS scheme with
\begin{subequations}
\begin{gather}
m=k_0-k_1,
\\w_j=1\ \text{ for all }j\in[n],
\\\text{CC}_n(t)=t.
\end{gather}
\end{subequations}
\end{theorem}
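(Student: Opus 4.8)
The plan is to invoke Lemma~\ref{lm:qecc-qss-reln}, which reduces the two defining QSS properties to erasure-correction statements about the underlying quantum code, and then to translate the distance of $\mathrm{CSS}(C_0,C_1)$ into the required thresholds. Write $\delta=\min\{\wt(C_0\setminus C_1),\wt(C_1^\perp\setminus C_0^\perp)\}$, so that the two hypotheses read $t\geq n-\delta+1$ and $z\leq\delta-1$. The starting observation is that the encoding in Eq.~\eqref{eq:css_encoding} is a pure state encoding: for each fixed secret $\ul{s}$, its image is a single superposition over the $n$ physical qudits, hence a pure state. This legitimizes the use of Lemma~\ref{lm:qecc-qss-reln}.

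I would next recall the standard fact that an $[[n,k_0-k_1,\delta]]_q$ quantum code corrects every erasure pattern supported on at most $\delta-1$ coordinates. The secrecy property then follows immediately: any set $B$ with $|B|\leq z$ obeys $|B|\leq\delta-1$ by hypothesis, so the erasure on $B$ is correctable, and Lemma~\ref{lm:qecc-qss-reln} marks $B$ as unauthorized. For the recovery property, given any $A$ with $|A|\geq t$, the hypothesis gives $|\ol{A}|=n-|A|\leq n-t\leq\delta-1$, so the erasure on $\ol{A}$ is correctable, and Lemma~\ref{lm:qecc-qss-reln} marks $A$ as authorized. Together these verify the defining conditions of a $((t,n;z))_q$ QSS scheme.

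It then remains to read off the three parameters from the construction. Since the secret ranges over $\F_q^{k_0-k_1}$, we obtain $m=k_0-k_1$. The codeword in Eq.~\eqref{eq:css_encoding} occupies $n$ physical qudits, one of which is distributed to each party, so $w_j=1$ for all $j\in[n]$. Finally, when the combiner accesses any $t$ parties, each forwards its single qudit, which gives $\text{CC}_n(t)=t$.

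The step requiring the most care is the bookkeeping of which erasure set matches which QSS property: erasure on the set $B$ itself secures secrecy for $B$, whereas it is erasure on the complement $\ol{A}$ that secures recovery for $A$. I would keep these two uses of the distance bound distinct and confirm that both conditions of Lemma~\ref{lm:qecc-qss-reln} are met rather than conflated; this is also where the pure state property (consistent with Lemma~\ref{lm:pure_state_qss_forb}) ensures that the one-sided distance bound suffices for both directions.
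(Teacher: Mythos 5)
Your proposal is correct and follows the same core route as the paper's proof: both hinge on the fact that an $[[n,k_0-k_1,\delta]]_q$ code with $\delta=\min\{\wt(C_0\setminus C_1),\wt(C_1^\perp\setminus C_0^\perp)\}$ corrects erasures on any $\delta-1$ coordinates, applied to the complement $\ol{A}$ of a $t$-set to get recovery, with the parameters $m$, $w_j$, $\text{CC}_n(t)$ read off identically. The one place you diverge is the secrecy step: you invoke the unauthorized-set half of Lemma~\ref{lm:qecc-qss-reln} directly (erasure on $B$ itself is correctable since $|B|\leq z\leq\delta-1$), whereas the paper derives secrecy from Lemma~\ref{lm:qss_forb} -- every set of size at most $\delta-1$ is the complement of an authorized set of size at least $n-\delta+1$, hence unauthorized by no-cloning. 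Both are valid one-line appeals to lemmas already stated in the paper; your version treats the two QSS conditions symmetrically through a single lemma and keeps the erasure-set bookkeeping explicit, while the paper's version makes visible that secrecy here is a consequence of recovery plus no-cloning, which is exactly why the bound on $z$ mirrors the bound on $t$.
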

\begin{proof}
Let each of the $n$ parties be given one encoded qudit.
An $[[n,k,\delta]]_q$ quantum code can correct erasures up to $\delta-1$ qudits.
Therefore, any set of $n-(\delta-1)$ or more parties is an authorized set.
By Lemma \ref{lm:qss_forb}, any set of size up to $\delta-1$ is unauthorized as well.
Thus the CSS code gives a $((t,n;z))_q$ QSS scheme.
Clearly, the share size $w_j=1$.
The communication cost is $t$ times the share size \textit{i.e.} CC$_n(t)=t$.
Since the code encodes $k_0-k_1$ qudits, the QSS scheme shares $m=k_0-k_1$ qudits.
\end{proof}

Theorem \ref{th:css_code_to_qss} discusses only about sets with $t$ or more parties and sets with $z$ or less parties.
It does not characterize whether a set of parties of size between $z+1$ to $t-1$ is authorized, unauthorized or intermediate.
Matsumoto \cite{matsumoto14} studies the QSS scheme from CSS code in more detail.
The following theorem from Matsumoto completely characterizes the access structure and the adversary structure.

\begin{theorem}[Authorized sets of QSS from CSS code] 
\label{th:css_auth_set}
The $CSS(C_0,C_1)$ code gives a QSS scheme where $J\subseteq[n]$ is an authorized set if and only if both the conditions below hold.
\begingroup
\allowdisplaybreaks
\begin{subequations}
\begin{eqnarray}
\rank G_{C_0}^{(J)}-\rank G_{C_1}^{(J)}&=&\dim C_0-\dim C_1
\label{eq:cond_recover_s}
\\\rank G_{C_0}^{(\ol{J})}-\rank G_{C_1}^{(\ol{J})}&=&0
\label{eq:cond_disentangle_s}
\end{eqnarray}
\end{subequations}
\endgroup
\end{theorem}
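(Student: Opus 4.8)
The plan is to reduce the statement to a pair of classical conditions about correctable erasures, one for each of the two linear codes $C_0$ and $C_1^\perp$, and then translate those conditions into the rank equalities \eqref{eq:cond_recover_s} and \eqref{eq:cond_disentangle_s}. The encoded states in Eq.~\eqref{eq:css_encoding} are the coset states $\ket{G_{C_0/C_1}^T\ul{s}+C_1}$, so the encoding is a pure-state encoding of an $[[n,k_0-k_1]]_q$ CSS code. By Lemma~\ref{lm:qecc-qss-reln}, $J$ is authorized precisely when the secret can be recovered from $J$, which for a stabilizer (hence CSS) code is equivalent to the erasure of the complement $\ol{J}$ being correctable. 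A standard fact about CSS codes is that an erasure on a set $E$ is correctable if and only if no nontrivial logical operator is supported on $E$, and that this condition factors into its $X$-type and $Z$-type parts: every codeword of $C_0$ supported on $E$ must already lie in $C_1$, and every codeword of $C_1^\perp$ supported on $E$ must already lie in $C_0^\perp$. First I would record this reduction with $E=\ol{J}$, so that the proof splits into matching the $X$-type condition to \eqref{eq:cond_recover_s} and the $Z$-type condition to \eqref{eq:cond_disentangle_s}.

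The algebraic core is a single counting lemma: for nested linear codes $D_1\subsetneq D_0$ and any $S\subseteq[n]$,
\[
\rank G_{D_0}^{(S)}-\rank G_{D_1}^{(S)}=\dim D_0-\dim D_1
\]
holds if and only if every codeword of $D_0$ supported on $\ol{S}$ lies in $D_1$. I would prove this by applying rank--nullity to the coordinate projection $\pi_S$ restricted to each $D_i$: since $\rank G_{D_i}^{(S)}=\dim\pi_S(D_i)=\dim D_i-\dim\{c\in D_i:\supp(c)\subseteq\ol{S}\}$, the displayed difference equals $(\dim D_0-\dim D_1)$ minus the gap between the dimensions of the two shortened spaces $\{c\in D_i:\supp(c)\subseteq\ol{S}\}$; as these are nested, the gap vanishes exactly when every $D_0$-codeword supported on $\ol{S}$ is already in $D_1$. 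Applying this lemma to $(D_0,D_1,S)=(C_0,C_1,J)$ immediately identifies \eqref{eq:cond_recover_s} with the $X$-type condition that no codeword of $C_0\setminus C_1$ is supported on $\ol{J}$.

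It remains to match \eqref{eq:cond_disentangle_s} with the $Z$-type condition, and this is the step I expect to be the main obstacle, since \eqref{eq:cond_disentangle_s} is phrased through $C_0,C_1$ restricted to $\ol{J}$ whereas the $Z$-type condition naturally lives in the duals $C_1^\perp,C_0^\perp$ restricted to $J$. Here I would invoke the puncturing--shortening duality $\dim\{c\in D:\supp(c)\subseteq S\}=|S|-\dim\pi_S(D^\perp)$, which yields the identity $\rank G_{D}^{(\ol{J})}=\dim D-|J|+\rank G_{D^\perp}^{(J)}$ for any linear code $D$. Subtracting this identity for $D=C_0$ and $D=C_1$ turns the left-hand side of \eqref{eq:cond_disentangle_s} into $(k_0-k_1)-\big(\rank G_{C_1^\perp}^{(J)}-\rank G_{C_0^\perp}^{(J)}\big)$, so \eqref{eq:cond_disentangle_s} is equivalent to $\rank G_{C_1^\perp}^{(J)}-\rank G_{C_0^\perp}^{(J)}=\dim C_1^\perp-\dim C_0^\perp$. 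Applying the counting lemma once more, now to $(D_0,D_1,S)=(C_1^\perp,C_0^\perp,J)$, this is exactly the $Z$-type condition that no codeword of $C_1^\perp\setminus C_0^\perp$ is supported on $\ol{J}$. Combining the two matched conditions, $J$ is authorized if and only if both \eqref{eq:cond_recover_s} and \eqref{eq:cond_disentangle_s} hold. The one point I would take care to justify cleanly is why full quantum recovery requires both the $X$-type and $Z$-type conditions rather than either one alone: recovering the complete secret from $J$ means correcting the erasure of $\ol{J}$ in both conjugate bases, which is precisely why neither \eqref{eq:cond_recover_s} nor \eqref{eq:cond_disentangle_s} can be dropped.
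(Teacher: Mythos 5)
The paper never proves Theorem~\ref{th:css_auth_set}: it imports the statement from Matsumoto \cite{matsumoto14} with no in-paper argument, so your proposal is being compared against a citation rather than a proof. On its own merits your argument is correct, and its classical half is exactly the machinery the paper itself develops in Appendix~\ref{ap:proof_for_threshold_lemmas}. Your ``counting lemma'' is the combination of Lemma~\ref{lm:prop_codes}(i) and (iii), your puncturing--shortening duality is Lemma~\ref{lm:prop_codes}(ii), and the per-set equivalences you derive are precisely the intermediate steps in the paper's proofs of Lemmas~\ref{lm:bound_recover_s} and~\ref{lm:bound_disentangle_s} (which are the ``for all sets of size $\rho$'' versions of your statements). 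The dualization step checks out: $\rank G_{C_0}^{(\ol{J})}-\rank G_{C_1}^{(\ol{J})}=(k_0-k_1)-\bigl(\rank G_{C_1^\perp}^{(J)}-\rank G_{C_0^\perp}^{(J)}\bigr)$, so \eqref{eq:cond_disentangle_s} is equivalent to ``no codeword of $C_1^\perp\setminus C_0^\perp$ is supported on $\ol{J}$,'' while \eqref{eq:cond_recover_s} is equivalent to ``no codeword of $C_0\setminus C_1$ is supported on $\ol{J}$,'' exactly as you claim.

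The one load-bearing step you assert rather than prove is the erasure criterion for CSS codes: erasures on $E$ are correctable if and only if no nontrivial logical operator is supported on $E$, and this factors into the $X$-type and $Z$-type conditions above. This is genuinely standard, but it is also the entire quantum content of the theorem beyond Lemma~\ref{lm:qecc-qss-reln}, so a complete writeup must either cite it (the Knill--Laflamme conditions specialized to stabilizer codes; see \cite{gottesman00} or \cite{ketkar06} for the qudit case) or prove it. The proof is short: for CSS$(C_0,C_1)$ the stabilizer consists of $X$-type operators indexed by $C_1$ and $Z$-type operators indexed by $C_0^\perp$; the normalizer elements are, up to phase, products of an $X$-type operator indexed by $\ul{a}\in C_0$ and a $Z$-type operator indexed by $\ul{b}\in C_1^\perp$; such an element is logical iff $\ul{a}\notin C_1$ or $\ul{b}\notin C_0^\perp$, and its support is $\supp(\ul{a})\cup\supp(\ul{b})$, so taking $\ul{b}=\ul{0}$ (resp.\ $\ul{a}=\ul{0}$) shows the two factored conditions are together equivalent to ``no logical operator on $E$.'' Note this is just the coordinate-set refinement of the distance formula $\delta=\min\{\wt(C_0\setminus C_1),\wt(C_1^\perp\setminus C_0^\perp)\}$ that the paper already quotes in its definition of the CSS code. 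With that step supplied, your proof is complete, and it is arguably more illuminating than deferring to \cite{matsumoto14}: it explains why exactly two rank conditions appear --- one excludes logical $X$-type operators on $\ol{J}$, the other logical $Z$-type operators --- which also answers your own closing question about why neither condition can be dropped.
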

The theorem above gives the necessary and sufficient set of conditions for a given set $J$ to be authorized.
Note that the encoding for the CSS code is a pure state encoding.
Therefore, by Lemma \ref{lm:pure_state_qss_forb}, the conditions \eqref{eq:cond_recover_s} and \eqref{eq:cond_disentangle_s} give the set of necessary and sufficient conditions for $\ol{J}$ to be an unauthorized set as well.
Now we can characterize $\Gamma$ and $\mathcal{A}$ as follows.
\begin{eqnarray}
\Gamma=&\{X\subseteq[n]:\text{Eq.~\eqref{eq:cond_recover_s} and \eqref{eq:cond_disentangle_s} hold true for }J=&X\}
\nonumber
\\\mathcal{A}=&\{Y\subseteq[n]:\text{Eq.~\eqref{eq:cond_recover_s} and \eqref{eq:cond_disentangle_s} hold true for }J=&\ol{Y}\}
\nonumber\\
\end{eqnarray}

\section{CE-QSS from extended CSS codes}
\label{s:ce_qss_from_ecss}
In this section, we give a classical communication efficient secret sharing (CE-SS) scheme based on the staircase structure and illustrate the idea behind the proposed framework for constructing CE-QSS schemes.
Then we describe the extended CSS codes and study its properties.
Finally we give the core result of this paper which is a framework to construct CE-QSS schemes by concatenating extended CSS code with CSS code.

Consider the following CE-SS scheme over $n=3$ parties with $t=2$ and $d=3$.
A secret symbol $s\in\F_5$ is encoded and each party is given two symbols from $\F_5$.

\vspace{0.2cm}
\begin{eqnarray*}
\begin{array}{c|c|c}
\hline
& \text{Layer 1}& \text{Layer 2} \\\hline
\text{Share 1}&s+r_1+r_2 & r_2+r_3\\
\text{Share 2}&s+2r_1+4r_2 & r_2+2r_3 \\
\text{Share 3}&s+3r_1+4r_2 & r_2+3r_3 \\\hline
\end{array}
\end{eqnarray*}
\vspace{-0.2cm}

The symbols $r_1$, $r_2$ and $r_3$ are chosen randomly from a uniform probability distribution over $\F_5$.
The set of first symbols with the three parties is called the first layer and the set of second symbols is called the second layer.

While accessing any two parties, each accessed party sends both its layers to the combiner thereby giving communication cost of 4 symbols.
However, while accessing all three parties, the combiner downloads symbols only from the first layer thereby giving a reduced communication cost of 3 symbols.
\begin{figure}[H]
\begin{center}
\includegraphics[width=1.1\textwidth,trim=2cm 22.2cm 0 2cm]{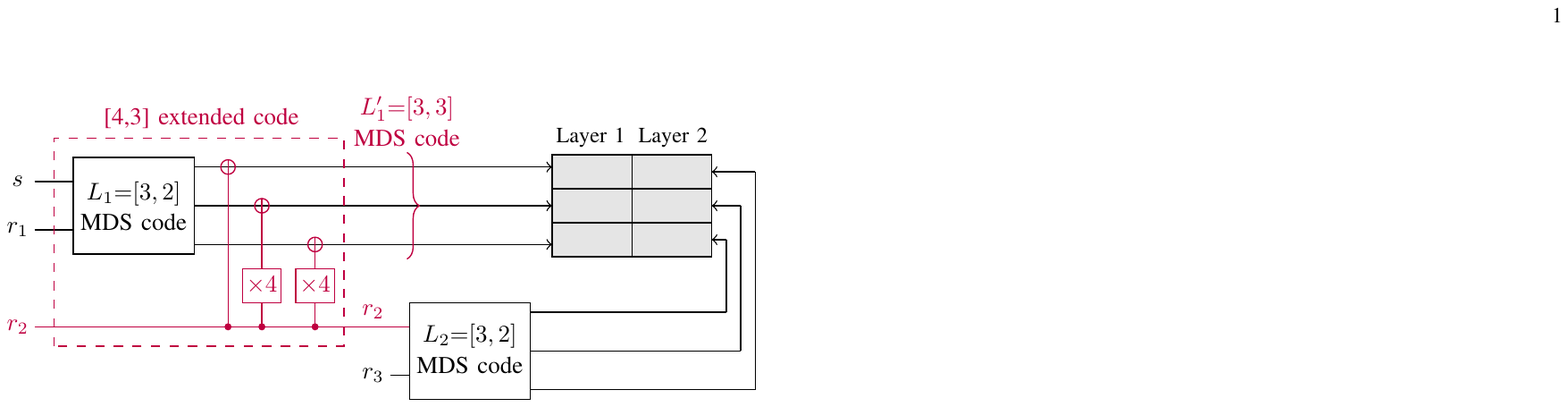}
\captionsetup{justification=justified}
\caption{Encoding circuit for the CE-SS scheme. The extension of the [3,2] code to [4,3] extended code is shown in purple.}
\label{fig:ce_ss_enc_ckt}
\end{center}
\end{figure}

The encoding in this scheme can be visualized as given in Fig. \ref{fig:ce_ss_enc_ckt}.
Whenever the combiner has access to any two parties, with the two symbols from the second layer, the code $L_2$ is first decoded to obtain $r_2$.
Now, the extension in the accessed two symbols in the first layer are inverted using $r_2$.
This gives the combiner two code symbols of the inner code $L_1$ which are then decoded to obtain the secret $s$.
The steps below explain the  secret recovery when the first two parties are accessed.
The cells in blue highlight the symbols whose values are modified during the recovery at that step.
\newcommand\columnsepminus{0.7cm}
\begin{center}
\begin{tabular}{c|c|c||c|c|}
\multicolumn{3}{c}{Layer 1}&\multicolumn{2}{c}{Layer 2}
\\\hhline{~----}
&\cellcolor{white}$s+r_1+r_2$
&\cellcolor{white}$s+2r_1+4r_2$
&\cellcolor{blue!10!white}$r_2+r_3$
&\cellcolor{blue!10!white}$r_2+2r_3$
\\\cline{2-5}
\multicolumn{5}{c}{}
\\[-\columnsepminus]
\multicolumn{5}{c}{}
\\\hhline{~----}
$\rightarrow$
&\cellcolor{blue!10!white}$s+r_1+r_2$
&\cellcolor{blue!10!white}$s+2r_1+4r_2$
&\cellcolor{blue!10!white}$r_2$
&\cellcolor{white}$r_3$
\\\cline{2-5}
\multicolumn{5}{c}{}
\\[-\columnsepminus]
\multicolumn{5}{c}{}
\\\hhline{~----}
$\rightarrow$
&\cellcolor{blue!10!white}$s+r_1$
&\cellcolor{blue!10!white}$s+2r_1$
& \cellcolor{white}$r_2$
& \cellcolor{white}$r_3$
\\\cline{2-5}
\multicolumn{5}{c}{}
\\[-\columnsepminus]
\multicolumn{5}{c}{}
\\\hhline{~----}
$\rightarrow$
&\cellcolor{white}$s$
&\cellcolor{white}$r_1$
&\cellcolor{white}$r_2$
&\cellcolor{white}$r_3$
\\\cline{2-5}
\multicolumn{5}{c}{}
\\
\end{tabular}
\end{center}
\vspace{-\baselineskip}\vspace{0.4cm}%
When the combiner accesses all three parties, the punctured code $L_1'$ from the extended code is decoded to obtain $s$.
\begin{center}
\begin{tabular}{c|c|c|c|}
\multicolumn{4}{c}{Layer 1}
\\\hhline{~---}
&\cellcolor{blue!10!white}$s+r_1+r_2$ & \cellcolor{blue!10!white}$s+2r_1+4r_2$ & \cellcolor{blue!10!white}$s+3r_1+4r_2$
\\\cline{2-4}
\multicolumn{4}{c}{}
\\[-\columnsepminus]
\multicolumn{4}{c}{}
\\\hhline{~---}
$\rightarrow$
&\cellcolor{white}$s$
&\cellcolor{white}$r_1$
&\cellcolor{white}$r_2$
\\\cline{2-4}
\multicolumn{4}{c}{}
\\
\end{tabular}
\end{center}
\vspace{-\baselineskip}

\subsection{Extended CSS codes}
\label{ss:ecss}
We will now describe the extended CSS code motivated from the classical extended code in the illustration above.
To construct a CE-QSS scheme encoding $m$ qudits, we take an $[[n,m]]_q$ CSS code and suitably extend it to an $[[n+e,m]]_q$ CSS code.
The encoding for this extended CSS code is done by suitably entangling some $e$ ancilla qudits with the $n$ encoded qudits from the $[[n,m]]_q$ CSS code.

Consider linear codes $F_0$ and $F_1$ of length $n$ over $\F_q$ with dimensions $f_0$ and $f_1$ respectively.
Also consider the matrix $G_E\in\F_q^{e\times n}$ whose row space is denoted by $E$.
Let $F_0$, $F_1$ and $G_E$ satisfy the following conditions.
\begin{enumerate}[label=N\arabic*.,ref=N\arabic*]
\item $F_1\subsetneq F_0$\label{cond:N1}
\item $F_0\cap E=\{\underline{0}\}$\label{cond:N2}
\end{enumerate}
\begin{figure}[ht]
\begin{center}
\includegraphics[width=1.3\textwidth,trim=-0.5cm 23.5cm 0cm 1.8cm]{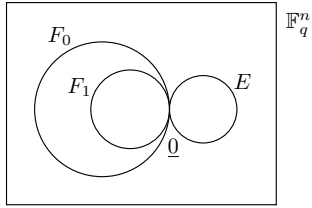}
\captionsetup{justification=justified}
\caption{Conditions on classical linear codes used for constructing the extended CSS code ECSS$(F_0,F_1,G_E)$}
\label{fig:ecss_codes}
\end{center}
\end{figure}

We can describe \ref{cond:N1}--\ref{cond:N2} in terms of generator matrices as
\begin{equation}
\left[
\begin{array}{c}
\vphantom{G_{F_0/F_1}}
\multirow{2}{*}{$G_{F_0}$}
\\\vphantom{G_{F_1}}
\\\hline G_{E}
\end{array}
\right]
=\left[
\begin{array}{c}
G_{F_0/F_1}
\\ G_{F_1}
\\\hline G_E
\end{array}
\right].
\end{equation}

With this choice of $F_0$, $F_1$ and $G_E$, we define the extended CSS code as follows.
\begin{definition}[Extended CSS code]
\label{de:ecss}
The extended CSS code ECSS$(F_0,F_1,G_E)$ is defined as the CSS$(C_0,C_1)$ code where
\begin{equation*}
G_{C_0}=\left[
\begin{array}{cc}
G_{F_0} & 0
\\G_E & I_e
\end{array}
\right],
\ G_{C_1}=\left[
\begin{array}{cc}
G_{F_1} & 0
\\G_E & I_e
\end{array}
\right].
\end{equation*}
\end{definition}
Note that the matrix $G_E$ here need not be of full rank for the above definition of extended CSS code.
From Theorem \ref{th:css_code_to_qss}, it is clear that this code gives a QSS scheme with $n+e$ qudits.

The encoding in ECSS$(F_0,F_1,G_E)$ can be written as
\begingroup
\allowdisplaybreaks
\begin{eqnarray}
\!\!\!\!
\ket{\ul{s}}
&\ \mapsto&
\sum_{\substack{
\ul{r}_1\in\,\F_q^{f_1}
\\\ul{r}_2\in\,\F_q^e}}
\ \ket{\,
\left[
\begin{array}{ccc}
G_{F_0/F_1}^T & G_{F_1}^T & G_E^T
\\0 & 0 & I_e
\end{array}
\right]
\left[
\begin{array}{c}
\!\ul{s}\!\\
\!\ul{r}_1\!\\
\!\ul{r}_2\!
\end{array}
\right]}\!\ \ \ \ 
\\&=&
\sum_{\substack{
\ul{r}_1\in\,\F_q^{f_1}
\\\ul{r}_2\in\,\F_q^e}}
\ \ket{\,[\,
G_{F_0/F_1}^T\ G_{F_1}^T\ G_E^T
\,]
\left[
\begin{array}{c}
\!\ul{s}\!\\
\!\ul{r}_1\!\\
\!\ul{r}_2\!
\end{array}
\right]}
\,\ket{\,\underline{r}_2\,}
\ \ 
\label{eq:ecss_encoding}
\end{eqnarray}
\endgroup
for any $\ul{s}\in\F_q^{f_0-f_1}$.
We refer to the first $n$ qudits in the encoded state as the \textit{original qudits} and the last $e$ qudits as the \textit{extension qudits}.
The block diagram for the encoding is given in Fig. \ref{fig:ecss_enc_ckt} where $\mathcal{U}(M)$ is the unitary operator such that, for any $\underline{x}\in\F_q^n$, $\underline{y}\in\F_q^{e}$ and $M\in\F_q^{n\times e}$,
\begin{equation}
\label{eq:ext_op}
\ket{\underline{x}}\ket{\underline{y}}\ \xrightarrow{\mathcal{U}(M)}\ \ket{\underline{x}+M\underline{y}}\ket{\underline{y}}.
\end{equation}

In the example CE-SS scheme discussed earlier, access to the extra parity ($r_2$) from the extended code depends on the number of parties accessed.
This extra parity encoded in the layer 2 is accessible to the combiner when any two parties are accessed.
However, it is inaccessible when only layer 1 of all the three parties are accessed.
Similarly, in the extended CSS code, we will consider the case when some of the extension qudits are known to be accessible or inaccessible to the combiner.

\begin{figure}[t]
\begin{center}
\includegraphics[width=1.15\textwidth,trim=1.5cm 23.5cm 0cm 1.8cm]{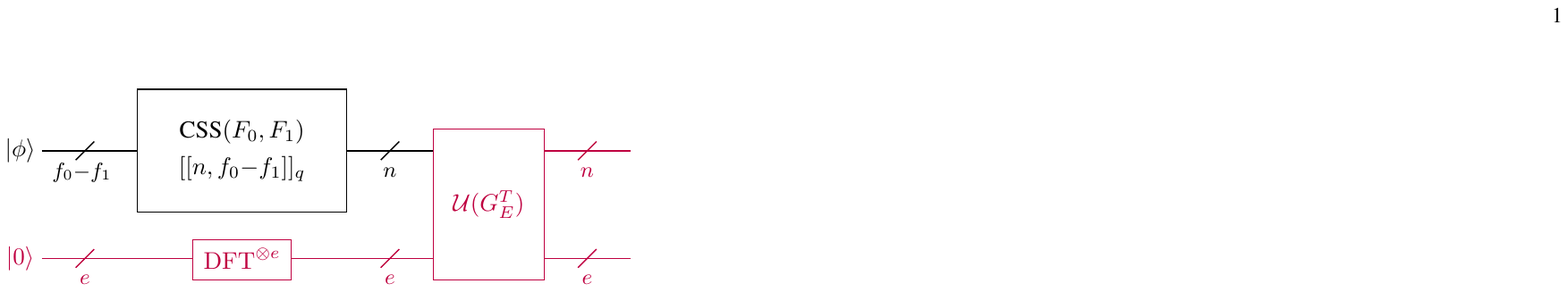}
\captionsetup{justification=justified}
\caption{Encoding circuit for ECSS$(F_0,F_1,G_E)$.
The extension of CSS$(F_0,F_1)$ to ECSS$(F_0,F_1,G_E)$ is shown in purple.}
\label{fig:ecss_enc_ckt}
\end{center}
\end{figure}

We will assume that the combiner has prior access to some $u$ of the extension qudits and does not have access to the remaining $v=e-u$ extension qudits.
In other words, in a QSS scheme from the extended CSS code, among the parties in $\{n+1,n+2,\hdots,n+e\}$, the combiner has prior access to some $u$ parties and no access to the remaining $v=e-u$ parties.
Under this assumption, we ask how many among the first $n$ parties is needed to recover the secret.

Without loss of generality, we take the already accessible parties as $\{n+1,n+2,\hdots n+u\}$ and the inaccessible parties as $\{n+u+1,n+u+2,\hdots n+e\}$.
Now, we can partition the rows in $G_E$ corresponding to these two sets as
\begin{equation}
G_E=\left[
\begin{array}{c}
G_U \\ G_V
\end{array}
\right]
\end{equation}
where $G_U$ is of size $u\times n$ and $G_V$ is of size $v\times n$.
The row spaces of $G_U$ and $G_V$ are indicated as $U$ and $V$ respectively.
The encoding in Eq.~\eqref{eq:ecss_encoding} can be rewritten as
\begin{align}
\label{eq:ecss_as_css_encoding}
&\ket{\ul{s}}
\ \mapsto\sum_{\substack{
\underline{r}_1\in\F_q^{f_1}
\\\underline{r}'_2\in\F_q^u
\\\underline{r}''_2\in\F_q^v
}}
\ket{\,
\left[
\begin{array}{cccc}
G_{F_0/F_1}^T & G_{F_1}^T & G_U^T & G_V^T
\\0 & 0 & I_u & 0
\\0 & 0 & 0 & I_v
\end{array}
\right]
\renewcommand{\arraystretch}{1.1}
\left[
\begin{array}{c}
\!\ul{s}\!\\
\!\ul{r}_1\!\\
\!\ul{r}'_2\!\\
\!\ul{r}''_2\!
\end{array}
\right]}.&
\nonumber
\\[-\baselineskip]&&
\end{align}

Now we analyze the access structure in the QSS scheme from extended CSS code when the extension qudits are either accessible or inaccessible to the combiner.
The set of authorized sets which include the accessed $u$ parties and exclude the inaccessible $v$ parties is given by
\begin{eqnarray}
\Omega_u=\left\{J\subseteq[n]\,{\bigg |}\ \parbox{5.2cm}{$J\cup\{n+1,n+2,\hdots n+u\}$ is an authorized set}\right\}\!.\ \ \ \ 
\end{eqnarray}
Specifically, we are interested in finding the threshold number $\tau_u$ of parties required out of the first $n$ parties to recover the secret with prior access to parties $n+1$, $n+2$, $\hdots$ $n+u$.
\begin{eqnarray}
\tau_u=\min\{\,\tau\,|\,\mbox{For all } J\subseteq[n]\text{ s.t. }|J|=\tau,\,\,J\in\Omega_u\}
\end{eqnarray}

We will use the conditions for a set to be authorized given in Theorem \ref{th:css_auth_set} to characterize $\tau_u$.
We also need the following two lemmas.
(Refer Appendix \ref{ap:proof_for_threshold_lemmas} for their proofs.)
These lemmas were used in Mart{\'\i}nez-Pe{\~n}as \cite{martinez18} to find the threshold number of parties needed for secret recovery in classical CE-SS schemes based on nested linear codes.

\begin{lemma}
\label{lm:bound_recover_s}
Let $1\leq\rho\leq n$.
For linear codes $C_0$ and $C_1$ such that $C_1\subsetneq C_0\subseteq\F_q^n$,
\begin{eqnarray}
\rank G_{C_0}^{(P)}-\rank G_{C_1}^{(P)}=\dim C_0-\dim C_1
\end{eqnarray}
for all $P\subseteq[n]$ such that $|P|=\rho$ if and only if
\begin{equation}
\rho\geq n-\wt(C_0\setminus C_1)+1.
\end{equation}
\end{lemma}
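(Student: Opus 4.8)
The plan is to convert the rank condition into a statement about subcodes supported outside $P$, and then connect that statement to the minimum distance $\wt(C_0\setminus C_1)$ by a support-size argument. First I would note that for any linear code $C\subseteq\F_q^n$ the map sending $\ul{c}$ to its coordinates indexed by $P$ has image the punctured code (whose dimension is $\rank G_C^{(P)}$) and kernel exactly the codewords vanishing on $P$, i.e. those with $\supp(\ul{c})\subseteq\ol{P}$. Rank--nullity then gives $\rank G_C^{(P)}=\dim C-\dim\{\ul{c}\in C:\supp(\ul{c})\subseteq\ol{P}\}$. Applying this to both $C_0$ and $C_1$ and subtracting, the target identity $\rank G_{C_0}^{(P)}-\rank G_{C_1}^{(P)}=\dim C_0-\dim C_1$ becomes equivalent to equality of the two supported-subcode dimensions. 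Since $C_1\subseteq C_0$ forces one subcode inside the other, this equality is in turn equivalent to the clean condition that no codeword of $C_0\setminus C_1$ has support contained in $\ol{P}$.

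With this reformulation in hand, both directions of the biconditional reduce to elementary weight counting. For the ``if'' direction, assuming $\rho\geq n-\wt(C_0\setminus C_1)+1$ and $|P|=\rho$, any $\ul{c}\in C_0\setminus C_1$ with $\supp(\ul{c})\subseteq\ol{P}$ would have $\wt(\ul{c})\leq|\ol{P}|=n-\rho\leq\wt(C_0\setminus C_1)-1$, contradicting the definition of the nested-pair minimum distance; hence no such codeword exists and the rank identity holds for every $P$ of size $\rho$.

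For the converse I would argue by contrapositive. If $\rho\leq n-\wt(C_0\setminus C_1)$, I would pick a minimum-weight representative $\ul{c}^\ast\in C_0\setminus C_1$ with $\wt(\ul{c}^\ast)=\wt(C_0\setminus C_1)$. The complement of its support has size $n-\wt(C_0\setminus C_1)\geq\rho$, so one can choose $P\subseteq\ol{\supp(\ul{c}^\ast)}$ with $|P|=\rho$; then $\supp(\ul{c}^\ast)\subseteq\ol{P}$, so $\ul{c}^\ast$ exhibits a codeword of $C_0\setminus C_1$ supported on $\ol{P}$ and the rank identity fails for that particular $P$. Thus the identity holding for \emph{all} $P$ of size $\rho$ forces $\rho\geq n-\wt(C_0\setminus C_1)+1$.

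The only genuinely delicate step is the first one: setting up the rank--nullity bookkeeping so that the difference of the two ranks collapses \emph{exactly} onto the difference of the supported-subcode dimensions, with the nesting $C_1\subseteq C_0$ ensuring the leftover term vanishes precisely when $C_0\setminus C_1$ contributes nothing supported on $\ol{P}$. Once the problem is rephrased as ``does $\ol{P}$ contain the support of some coset representative of $C_0/C_1$,'' the remainder is routine support counting. This is essentially the nested-code analogue of the classical fact that a code has minimum distance $\delta$ iff any $\delta-1$ erasures leave the codeword recoverable.
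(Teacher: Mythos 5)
Your proof is correct and follows essentially the same route as the paper: your rank--nullity bookkeeping is exactly the paper's use of the punctured/shortened-code identities $\dim C^P+\dim C_{\ol{P}}=\dim C$ and $\dim C^P=\rank G_C^{(P)}$ (cited there from Forney), and your reduction to ``no codeword of $C_0\setminus C_1$ is supported on $\ol{P}$'' plus the two weight-counting directions is precisely the content of the paper's Lemmas \ref{lm:proof_inter_1} and \ref{lm:proof_inter_2}. The only difference is presentational---you inline these facts rather than isolating them as lemmas---so nothing further is needed.
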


\begin{lemma}
\label{lm:bound_disentangle_s}
Let $1\leq\rho\leq n$.
For linear codes $C_0$ and $C_1$ such that $C_1\subsetneq C_0\subseteq\F_q^n$,
\begin{eqnarray}
\rank G_{C_0}^{(\ol{P})}-\rank G_{C_1}^{(\ol{P})}=0
\end{eqnarray}
for all $P\subseteq[n]$ such that $|P|=\rho$ if and only if
\begin{equation}
\rho\geq n-\wt(C_1^\perp\setminus C_0^\perp)+1.
\end{equation}
\end{lemma}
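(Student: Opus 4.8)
The plan is to derive Lemma~\ref{lm:bound_disentangle_s} from Lemma~\ref{lm:bound_recover_s} by passing to the dual codes, so that the ``secrecy''-type rank condition for the pair $(C_0,C_1)$ turns into the ``recovery''-type rank condition for the dual pair. Since $C_1\subsetneq C_0$, dualizing reverses the inclusion and gives $C_0^\perp\subsetneq C_1^\perp$; I would set $D_0:=C_1^\perp$ and $D_1:=C_0^\perp$, a nested pair with $D_1\subsetneq D_0$ and $\wt(D_0\setminus D_1)=\wt(C_1^\perp\setminus C_0^\perp)$. The target right-hand side $\rho\geq n-\wt(C_1^\perp\setminus C_0^\perp)+1$ is then exactly the threshold that Lemma~\ref{lm:bound_recover_s} produces for $(D_0,D_1)$, so the whole argument reduces to matching the two rank conditions.

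The technical core is a single rank identity. First I would record the standard puncturing--shortening duality: for any code $C$ and any $S\subseteq[n]$, the code obtained by puncturing $C$ to $S$ is dual inside $\F_q^S$ to the code obtained by shortening $C^\perp$ to $S$ (i.e.\ restricting $C^\perp$ to its codewords supported on $S$), whence
\begin{equation*}
\rank G_C^{(S)} = |S| - \dim C^\perp + \rank G_{C^\perp}^{(\ol{S})}.
\end{equation*}
Applying this to $C_0$ (with $C_0^\perp=D_1$) and to $C_1$ (with $C_1^\perp=D_0$) and subtracting, the $|S|$ terms cancel and I obtain, for every $Q\subseteq[n]$,
\begin{equation*}
\rank G_{C_0}^{(Q)} - \rank G_{C_1}^{(Q)} = (\dim D_0 - \dim D_1) - \big(\rank G_{D_0}^{(\ol{Q})} - \rank G_{D_1}^{(\ol{Q})}\big).
\end{equation*}

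Next I would specialize $Q=\ol{P}$, so that $\ol{Q}=P$ and $|P|=\rho \Leftrightarrow |Q|=n-\rho$. The identity then shows that $\rank G_{C_0}^{(\ol{P})} - \rank G_{C_1}^{(\ol{P})}=0$ holds if and only if $\rank G_{D_0}^{(P)} - \rank G_{D_1}^{(P)}=\dim D_0-\dim D_1$. Quantifying over all $P$ with $|P|=\rho$, the left statement is precisely the hypothesis of the present lemma, while the right statement is precisely the hypothesis of Lemma~\ref{lm:bound_recover_s} for the nested pair $(D_0,D_1)$. Invoking that lemma yields the equivalence with $\rho\geq n-\wt(D_0\setminus D_1)+1$, and substituting $\wt(D_0\setminus D_1)=\wt(C_1^\perp\setminus C_0^\perp)$ finishes the proof.

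I expect the main obstacle to be the second step: getting the puncturing--shortening duality and the complement bookkeeping exactly right, in particular that puncturing to $S$ pairs with shortening of the dual at the \emph{same} $S$, and that the complement flips from $Q$ to $\ol{Q}$ when moving between a code and its dual. A cleaner but essentially equivalent alternative, bypassing Lemma~\ref{lm:bound_recover_s} entirely, is to argue directly from minimum weights: the condition $\rank G_{C_0}^{(\ol{P})}=\rank G_{C_1}^{(\ol{P})}$ for all $|P|=\rho$ fails exactly when some $y\in C_1^\perp\setminus C_0^\perp$ can be supported on a set of size $n-\rho$, i.e.\ when $\wt(C_1^\perp\setminus C_0^\perp)\leq n-\rho$; negating this yields the stated bound. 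I would nonetheless keep the dual reduction as the primary proof, since it directly reuses the already-stated Lemma~\ref{lm:bound_recover_s}.
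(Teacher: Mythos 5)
Your proof is correct, but it takes a genuinely different route from the paper's. The paper proves Lemma~\ref{lm:bound_disentangle_s} in parallel with Lemma~\ref{lm:bound_recover_s}, deriving both from a common primitive (Lemma~\ref{lm:proof_inter_2} in the appendix: equality $(C_0)_W=(C_1)_W$ of shortened codes for all $W$ of a fixed size is equivalent to a minimum-weight bound); it applies that lemma to the dual pair $(C_1^\perp,C_0^\perp)$ and then converts equality of shortened dual codes into the stated rank condition using the puncturing/shortening relations of Lemma~\ref{lm:prop_codes}. You instead reduce Lemma~\ref{lm:bound_disentangle_s} to Lemma~\ref{lm:bound_recover_s} itself, via the rank identity
\begin{equation*}
\rank G_{C_0}^{(Q)}-\rank G_{C_1}^{(Q)}=\left(\dim C_1^\perp-\dim C_0^\perp\right)-\left(\rank G_{C_1^\perp}^{(\ol{Q})}-\rank G_{C_0^\perp}^{(\ol{Q})}\right),
\end{equation*}
which with $Q=\ol{P}$ shows that the secrecy-type rank condition for $(C_0,C_1)$ on $\ol{P}$ holds exactly when the recovery-type condition for the dual pair $(C_1^\perp,C_0^\perp)$ holds on $P$; quantifying over all $P$ of size $\rho$ and invoking Lemma~\ref{lm:bound_recover_s} for the dual pair finishes the argument. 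Your bookkeeping checks out: $C_0^\perp\subsetneq C_1^\perp$ does hold, the identity follows from the three facts collected in Lemma~\ref{lm:prop_codes} (note it needs both $\dim C^A+\dim C_{\ol{A}}=\dim C$ and $\dim C^A+\dim(C^\perp)_A=|A|$, not the puncture/shorten duality alone, a small expository gap but not a mathematical one), and Lemma~\ref{lm:bound_recover_s} is proved independently of Lemma~\ref{lm:bound_disentangle_s}, so there is no circularity. What your route buys is that it makes the duality between the two lemmas explicit and reuses Lemma~\ref{lm:bound_recover_s} as a black box; what the paper's route buys is symmetry, since both lemmas fall out of the same support-based characterization, which has to be proved anyway for Lemma~\ref{lm:bound_recover_s}. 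Incidentally, the ``cleaner alternative'' you sketch at the end---nonexistence of a codeword in $C_1^\perp\setminus C_0^\perp$ supported on a set of size $n-\rho$---is essentially the paper's actual proof.
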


The following theorem studies the size of the sets in $\Omega_u$ using these two lemmas.
\begin{theorem}[Threshold with prior access to extension qudits]
\label{th:ecss_code_to_qss}
For every set $J\subseteq[n]$ such that $|J|=\tau$, the set $J\cup\{n+1,n+2,\hdots n+u\}$ is an authorized set in the QSS scheme from the ECSS$(F_0,F_1,G_E)$ code if and only if $\tau\geq\tau_u$ where
\begin{eqnarray}
&\hspace{-1cm}\tau_u= n-\min\{\wt((F_0+V)\setminus(F_1+V)),\hspace{1cm}&
\nonumber
\\\label{eq:tau_constr}
&\hspace{1.9cm}\wt((F_1+U)^\perp\setminus(F_0+U)^\perp)\}+1.&
\end{eqnarray}
\end{theorem}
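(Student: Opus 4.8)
\emph{Proof proposal.} The plan is to read off the authorization of $K:=J\cup\{n+1,n+2,\ldots,n+u\}$ from Theorem~\ref{th:css_auth_set}, and then collapse the extension columns using their identity blocks so that the rank conditions become conditions on the codes $F_0+V,F_1+V$ (for recovery) and $F_0+U,F_1+U$ (for secrecy). First I would write $G_{C_0}$ and $G_{C_1}$ in the partitioned form
\[
G_{C_0}=\left[\begin{array}{ccc} G_{F_0} & 0 & 0 \\ G_U & I_u & 0 \\ G_V & 0 & I_v \end{array}\right],\qquad
G_{C_1}=\left[\begin{array}{ccc} G_{F_1} & 0 & 0 \\ G_U & I_u & 0 \\ G_V & 0 & I_v \end{array}\right],
\]
where the three column blocks are indexed by $[n]$, $\{n+1,\ldots,n+u\}$ and $\{n+u+1,\ldots,n+e\}$. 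Since $K$ contains $J\subseteq[n]$ together with all of the middle block and none of the last block, the restricted matrix $G_{C_0}^{(K)}$ carries the identity block $I_u$ in columns that are zero in every other row, while its complement $\ol{K}=\ol{J}\cup\{n+u+1,\ldots,n+e\}$ carries the block $I_v$.

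The key computational step is the elementary rank identity $\rank\left[\begin{smallmatrix} A & 0 \\ B & I \\ C & 0\end{smallmatrix}\right]=(\dim I)+\rank\left[\begin{smallmatrix} A \\ C\end{smallmatrix}\right]$, obtained by using the identity columns to clear $B$. Applying it to $G_{C_0}^{(K)}$ and $G_{C_1}^{(K)}$ gives $\rank G_{C_0}^{(K)}=u+\rank G_{F_0+V}^{(J)}$ and $\rank G_{C_1}^{(K)}=u+\rank G_{F_1+V}^{(J)}$, since $\left[\begin{smallmatrix} G_{F_i}\\ G_V\end{smallmatrix}\right]$ generates $F_i+V$; the same identity applied to $\ol{K}$ (which keeps $I_v$) yields $\rank G_{C_0}^{(\ol{K})}=v+\rank G_{F_0+U}^{(\ol{J})}$ and likewise for $C_1$. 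Hence the recovery condition~\eqref{eq:cond_recover_s} becomes $\rank G_{F_0+V}^{(J)}-\rank G_{F_1+V}^{(J)}=f_0-f_1$, and the secrecy condition~\eqref{eq:cond_disentangle_s} becomes $\rank G_{F_0+U}^{(\ol{J})}-\rank G_{F_1+U}^{(\ol{J})}=0$.

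To invoke Lemma~\ref{lm:bound_recover_s} I must rewrite the right-hand side $f_0-f_1$ as $\dim(F_0+V)-\dim(F_1+V)$, and this is where condition~\ref{cond:N2} does the real work: because $V\subseteq E$ and $F_0\cap E=\{\underline{0}\}$, we have $F_0\cap V=\{\underline{0}\}$ and therefore $F_1\cap V=\{\underline{0}\}$ too, so $\dim(F_i+V)=f_i+v$ and the difference is exactly $f_0-f_1$. The same argument gives $\dim(F_i+U)=f_i+u$, which in particular guarantees the proper inclusions $F_1+V\subsetneq F_0+V$ and $F_1+U\subsetneq F_0+U$ needed to apply the lemmas. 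Lemma~\ref{lm:bound_recover_s} on the pair $(F_0+V,F_1+V)$ then shows the recovery condition holds for every $J$ with $|J|=\tau$ exactly when $\tau\geq n-\wt((F_0+V)\setminus(F_1+V))+1$, and Lemma~\ref{lm:bound_disentangle_s} on $(F_0+U,F_1+U)$, with $P=J$ and $\ol{P}=\ol{J}$, shows the secrecy condition holds for every such $J$ exactly when $\tau\geq n-\wt((F_1+U)^\perp\setminus(F_0+U)^\perp)+1$.

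Finally, since $J\cup\{n+1,\ldots,n+u\}$ is authorized precisely when both conditions hold by Theorem~\ref{th:css_auth_set}, and a universally quantified conjunction splits as the conjunction of the two universal statements, the set is authorized for every $J$ of size $\tau$ if and only if $\tau$ meets both lower bounds, i.e. $\tau\geq\tau_u$ with $\tau_u$ their maximum, which is exactly~\eqref{eq:tau_constr}. I expect the main obstacle to be the bookkeeping in the identity-block rank reduction combined with the careful use of~\ref{cond:N2} to match $f_0-f_1$ with $\dim(F_0+V)-\dim(F_1+V)$; once those are in place the result is a direct application of the two threshold lemmas.
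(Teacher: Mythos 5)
Your proposal is correct and follows essentially the same route as the paper's proof: apply Theorem~\ref{th:css_auth_set} to the partitioned generator matrices, use the identity blocks to reduce the rank conditions to ones on the pairs $(F_0+V,F_1+V)$ and $(F_0+U,F_1+U)$, and then invoke Lemmas~\ref{lm:bound_recover_s} and~\ref{lm:bound_disentangle_s}. Your explicit use of condition~\ref{cond:N2} to equate $f_0-f_1$ with $\dim(F_0+V)-\dim(F_1+V)$ fills in a detail the paper leaves implicit; the only slight imprecision is that, since $G_E$ need not have full row rank, one should write $\dim(F_i+V)=f_i+\dim V$ rather than $f_i+v$, though the difference is still $f_0-f_1$ and the argument is unaffected.
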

\begin{proof}
Applying Theorem \ref{th:css_auth_set} to the encoding for the QSS scheme as given in Eq.~\eqref{eq:ecss_as_css_encoding}, the set $J\cup\{n+1,n+2,\hdots$ $n+u\}$ is authorized if and only if
\begingroup
\allowdisplaybreaks
\begin{subequations}
\begin{eqnarray}
\rank
\renewcommand{\arraystretch}{1.4}
\left[
\begin{array}{cc}
G_{F_0}^{(J)} & \mathbf{0}
\\G_U^{(J)} & I_u
\\G_V^{(J)} & \mathbf{0}
\end{array}
\right]
-\rank
\renewcommand{\arraystretch}{1.4}
\left[
\begin{array}{cc}
G_{F_1}^{(J)} & \mathbf{0}
\\G_U^{(J)} & I_u
\\G_V^{(J)} & \mathbf{0}
\end{array}
\right]
&=&f_0-f_1,
\nonumber
\\[-\baselineskip]
\\\rank
\renewcommand{\arraystretch}{1.5}
\left[
\begin{array}{cc}
G_{F_0}^{(\ol{J})} & \mathbf{0}
\\G_U^{(\ol{J})} & \mathbf{0}
\\G_V^{(\ol{J})} & I_v
\end{array}
\right]
-\rank
\renewcommand{\arraystretch}{1.5}
\left[
\begin{array}{cc}
G_{F_1}^{(\ol{J})} & \mathbf{0}
\\G_U^{(\ol{J})} & \mathbf{0}
\\G_V^{(\ol{J})} & I_v
\end{array}
\right]
&=&0
\end{eqnarray}
\end{subequations}
\endgroup
where $\ol{J}=[n]\setminus J$.
This set of conditions can be simplified to
\begingroup
\allowdisplaybreaks
\begin{subequations}
\begin{eqnarray}
\label{eq:cond_recover}
\rank
\renewcommand{\arraystretch}{1.4}
\left[
\begin{array}{c}
G_{F_0}^{(J)}
\\G_V^{(J)}
\end{array}
\right]
-\rank
\renewcommand{\arraystretch}{1.4}
\left[
\begin{array}{c}
G_{F_1}^{(J)}
\\G_V^{(J)}
\end{array}
\right]
&=&f_0-f_1,
\\\label{eq:cond_disentangle}
\rank
\renewcommand{\arraystretch}{1.5}
\left[
\begin{array}{c}
G_{F_0}^{(\ol{J})}
\\G_U^{(\ol{J})}
\end{array}
\right]
-\rank
\renewcommand{\arraystretch}{1.5}
\left[
\begin{array}{c}
G_{F_1}^{(\ol{J})}
\\G_U^{(\ol{J})}
\end{array}
\right]
&=&0.
\end{eqnarray}
\end{subequations}
\endgroup
The condition in Eq.~\eqref{eq:cond_recover} is same as Eq.~\eqref{eq:cond_recover_s} with $P=J$, $C_0=F_0+V$ and $C_1=F_1+V$.
By Lemma \ref{lm:bound_recover_s}, Eq.~\eqref{eq:cond_recover} holds true for all $J$ such that $|J|=\tau$ if and only if
\begin{equation}
\tau\geq n-\wt((F_0+V)\setminus(F_1+V))+1.
\end{equation}
Similarly, the condition in Eq.~\eqref{eq:cond_disentangle} is same as Eq.~\eqref{eq:cond_disentangle_s} with $P=J$, $C_0=F_0+U$ and $C_1=F_1+U$.
By Lemma \ref{lm:bound_disentangle_s}, Eq.~\eqref{eq:cond_disentangle} holds true for all $J$ such that $|J|=\tau$ if and only if
\begin{equation}
\tau\geq n-\wt((F_1+U)^\perp\setminus(F_0+U)^\perp)+1.
\end{equation}
Combining the two bounds above, we prove the theorem.
\end{proof}

So far we studied the extended CSS code under the assumption that some of the extension qudits are already accessible and others are inaccessible.
Now we evaluate the threshold $\tau_u$ in two specific cases, where the extension qudits are all accessible ($u=e$) or all inaccessible ($u=0$) to the combiner.

\begin{corollary}[Threshold with full access to extension qudits]
\label{co:tau_constr_ext_all}
When the combiner has prior access to all the extension qudits in ECSS$(F_0,F_1,G_E)$, the secret can be recovered from any $\tau$ qudits out of the $n$ original qudits if and only if $\tau\geq\tau_e$ where
\begin{flalign}
\tau_e=n-\min\{\wt(F_0\!\setminus\!F_1),\wt((F_1+E)^\perp\!\setminus\!(F_0+E)^\perp)\}+1.
\end{flalign}
\end{corollary}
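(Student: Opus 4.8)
The plan is to obtain this corollary as the direct specialization of Theorem \ref{th:ecss_code_to_qss} to the case $u=e$, in which the combiner has prior access to all $e$ extension qudits. First I would set $u=e$, so that $v=e-u=0$. Then the block $G_V$ is the empty $0\times n$ matrix, whose row space is the trivial code $V=\{\underline{0}\}$, while $G_U=G_E$ and hence $U=E$. These two identifications are the entire content of the ``full access'' hypothesis, and everything else follows by substitution.

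Next I would plug these into the expression for $\tau_u$ in Eq.~\eqref{eq:tau_constr}. Since $V=\{\underline{0}\}$, the sum $F_i+V$ collapses to $F_i$ for $i\in\{0,1\}$, so the first minimum distance becomes $\wt(F_0\setminus F_1)$; since $U=E$, the second becomes $\wt((F_1+E)^\perp\setminus(F_0+E)^\perp)$. Substituting both into $\tau_u=n-\min\{\,\cdot\,,\cdot\,\}+1$ yields exactly the claimed formula for $\tau_e$. The ``if and only if'' characterization transfers verbatim, because when $u=e$ the set $J\cup\{n+1,\dots,n+e\}$ appearing in Theorem \ref{th:ecss_code_to_qss} is precisely ``any $\tau$ original qudits together with all extension qudits,'' matching the statement of the corollary.

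The one point deserving a line of justification is that both nested code pairs remain proper, so that the two minimum distances are well defined when Lemmas \ref{lm:bound_recover_s} and \ref{lm:bound_disentangle_s} are invoked through the parent theorem. Here $F_1\subsetneq F_0$ is condition \ref{cond:N1} directly. For the dual pair, $F_1\subseteq F_0$ gives $F_1+E\subseteq F_0+E$, and using $F_0\cap E=\{\underline{0}\}$ from condition \ref{cond:N2} (which forces $F_1\cap E=\{\underline{0}\}$ as well) one computes $\dim(F_0+E)-\dim(F_1+E)=f_0-f_1>0$, so $(F_0+E)^\perp\subsetneq(F_1+E)^\perp$ is a proper containment. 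Beyond this bookkeeping there is no genuine obstacle: the corollary is a clean instance of the theorem, and I would not expect any step to require more than the substitution and this short properness check.
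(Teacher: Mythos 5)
Your proposal is correct and follows exactly the route the paper intends: Corollary \ref{co:tau_constr_ext_all} is the specialization of Theorem \ref{th:ecss_code_to_qss} to $u=e$, $v=0$, giving $U=E$, $V=\{\underline{0}\}$, and the formula for $\tau_e$ by direct substitution into Eq.~\eqref{eq:tau_constr}. Your additional check that both nested pairs remain proper (via \ref{cond:N1} and the dimension count from \ref{cond:N2}) is a harmless refinement the paper leaves implicit.
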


\begin{corollary}[Threshold with no access to extension qudits]
\label{co:tau_constr_ext_none}
When the combiner has access to none of the extension qudits in ECSS$(F_0,F_1,G_E)$, the secret can be recovered from any $\tau$ qudits out of the $n$ original qudits if and only if $\tau\geq\tau_0$ where
\begin{flalign}
\tau_0=n-\min\{\wt((F_0+E)\!\setminus\!(F_1+E)),\wt(F_1^\perp\!\setminus\!F_0^\perp)\}+1.
\end{flalign}
\end{corollary}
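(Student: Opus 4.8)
The plan is to obtain this statement as the $u=0$ specialization of Theorem~\ref{th:ecss_code_to_qss}. Having access to none of the $e$ extension qudits means every extension qudit is inaccessible, so $u=0$ and $v=e$. First I would note that under this choice the partition of $G_E$ into the accessible block $G_U$ (of size $u\times n$) and the inaccessible block $G_V$ (of size $v\times n$) degenerates: $G_U$ has no rows and $G_V=G_E$. Consequently the accessible row space is the trivial subspace, $U=\{\ul{0}\}$, while the inaccessible row space is the full space, $V=E$.

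Next I would substitute these identifications directly into the formula for $\tau_u$ in Eq.~\eqref{eq:tau_constr}. From $V=E$ the first argument of the minimum becomes $\wt((F_0+E)\setminus(F_1+E))$. From $U=\{\ul{0}\}$ we have $F_1+U=F_1$ and $F_0+U=F_0$, so the second argument becomes $\wt(F_1^\perp\setminus F_0^\perp)$. Plugging these into $\tau_u=n-\min\{\,\cdot\,,\,\cdot\,\}+1$ gives precisely the claimed value of $\tau_0$, and the equivalence of Theorem~\ref{th:ecss_code_to_qss}, namely that $J\cup\{n+1,\ldots,n+u\}$ is authorized if and only if $\tau\geq\tau_u$, specializes at $u=0$ (where $J$ itself is the accessed set of original qudits) to the stated claim that the secret is recoverable from any $\tau$ original qudits if and only if $\tau\geq\tau_0$.

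I do not expect a genuine obstacle, since the corollary is an immediate specialization rather than an independent argument. The one point that deserves a line of justification is the degenerate reading of the block structure at $u=0$: one must record the convention that a matrix with zero rows has row space $\{\ul{0}\}$, so that $F_1+U=F_1$ and $F_0+U=F_0$ hold exactly, and confirm that the standing hypotheses \ref{cond:N1}--\ref{cond:N2} ensuring that ECSS$(F_0,F_1,G_E)$ is well defined are untouched by the specialization. With those conventions in place the derivation is purely a substitution.
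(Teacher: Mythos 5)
Your proposal is correct and matches the paper's own treatment: the corollary is stated as an immediate specialization of Theorem~\ref{th:ecss_code_to_qss} at $u=0$, $v=e$, where $U=\{\ul{0}\}$ and $V=E$ turn the two weight terms in Eq.~\eqref{eq:tau_constr} into $\wt((F_0+E)\setminus(F_1+E))$ and $\wt(F_1^\perp\setminus F_0^\perp)$. Your extra remark about the zero-row convention for $G_U$ is a reasonable bookkeeping point but changes nothing of substance.
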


\subsection{Concatenating extended CSS codes for CE-QSS}
\label{ss:ce_qss_from_ecss}
In this subsection, we give the main result of this paper.
We construct the CE-QSS scheme by concatenating an extended CSS code with another CSS code.
In the proposed CE-QSS scheme, the secret is first encoded using an extended CSS code.
The original qudits are stored in layer 1 and the extension qudits encoded using another CSS code are stored in layer 2.
We first give the conditions on the linear codes used to construct the extended CSS code and the CSS code used in layer 2.

Consider an $[n,b_0]$ linear code $B_0$ over $\F_q$.
Let $B_1$, $B_2$, $A_1$, $A_2$ and $E$ be linear codes of dimensions $b_1$, $b_2$, $a_1$, $a_2$ and $e$ respectively satisfying the following conditions.
\begin{enumerate}[label=M\arabic*.,ref=M\arabic*]
\item$B_2\subsetneq B_1\subsetneq B_0$\label{cond:M1}
\item$A_2\subseteq A_1\subsetneq B_0$ such that $B_0=B_1+A_1$ and $B_1\cap A_1=\{\ul{0}\}$ with $\dim A_2>0$\label{cond:M2}
\item$E\subseteq B_1$ such that $B_1=B_2+E$ and $B_2\cap E=\{\ul{0}\}$
\label{cond:M3}
\end{enumerate}  

Clearly $e=b_1-b_2$ and $a_1=b_0-b_1$.
The conditions \ref{cond:M1}--\ref{cond:M3} mentioned above can be visualized as shown in Fig.~\ref{fig:ceqss_enc_codes}.

\begin{figure}[H]
\begin{center}
\includegraphics[width=1.15\textwidth,trim=0.5cm 22.7cm 0cm 1.8cm]{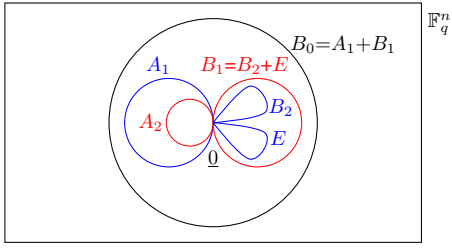}
\captionsetup{justification=justified}
\caption{Conditions on classical linear codes used for constructing the extended CSS code and the CSS code used in the CE-QSS scheme.
The classical codes used for the extended CSS code are marked in blue and those used for the CSS code are marked in red.}
\label{fig:ceqss_enc_codes}
\end{center}
\end{figure}

We can describe the conditions \ref{cond:M1}--\ref{cond:M3} in terms of generator matrices as
\begin{equation}
G_{B_0}=
\left[
\begin{array}{c}
\multirow{2}{*}{$G_{A_1}$}\\ \\ \hline \multirow{2}{*}{$G_{B_1}$} \\\ 
\end{array}
\right]
=
\left[
\begin{array}{c}
G_{A_1/A_2} \\ G_{A_2} \\ \hline G_{B_2} \\ G_{E}
\end{array}
\right].
\end{equation}

\textit{Encoding.}
The encoding for the proposed CE-QSS scheme is illustrated in Fig. \ref{fig:ceqss_enc}.
The $m=a_1 v_1$ qudits in the secret is partitioned into $v_1$ blocks of $a_1$ qudits each where each block is encoded by an ECSS$(A_1+B_2,B_2,G_E)$.
This encoding gives $v_1$ blocks each with $n+e$ qudits.
The $n$ original qudits from each of this block is stored in layer 1 of the $n$ parties.
The remaining $v_1 e$ extension qudits are rearranged into $v_2$ blocks of $a_2$ qudits each.
Then each of these blocks is encoded using a CSS$(A_2+B_1,B_1)$ code.
This encoding gives $v_2$ blocks each with $n$ encoded qudits which are stored layer 2 of the $n$ parties.

\begin{figure*}[ht!]
\begin{center}
\includegraphics[width=1.2\textwidth,trim=2cm 19.5cm 0cm 1.8cm]{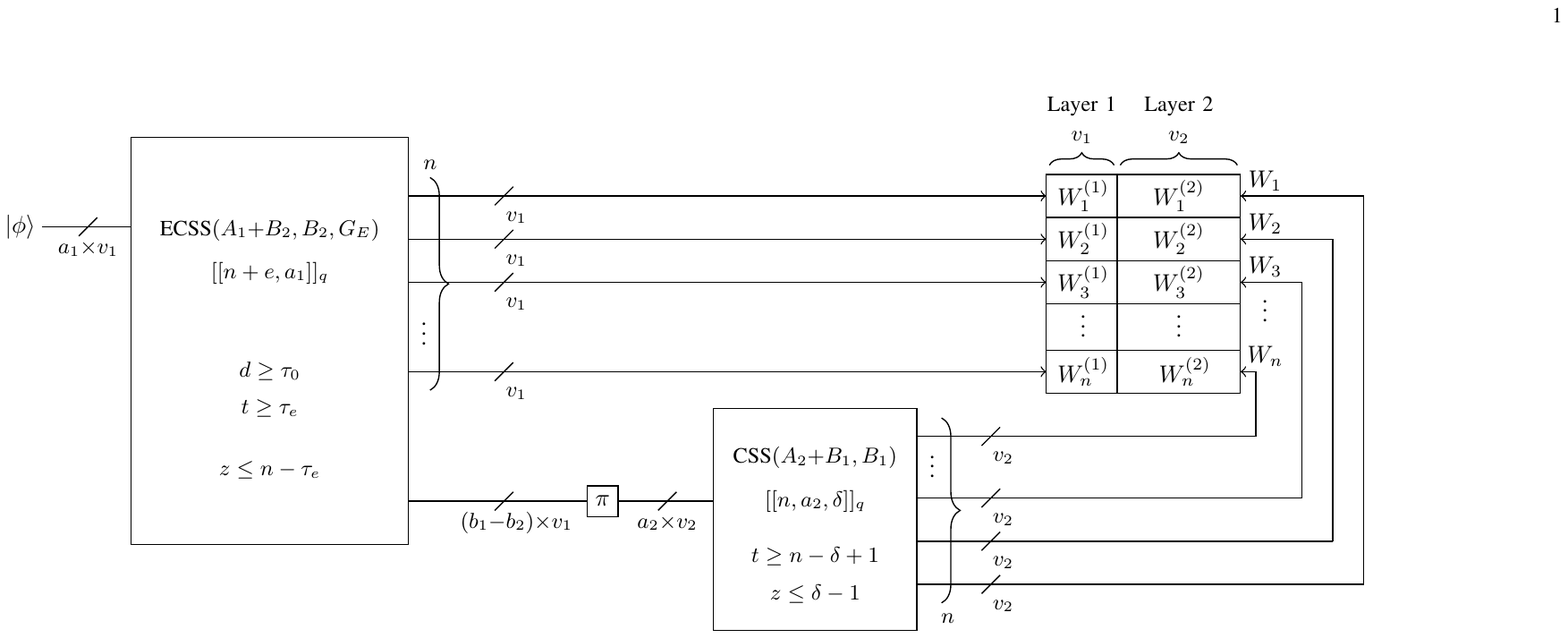}
\captionsetup{justification=justified}
\caption{Encoding for a $((t,n,d;z))_q$ CE-QSS scheme by concatenating extended CSS codes with CSS codes.}
\label{fig:ceqss_enc}
\end{center}
\end{figure*}

Extended CSS code provides the flexibility for the combiner to recover the secret from two different numbers of parties.
The secret can be recovered using the original qudits from layer 1 in one of the two following ways.
\begin{enumerate}[label=(\roman*)]
\item from any $d\geq\tau_0$ parties in layer 1
\item from any $t\geq\tau_e$ parties in layer 1 by also accessing the extension qudits stored in layer 2.
\end{enumerate}

The encoding of the extension qudits in layer 2 by a CSS code of distance more than $z$ is necessary to avoid eavesdropping by any $z$ parties.
If some $z$ parties were to get access to some information about the extension qudits from their layer 2, it is possible that this information can be used to recover some partial information about the secret from layer 1 of those $z$ parties.

We could have taken the original qudits of just a single extended CSS code for layer 1 and encoded its extension qudits with another CSS code for layer 2 to design the QSS scheme.
Instead we take $v_1$ instances of the extended CSS code and encode their extension qudits using $v_2$ instances of the CSS code.
This is because by varying $v_1$ and $v_2$, we get different (normalized) storage and communication costs in the CE-QSS scheme.
Then we can choose $v_1$ and $v_2$ to get the best possible storage and communication costs.

The encoding for the proposed CE-QSS scheme is defined using a message matrix with the staircase structure.
Using linear codes satisfying the conditions \ref{cond:M1}--\ref{cond:M3}, 
consider the encoding
\begin{eqnarray}
\label{eq:ce_qss_encoding}
\ket{S}\ \mapsto\sum_{\substack{
R_{1,1}\,\in\,\F_q^{b_2\times v_1}
\\R_{1,2}\,\in\,\F_q^{(b_1-b_2)\times v_1}
\\R_2\,\in\,\F_q^{b_1\times v_2}}}
\ \ket{\ G_{B_0}^T
\left[
\begin{array}{c:c}
\multirow{2}{*}{$S$} & \mathbf{0}
\\ & D_1
\\\hdashline R_{1,1} & \multirow{2}{*}{$R_2$}
\\R_{1,2}&
\end{array}
\right]\ }.
\end{eqnarray}
Here $S\in\F_q^{a_1\times v_1}$ indicates the basis state of the secret being encoded and $D_1\in\F_q^{a_2\times v_2}$ is the matrix formed by any rearrangement of the entries in $R_{1,2}$ where $a_2 v_2=(b_1-b_2)v_1$.
The message matrix $M$ used in the encoding in Eq.~\eqref{eq:ce_qss_encoding} and the sizes of its submatrices are given below.
\begin{eqnarray}
M_{\,b_0\times(v_1+v_2)}\ \ =
\begin{array}{ccc}
&
\renewcommand{\arraystretch}{0.5}
\begin{array}{cc}
\ v_1 & v_2\\
\ \overbrace{} & \overbrace{}
\end{array}
&
\\
\begin{array}{cc}
\multirow{2}{*}{$a_1$} & \hspace{-0.3cm}\multirow{2}{*}{\Big\{}
\\&
\\b_2 & \hspace{-0.3cm}\{
\\e & \hspace{-0.3cm}\{
\end{array}
\!\!\!\!\!\!\!\!
&
\left[
\begin{array}{c:c}
\multirow{2}{*}{$S$} & 0
\\\cdashline{2-2} & D_1
\\\hdashline R_{1,1} & \multirow{2}{*}{$R_2$}
\\\cdashline{1-1} R_{1,2}&
\end{array}
\right]
&
\!\!\!\!\!\!\!\!
\begin{array}{cc}
&
\\\} & \hspace{-0.3cm}a_2
\\\multirow{2}{*}{\Big\}} & \hspace{-0.3cm}\multirow{2}{*}{$b_1$}
\\&
\end{array}
\end{array}
\end{eqnarray}

This encoding gives an $[[n(v_1+v_2),a_1 v_1]]$ quantum code.
For $1\leq j\leq n$, the $j$th party is given $v_1+v_2$ qudits corresponding to the $j$th row of the $n\times (v_1+v_2)$ matrix $G_{B_0}^T M$.
We refer to the first $v_1$ qudits in each party as layer 1 and the next $v_2$ qudits as layer 2.
We choose the smallest possible positive integers $v_1$ and $v_2$ such that $a_2 v_2=(b_1-b_2)v_1$ given by
\begin{eqnarray}
v_1=\frac{a_2}{\gcd\{a_2,b_1-b_2\}},
\ \ \ \ v_2=\frac{b_1-b_2}{\gcd\{a_2,b_1-b_2\}}.
\ \ 
\end{eqnarray}

\textit{Secret recovery.}
During secret recovery from $d$ parties, the combiner is given access to only layer 1 of the $d$ accessed parties.
The combiner then recovers the secret from the $d$ original qudits of the extended CSS code.
However, when the combiner has access to only $t<d$ parties, it downloads both the layers from these $t$ parties.
The combiner now first recovers the extension qudits by decoding the CSS code.
Then the combiner recovers the secret from $t$ original qudits from layer 1 and the extension qudits recovered from the second layer.
The following theorem gives the conditions when the encoding in Eq.~\eqref{eq:ce_qss_encoding} gives a CE-QSS scheme.
\begin{theorem}[CE-QSS using extended CSS codes]
\label{th:nested_code_ce_qss}
For any $0\leq z<t< d\leq n$ satisfying
\begin{subequations}
\label{eq:params_for_ce_qss}
\begin{eqnarray}
t&\geq&n-\text{min}\{\wt(A_2+B_1\setminus B_1),\nonumber
\\&&\hspace{1.45cm}\wt(A_1+B_2\setminus B_2),\wt(B_1^\perp\setminus B_0^\perp)\}+1\ \ 
\hspace{0.8cm}
\label{seq:t_for_ce_qss}
\allowdisplaybreaks\\d&\geq&n-\min\{\wt(B_0\!\setminus\!B_1),\wt(B_2^\perp\!\setminus\!(A_1\!+\!B_2)^\perp)\}+1\ \ 
\label{seq:d_for_ce_qss}
\allowdisplaybreaks\\z&\leq&\text{min}\{\wt(A_2+B_1\setminus B_1),\nonumber
\\&&\hspace{1cm}\wt(A_1+B_2\setminus B_2),\wt(B_1^\perp\setminus B_0^\perp)\}-1
\label{seq:z_for_ce_qss}
\allowdisplaybreaks\\\frac{d}{t}&<&\frac{a_2+b_1-b_2}{a_2}\,,
\label{seq:d_by_t_for_ce_qss}
\end{eqnarray}
\end{subequations}
the encoding in Eq.~\eqref{eq:ce_qss_encoding} gives a $((t,n,d;z))_q$ CE-QSS scheme with the following parameters.
\begingroup
\allowdisplaybreaks
\begin{subequations}
\label{eq:costs_for_ce_qss}
\begin{gather}
m=\frac{a_1 a_2}{\gcd\{a_2,b_1-b_2\}}
\\w_j=\frac{a_2+b_1-b_2}{\gcd\{a_2,b_1-b_2\}}\text{\ \ \ \ for all }j\in[n]
\\\text{CC}_n(t)=\frac{t(a_2+b_1-b_2)}{\gcd\{a_2,b_1-b_2\}}
\\\text{CC}_n(d)=\frac{d a_2}{\gcd\{a_2,b_1-b_2\}}
\end{gather}
\end{subequations}
\endgroup
\end{theorem}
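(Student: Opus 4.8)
The plan is to decompose the concatenated encoding in Eq.~\eqref{eq:ce_qss_encoding} into its two layers and to recognise each layer as a code already analysed in this section. Partitioning the rows of $G_{B_0}$ into the blocks $G_{A_1}$, $G_{B_2}$, $G_E$, the first $v_1$ columns of $G_{B_0}^T M$ read $G_{A_1}^T S+G_{B_2}^T R_{1,1}+G_E^T R_{1,2}$, which is exactly the original-qudit part of the $\text{ECSS}(A_1+B_2,B_2,G_E)$ encoding in Eq.~\eqref{eq:ecss_encoding} (with $F_0=A_1+B_2$, $F_1=B_2$ and extension register $R_{1,2}$); the last $v_2$ columns read $G_{A_2}^T D_1+G_{B_1}^T R_2$, the $\text{CSS}(A_2+B_1,B_1)$ encoding of the rearranged extension $D_1$. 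I would first verify the standing hypotheses: \ref{cond:M2}--\ref{cond:M3} give $A_1\cap B_1=\{\ul{0}\}$ and $B_2\cap E=\{\ul{0}\}$, whence $B_2\subsetneq A_1+B_2$ and $(A_1+B_2)\cap E=\{\ul{0}\}$, i.e.\ \ref{cond:N1}--\ref{cond:N2} hold and the extended CSS code is well defined, while $\dim A_2>0$ gives $B_1\subsetneq A_2+B_1$. Finally, since $D_1$ is a fixed linear rearrangement of $R_{1,2}$, the map $\ket{S}\mapsto\ket{\psi_S}$ is linear in all message parameters and sends distinct parameter tuples to distinct basis states; the scheme is therefore a pure-state $\text{CSS}(\mathcal{C}_0,\mathcal{C}_1)$ code, so the characterisation of $\Gamma$ and $\mathcal{A}$ following Theorem~\ref{th:css_auth_set} applies.

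For recovery I would treat the two access patterns separately. Accessing only layer 1 of $d$ parties leaves the extension register --- re-encoded into layer 2 --- inaccessible, matching the scenario of Corollary~\ref{co:tau_constr_ext_none}; instantiating it with $F_0=A_1+B_2$, $F_1=B_2$ and using $F_0+E=B_0$ and $F_1+E=B_1$ gives the threshold $\tau_0=n-\min\{\wt(B_0\setminus B_1),\wt(B_2^\perp\setminus(A_1+B_2)^\perp)\}+1$, which is precisely the bound~\eqref{seq:d_for_ce_qss}. Accessing both layers of $t$ parties I would handle in two coherent steps. First, decode the layer-2 $\text{CSS}(A_2+B_1,B_1)$ code to recover $D_1$, equivalently the extension $R_{1,2}$; by Lemma~\ref{lm:bound_recover_s} applied to the pair $(A_2+B_1,B_1)$ this succeeds for every $t$-set iff $t\ge n-\wt((A_2+B_1)\setminus B_1)+1$. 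Second, now holding the extension register, recover the secret from layer 1 through the full-access Corollary~\ref{co:tau_constr_ext_all}, which with $F_1+E=B_1$ and $F_0+E=B_0$ gives $\tau_e=n-\min\{\wt((A_1+B_2)\setminus B_2),\wt(B_1^\perp\setminus B_0^\perp)\}+1$. The combined requirement is exactly Eq.~\eqref{seq:t_for_ce_qss}, so every set of size at least this threshold is authorised.

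Secrecy I would read off from the pure-state structure rather than from a direct dual-distance computation. Because the scheme is a pure-state CSS code, $Y\in\mathcal{A}\Leftrightarrow\ol{Y}\in\Gamma$; since the recovery analysis makes every set of size at least $t^{*}:=n-\min\{\wt((A_2+B_1)\setminus B_1),\wt((A_1+B_2)\setminus B_2),\wt(B_1^\perp\setminus B_0^\perp)\}+1$ authorised, every set of size at most $n-t^{*}=\min\{\cdots\}-1$ has an authorised complement and is hence unauthorised. This is exactly the bound~\eqref{seq:z_for_ce_qss}, and it records transparently why layer 2 must itself carry distance: the term $\wt((A_2+B_1)\setminus B_1)$ sits inside $t^{*}$ and therefore caps $z$. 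The remaining claims are bookkeeping: with $v_1,v_2$ the least integers satisfying $a_2v_2=(b_1-b_2)v_1$, each party stores $v_1+v_2$ qudits and the secret holds $a_1v_1$ qudits, while the two recovery patterns cost $t(v_1+v_2)$ and $dv_1$, giving the parameters in \eqref{eq:costs_for_ce_qss}; the defining inequality $\text{CC}_n(d)<\text{CC}_n(t)$ then reduces to $d/t<(v_1+v_2)/v_1=(a_2+b_1-b_2)/a_2$, which is Eq.~\eqref{seq:d_by_t_for_ce_qss}.

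The main obstacle I anticipate is making the two-layer recovery genuinely rigorous rather than heuristic. One must justify that restricting the combiner to layer 1 reduces its state to precisely the inaccessible-extension setting of Corollary~\ref{co:tau_constr_ext_none}, and that coherently decoding layer 2 restores the extension register exactly in the form required by the full-access Corollary~\ref{co:tau_constr_ext_all} --- including checking that the per-block rearrangement linking $R_{1,2}$ to $D_1$ preserves this correspondence block by block. Once that identification is secured, the two threshold formulas and the secrecy bound follow from the already-established corollaries together with the pure-state complement characterisation of Theorem~\ref{th:css_auth_set}.
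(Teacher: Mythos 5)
Your overall route is the same as the paper's: identify layer 1 as the original qudits of ECSS$(A_1+B_2,B_2,G_E)$ and layer 2 as CSS$(A_2+B_1,B_1)$ encodings of the extension qudits, apply Corollary~\ref{co:tau_constr_ext_none} for $d$-recovery, decode layer 2 and then apply Corollary~\ref{co:tau_constr_ext_all} for $t$-recovery, get secrecy from the complement lemma, and finish with the cost bookkeeping. However, there is one genuine gap, sitting exactly in the step you flagged as delicate. Your claim that decoding the layer-2 code ``succeeds for every $t$-set iff $t\geq n-\wt((A_2+B_1)\setminus B_1)+1$'' by Lemma~\ref{lm:bound_recover_s} is false as stated: that lemma characterizes only the rank condition~\eqref{eq:cond_recover_s}. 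The extension qudits $R_{1,2}$ are in superposition, so recovering them from layer 2 of an accessed set $J$ is a coherent quantum recovery; by Theorem~\ref{th:css_auth_set} this requires $J$ to be an authorized set of the CSS$(A_2+B_1,B_1)$ scheme, which additionally demands the disentangling condition~\eqref{eq:cond_disentangle_s} on $\ol{J}$, equivalently (via Lemma~\ref{lm:bound_disentangle_s}) $t\geq n-\wt(B_1^\perp\setminus(A_2+B_1)^\perp)+1$. Without it, the layer-2 qudits of the unaccessed parties remain entangled with $D_1$, and no operation on $J$ alone can extract $\ket{R_{1,2}}$ coherently.

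Your proof never verifies this extra condition, and note that the weight $\wt(B_1^\perp\setminus(A_2+B_1)^\perp)$ does not even appear in the hypothesis~\eqref{seq:t_for_ce_qss}, so it cannot simply be read off. The paper closes precisely this point: since $A_2+B_1\subseteq B_0$, we have $B_0^\perp\subseteq(A_2+B_1)^\perp$, hence $\wt(B_1^\perp\setminus(A_2+B_1)^\perp)\geq\wt(B_1^\perp\setminus B_0^\perp)$, so the term $\wt(B_1^\perp\setminus B_0^\perp)$ already present in~\eqref{seq:t_for_ce_qss} guarantees the disentangling condition for the layer-2 decode. With that one containment argument inserted into your $t$-recovery step, the rest of your proposal matches the paper's proof and goes through.
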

\begin{proof}
\textit{(i) Recovery from $ d$ shares:}
When the combiner accesses any $d$ parties, each of these parties send the $v_1$ qudits from layer 1 to the combiner.

Let $D\subseteq[n]$ with $|D|=d$ give the set of accessed parties.
Layer 1 contains all the original qudits of the extended CSS code ECSS$(A_1+B_2,B_2,G_E)$ encoding the secret.
By Corollary \ref{co:tau_constr_ext_none}, this implies that the secret state $\ket{S}$ can be recovered from layer 1 of parties in $D$ if
\begin{equation}
d\geq n-\min\{\wt(B_0\!\setminus\!B_1),\wt(B_2^\perp\!\setminus\!(A_1\!+\!B_2)^\perp)\}+1.
\end{equation}
From Eq.~\eqref{seq:d_for_ce_qss}, it is clear that $d$ satisfies this condition and hence secret recovery is possible.
This implies that CC$_n(D)=d v_1$.
Since this is true for any $D$ such that $|D|=d$, from Definition \ref{de:comm-cost-d-set}, we obtain CC$_n(d)=d v_1.$
\vspace{0.2cm}
\\\textit{(ii) Recovery from $t$ shares:}
When the combiner accesses any $t$ parties, each of these $t$ parties sends all its $v_1+v_2$ qudits.
The encoded state in Eq.~\eqref{eq:ce_qss_encoding} can also be written as
\begin{eqnarray}
\hspace{-1.4cm}
\sum_{R_{1,1},\,R_{1,2},\,R_2}
\ \ket{G_{A_1}^T S+G_{B_2}^T\smash{R_{1,1}}+G_E^T\,\smash{R_{1,2}}}_{[n],1}
\nonumber
\\[-0.35cm]
\ket{G_{A_2}^T D_1+G_{B_1}^T R_2}_{[n],2}
\hspace{-0.8cm}
\end{eqnarray}
where the subscript to each ket indicates the set of parties and the layer containing the corresponding qudits.

Let $J\subseteq[n]$ with $|J|=t$ be the set of accessed parties.
Rearranging the qudits in the encoded state we obtain
\begin{flalign}
&\sum_{R_{1,1},\,R_{1,2},\,R_2}
\ket{\vphantom{G_{A_1}^{(J)}}\smash{{G_{A_1}^{(J)}}^{T} S+{G_{B_2}^{(J)}}^{T}R_{1,1}+{G_E^{(J)}}^{T}R_{1,2}}}_{J,1}&
\nonumber
\\[-0.3cm]
&\hphantom{\sum_{R_{1,1},\,R_{1,2},\,R_2}
\ket{\vphantom{G_{A_1}^{(J)}}\smash{{G_{A_1}^{(J)}}}}}
\ket{\vphantom{G_{A_2}^{(J)}}\smash{{G_{A_2}^{(J)}}^{T}D_1+{G_{B_1}^{(J)}}^{T}R_2}}_{J,2}&
\nonumber
\\[0.05cm]
&\hphantom{\sum_{R_{1,1},\,R_{1,2},\,R_2}\ \ \ }
\ket{\vphantom{{G_{A_1}^{(J)}}}\smash{{G_{A_1}^{(\ol{J})}}^{T} S+{G_{B_2}^{(\ol{J})}}^{T}R_{1,1}+{G_E^{(\ol{J})}}^{T}R_{1,2}}}_{\ol{J},1}&
\nonumber
\\[-0.05cm]
&\hphantom{\sum_{R_{1,1},\,R_{1,2},\,R_2}
\ket{\vphantom{G_{A_1}^{(J)}}\smash{{G_{A_1}^{(J)}}}}\ \ \ }
\ket{\vphantom{{G_{A_1}^{(J)}}}\smash{{G_{A_2}^{(\ol{J})}}^{T}D_1+{G_{B_1}^{(\ol{J})}}^{T}R_2}}_{\ol{J},2}.&
\end{flalign}
Layer 2 gives the encoded state from the CSS$(A_2+B_1,B_1)$ code encoding the extension qudits from the extended CSS code.
From Theorem \ref{th:css_code_to_qss}, these extension qudits can be recovered from layer 2 of parties in $J$ if
\begin{equation}
t\geq n-\text{min}\{\wt(A_2+B_1\setminus B_1),\wt(B_1^\perp\setminus (A_2+B_1)^\perp)\}+1.
\label{eq:cond_on_ext_recovery}
\end{equation}
Since $A_2+B_1\subseteq B_0$, we know that $\wt(B_1^\perp\setminus(A_2+B_1)^\perp)\geq\wt(B_1^\perp\setminus B_0^\perp)$.
Hence, from Eq.~\eqref{seq:t_for_ce_qss}, $t$ satisfies the condition in Eq.~\eqref{eq:cond_on_ext_recovery} and therefore the extension qudits can be recovered.

Recovering the extension qudits and discarding the remaining qudits from layer 2, we obtain
\begin{eqnarray}
\!\!\!\!\!\!\!\!\!
\sum_{R_{1,1},\,R_{1,2},\,R_2}
\ \ket{\vphantom{G_{A_1}^{(J)}}\smash{{G_{A_1}^{(J)}}^{T} S+{G_{B_2}^{(J)}}^{T}R_{1,1}+{G_E^{(J)}}^{T}R_{1,2}}}_{J,1}\ket{D_1}_{\,J,2}
\hspace{-0.7cm}
\nonumber
\\[-0.3cm]
\ket{\vphantom{G_{A_1}^{(J)}}\smash{{G_{A_1}^{(\ol{J})}}^{T}S+{G_{B_2}^{(\ol{J})}}^{T}R_{1,1}+{G_E^{(\ol{J})}}^{T}R_{1,2}}}_{\ol{J},1}
\nonumber\\
\end{eqnarray}
Since the matrix $D_1$ contains exactly the entries of $R_{1,2}$, the qudits can be rearranged to obtain
\begin{eqnarray}
\!\!\!\!\!\!\!
\sum_{R_{1,1},\,R_{1,2},\,R_2}
\ket{\vphantom{G_{A_1}^{(J)}}\smash{{G_{A_1}^{(J)}}^{T}\!S+{G_{B_2}^{(J)}}^{T}\!R_{1,1}+{G_E^{(J)}}^{T}\!R_{1,2}}}_{J,1}\ket{R_{1,2}}_{\,J,2}
\hspace{-0.7cm}
\nonumber
\\[-0.3cm]
\ket{\vphantom{G_{A_1}^{(J)}}\smash{{G_{A_1}^{(\ol{J})}}^{T}S+{G_{B_2}^{(\ol{J})}}^{T}R_{1,1}+{G_E^{(\ol{J})}}^{T}R_{1,2}}}_{\ol{J},1}
\nonumber\\
\end{eqnarray}
Layer 1 of the $n$ parties contains exactly the $n$ original qudits of the ECSS$(A_1+B_2,B_2,G_E)$ codes encoding the secret.
The combiner already has access to the extension qudits (indicated by $\ket{R_{1,2}}_{\,J,2}$) recovered from layer 2.
By Corollary \ref{co:tau_constr_ext_all}, the secret can be recovered from layer 1 of parties in $J$ if
\begin{equation}
t\geq n-\text{min}\{\wt(A_1+B_2\setminus B_2),\wt(B_1^\perp\setminus B_0^\perp)\}+1.
\end{equation}
Clearly $t$ from Eq.~\eqref{seq:t_for_ce_qss} satisfies this condition and hence secret recovery is possible from layers 1 and 2 of any $t$ parties.
This implies that the communication cost for secret recovery is CC$_n(J)=t(v_1+v_2)$.
Since this is true for any $J$ such that $|J|=t$, from Definition \ref{de:comm-cost-d-set}, we obtain CC$_n(t)=t(v_1+v_2)$.
\vspace{0.2cm}
\\\textit{(iii) Secrecy:}
We proved above that the secret can be recovered from any set of
\begin{equation}
n-\text{min}\{\wt(A_2+B_1\setminus B_1),
\wt(A_1+B_2\setminus B_2),
\wt(B_1^\perp\setminus B_0^\perp)\}+1
\end{equation}
or more parties.
By Lemma \ref{lm:qss_forb}, this implies that any set of $z$ parties is an unauthorized set for
\begin{eqnarray}
&&z\leq\text{min}\{\wt(A_2+B_1\setminus B_1),
\,\wt(A_1+B_2\setminus B_2),
\nonumber
\\&&\phantom{z\leq\text{min}\{\wt(A_2+B_1\setminus B_1),
\,\wt}\wt(B_1^\perp\setminus B_0^\perp)\}-1.
\ \ \ \ \ \ 
\end{eqnarray}
\textit{(iv) Communication efficiency:}
Since $a_2v_2=(b_1-b_2)v_1$,
\begin{eqnarray}
\text{CC}_n(t)
=t\text{$(v_1$$+$$v_2)$}
=t\left(\frac{\text{$a_2$$+$$b_1$$-$$b_2$}}{a_2}\right)v_1
>d v_1
=\text{CC}_n(d).
\nonumber\\
\end{eqnarray}
The inequality in the above expression is due to Eq.~\eqref{seq:d_by_t_for_ce_qss}.
\end{proof}

The $n$ original qudits from the extended CSS (or the $n$ encoded qudits from the CSS code) can be stored in the first layer (or the second layer) of the $n$ parties in any order.
The CE-QSS scheme thus obtained will have the same parameters as the original CE-QSS scheme.

\section{CE-QSS construction from the extended CSS framework}
\label{s:ce_qss_constrn_from_ecss}
So far we have provided the framework for constructing CE-QSS schemes using extended CSS codes.
For getting a specific construction of CE-QSS schemes based on this framework, we need to choose a family of classical linear codes to be used in the extended CSS codes.
In this section we use a family of classical MDS codes called generalized Reed-Solomon (GRS) codes\cite[Section 5.3]{huffman03} to provide an optimal construction for CE-QSS schemes.
First we look at the bounds on storage and communication costs for CE-QSS schemes.
The proof for the lemma below follows from \cite[Theorem 4]{ogawa05}. 
\begin{lemma}[Bound on storage cost]
\label{lm:qss_storage_cost}
For a $((t,n;z))_q$ QSS scheme encoding a secret of $m$ qudits,
\begin{equation}
\sum_{j=1}^{n}w_j\geq\frac{\displaystyle nm}{\displaystyle t-z}.
\end{equation}
\end{lemma}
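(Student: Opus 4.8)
The plan is to establish the lower bound on storage cost by a counting/entropy argument exploiting the gap between authorized recovery (any $t$ parties) and secrecy (any $z$ parties). The key structural fact I would use is the contrast between a set of size $t$, which recovers all $m$ qudits of the secret, and a set of size $z$, which has zero information about the secret. The bound $\sum_j w_j \geq nm/(t-z)$ has the flavor of a Singleton-type bound, so I expect the cleanest route is to average the per-party share sizes and show that each group of $t-z$ parties must ``carry'' at least $m$ qudits worth of information about the secret.

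First I would set up the information-theoretic framework, treating the secret and the shares as quantum systems and working with quantum entropy (von Neumann entropy, or coherent information) since the secret is quantum. For an arbitrary authorized set $A$ with $|A|=t$ and an unauthorized set $B \subseteq A$ with $|B| = z$, the recovery condition forces the shares in $A$ to determine the secret, while the secrecy condition forces the shares in $B$ to be decoupled from the secret. The heart of the argument is then to bound the information that the ``incremental'' parties $A \setminus B$ (there are $t-z$ of them) must contain: going from $B$ (no information) to $A$ (full information) across only $t-z$ parties means those $t-z$ shares jointly hold at least $m$ qudits. This gives $\sum_{j \in A \setminus B} w_j \geq m$ for a suitable choice of $A, B$.

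The combinatorial step is to leverage the symmetry in the thresholds: since \emph{every} set of size $t$ recovers and \emph{every} set of size $z$ is forbidden, I can sum the local bound $\sum_{j\in A\setminus B} w_j \geq m$ over a balanced collection of such $(t-z)$-subsets so that each party $j \in [n]$ is counted an equal number of times. Averaging over all $\binom{n}{t-z}$ subsets of size $t-z$ (or using a cyclic/regular covering design), each index $j$ appears symmetrically, and the aggregate inequality rearranges to $\sum_{j=1}^n w_j \geq nm/(t-z)$. Because the bound in the excerpt is attributed to \cite[Theorem 4]{ogawa05}, I would most economically \emph{reduce} to that result: verify that the $((t,n;z))_q$ conditions of Definition~\ref{de:thresholds_for_qss} supply exactly the recovery-from-$t$ and secrecy-from-$z$ hypotheses that Ogawa et al.'s ramp-scheme bound requires, so that the stated inequality follows directly.

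\textbf{The main obstacle} I anticipate is making the quantum information-theoretic ``incremental information'' step rigorous: unlike the classical case, one cannot simply add Shannon entropies, and one must argue carefully that $t-z$ parties genuinely carry $m$ qudits worth of purifying/coherent information about the quantum secret. Handling the monotonicity of quantum mutual information and ensuring the decoupling condition for the $z$-sets translates into the correct entropy inequality (rather than an off-by-a-constant bound) is the delicate part. If invoking \cite[Theorem 4]{ogawa05} directly is permitted, this obstacle is bypassed and the proof reduces to checking that our definitions match their hypotheses; otherwise I would reconstruct their entropy argument explicitly in the quantum setting.
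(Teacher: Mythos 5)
Your proposal follows essentially the same route as the paper: the paper's main-text proof is exactly the reduction to \cite[Theorem 4]{ogawa05} that you suggest (after checking the $((t,n;z))_q$ definition supplies the needed recovery and secrecy hypotheses), and its appendix proof is precisely your detailed argument --- showing every $(t-z)$-set is ``significant'' (carries entropy at least $\mathsf{S}(\mathcal{S})$, proved via the Araki-Lieb inequality, subadditivity, and entropy characterizations of authorized/unauthorized sets), then averaging over all $\binom{n}{t-z}$ such subsets with the $\binom{n-1}{t-z-1}$ counting argument and taking the secret maximally mixed to convert entropies to qudit counts. The ``delicate step'' you flag is exactly the one the paper resolves with the quantum data processing inequality and the resulting entropy conditions for authorized and unauthorized sets.
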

\begin{proof}
The proof follows from \cite[Theorem 4]{ogawa05}.
\end{proof}

In the $(t,t$$-$$z,n)$ ramp QSS schemes defined in \cite[Defintion 1]{ogawa05}, any set of size between $z$$+$$1$ to $t$$-$$1$ should have neither full information nor zero information about the secret.
In contrast, the $((t,n;z))$ QSS schemes from Definition \ref{de:thresholds_for_qss} may have some sets of size between $z+1$ to $t-1$ as authorized or unauthorized.
However, \cite[Theorem 4]{ogawa05} and its proof for $(t,t$$-$$z,n)$ ramp QSS schemes hold true for $((t,n;z))_q$ QSS schemes as well.
For sake of completion, we give a detailed proof for $((t,n;z))_q$ QSS schemes in Appendix~\ref{ap:proof_for_storage_cost}.

For secret recovery from a given set of parties in a QSS scheme, we can design a truncated QSS scheme containing only these parties and storing only the parts of their shares sent to the combiner.
The communication cost for this set of parties in the original QSS scheme is then same as the storage cost of the truncated QSS scheme.
The following lemma uses this idea to derive a bound on the communication cost from an authorized set.
\begin{theorem}[Bound on communication cost]
\label{th:qss_comm_cost}
For a secret of size $m$ qudits, in a $((t,n;z))_q$ QSS scheme, the communication cost for secret recovery from an authorized set $A$ is bounded as
\begin{equation}
\text{CC}_n(A)\geq\frac{|A|m}{|A|-z}.\label{eq:cc-auth-set}
\end{equation}
\end{theorem}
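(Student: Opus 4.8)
The plan is to exploit the reduction suggested in the paragraph immediately preceding the statement: for a fixed authorized set $A$, construct a \emph{truncated} QSS scheme whose shares are exactly the qudits the parties in $A$ forward to the combiner, and then apply the storage bound of Lemma~\ref{lm:qss_storage_cost} to this truncated scheme. Concretely, for $j\in A$ let the $j$th truncated share consist of the $h_{j,A}$ qudits sent to the combiner during recovery from $A$. I would define a new scheme on the $|A|$ parties in $A$ where party $j$ holds only these $h_{j,A}$ qudits. The total storage cost of the truncated scheme is then $\sum_{j\in A}h_{j,A}=\text{CC}_n(A)$ by definition, so a lower bound on this storage cost immediately yields the claimed lower bound on $\text{CC}_n(A)$.

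Next I would identify the secret-sharing parameters of the truncated scheme so that Lemma~\ref{lm:qss_storage_cost} applies. The truncated scheme distributes the same $m$-qudit secret among $|A|$ parties. Since the original $A$ is authorized and the combiner recovers the secret from precisely the forwarded qudits, the full truncated set is authorized, i.e.\ its recovery threshold $t'$ satisfies $t'\le|A|$. For the secrecy parameter, I would argue that any set of $z$ parties inside $A$ remains unauthorized in the truncated scheme: the qudits held by such a set in the truncated scheme are a subset of the qudits held by those parties in the original scheme, and a subset of an unauthorized party-set's information carries no more information about the secret than the full shares of those $z$ parties, which is none by the secrecy condition of the original $((t,n;z))_q$ scheme. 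Hence the truncated scheme is a $((t',|A|;z'))_q$ QSS scheme with $t'\le|A|$ and $z'\ge z$.

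Applying Lemma~\ref{lm:qss_storage_cost} to the truncated scheme gives
\begin{equation*}
\text{CC}_n(A)=\sum_{j\in A}h_{j,A}\ \geq\ \frac{|A|\,m}{t'-z'}\ \geq\ \frac{|A|\,m}{|A|-z},
\end{equation*}
where the final inequality uses $t'\le|A|$ and $z'\ge z$, so that $t'-z'\le|A|-z$ and the denominator only shrinks. This is exactly~\eqref{eq:cc-auth-set}.

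The main obstacle is the secrecy claim for the truncated scheme, namely that restricting each party's share to the forwarded portion cannot increase the information available to a forbidden set. This is intuitively clear because discarding qudits is a partial trace, which cannot create information, but it must be argued carefully rather than asserted: one needs that the recovery procedure for $A$ genuinely reconstructs the secret from only the forwarded qudits (so the truncated full set is authorized), and that the marginal state of any $z$-subset's forwarded qudits is independent of the secret (inherited from the original scheme's secrecy). I would also check the edge case $z=0$ and confirm that the a-priori fixing of which qudits are forwarded, emphasized after the communication-cost definition, is what makes $h_{j,A}$ well defined so that the truncated scheme is unambiguous.
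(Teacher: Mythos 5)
Your proposal takes essentially the same route as the paper's proof: truncate the scheme to the qudits forwarded by the parties in $A$, note that this yields a QSS scheme on $|A|$ parties whose full set is authorized and whose $z$-subsets remain unauthorized (so Definition~\ref{de:thresholds_for_qss} applies with thresholds no worse than $|A|$ and $z$), and then invoke the storage bound of Lemma~\ref{lm:qss_storage_cost}. The only difference is a technical detail: the paper first drops the $\ell$ parties that send no qudits and applies the bound to the resulting $((|A|-\ell,|A|-\ell;z))_q$ scheme, relaxing $\frac{(|A|-\ell)m}{|A|-\ell-z}\geq\frac{|A|m}{|A|-z}$ at the end, which avoids applying the storage lemma to a scheme containing empty shares.
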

\begin{proof}
Construct a new QSS scheme from the given $((t,n;z))_q$ QSS scheme by discarding the parties in $\ol{A}$ and the qudits not downloaded by the combiner from the parties in $A$.
Clearly, this is a $((|A|,|A|;z))_q$ QSS scheme encoding the same secret of $m$ qudits.
Let $\ell$ be the number of parties with no qudits in this truncated QSS scheme.
By dropping these $\ell$ parties, we obtain a $((|A|-\ell,|A|-\ell;z))_q$ QSS scheme.

The storage cost in the $((|A|-\ell,|A|-\ell;z))_q$ QSS scheme obtained is same as the communication cost CC$_n(A)$ in the given $((t,n;z))_q$ QSS scheme.
Then, by Lemma \ref{lm:qss_storage_cost},
\begin{gather}
\text{CC}_n(A)\geq\frac{(|A|-\ell)m}{|A|-\ell-z}\geq\frac{|A|m}{|A|-z}.
\end{gather}
\end{proof}

\begin{corollary}[Bound on communication cost for $d$-sets]
\label{co:qss_comm_cost_for_d}
For $t\leq d\leq n$, in a $((t,n;z))_q$ QSS scheme,
\begin{equation}
\text{CC}_n(d)\geq\frac{dm}{d-z}.
\end{equation}
\end{corollary}
\begin{proof}
The proof follows from Theorem \ref{th:qss_comm_cost}.
\end{proof}

Now we discuss a construction for CE-QSS schemes using the framework from Theorem \ref{th:nested_code_ce_qss}.
The construction is realised by choosing the linear codes in conditions from Eq.~\eqref{eq:params_for_ce_qss} to be generalized Reed-Solomon codes.
\begin{lemma}(Choosing GRS codes for CE-QSS)
\label{lm:choose_grs}
Let $B_0$, $B_1$, $B_2$, $A_1$, $A_2$ and $E$ be codes satisfying conditions \ref{cond:M1}--\ref{cond:M3} with generator matrices given by
\begin{eqnarray}
\left[
\begin{array}{c}
\!\!G_{A_1/A_2}\!\!\!
\\G_{A_2}\\G_{B_2}\\G_E
\end{array}
\right]
&=&
\left[
\begin{array}{cccc}
1 & 1 & \hdots & 1
\\x_1 & x_2 & \!\!\hdots\!\! & x_n
\\x_1^2 & x_2^2 & \!\!\hdots\!\! & x_n^2
\\\vdots & \vdots & \!\!\ddots\!\! & \vdots
\\x_1^{a_1-a_2-1} & x_2^{a_1-a_2-1} & \!\!\hdots\!\! & x_n^{a_1-a_2-1}
\\\hdashline
\vdots & \vdots & \!\!\ddots\!\! & \vdots
\\x_1^{a_1-1} & x_2^{a_1-1} & \!\!\hdots\!\! & x_n^{a_1-1}
\\\hdashline
\vdots & \vdots & \!\!\ddots\!\! & \vdots
\\x_1^{a_1+b_2-1} & x_2^{a_1+b_2-1} & \!\!\hdots\!\! & x_n^{a_1+b_2-1}
\\\hdashline
\vdots & \vdots & \!\!\ddots\!\! & \vdots
\\x_1^{b_0-1} & x_2^{b_0-1} & \hdots & x_n^{b_0-1}
\end{array}
\right]\!.
\nonumber
\\
\end{eqnarray}
Here $x_1, x_2,\hdots x_n$ are distinct non-zero constants from $\F_q$ where $q\geq n+1$ is a prime power.
Then the codes $B_0$, $B_1$, $B_2$, $A_1+B_2$, $A_2+B_1$ are generalized Reed-Solomon codes.
\end{lemma}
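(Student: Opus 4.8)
The plan is to read off $G_{B_0}$ as the $b_0\times n$ Vandermonde matrix whose $(i+1)$-th row is $(x_1^{\,i},x_2^{\,i},\dots,x_n^{\,i})$ for $i=0,1,\dots,b_0-1$, and to observe that each of the five codes in the statement is spanned by the rows whose exponents $i$ range over a \emph{contiguous} block $\{a,a+1,\dots,b\}$. Concretely, $B_0$ uses exponents $\{0,\dots,b_0-1\}$; since $G_{B_1}=\big[\begin{smallmatrix}G_{B_2}\\ G_E\end{smallmatrix}\big]$ sits below $G_{A_1}$, the code $B_1$ uses $\{a_1,\dots,b_0-1\}$ and $B_2$ uses $\{a_1,\dots,a_1+b_2-1\}$; the rows of $A_1$ occupy $\{0,\dots,a_1-1\}$ and those of $B_2$ occupy $\{a_1,\dots,a_1+b_2-1\}$, so $A_1+B_2$ uses the contiguous block $\{0,\dots,a_1+b_2-1\}$; and the rows of $A_2$ occupy $\{a_1-a_2,\dots,a_1-1\}$ while those of $B_1$ occupy $\{a_1,\dots,b_0-1\}$, so $A_2+B_1$ uses $\{a_1-a_2,\dots,b_0-1\}$.

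The key step is a column-rescaling identity. For a contiguous exponent block $\{a,a+1,\dots,b\}$, factoring $x_j^{\,a}$ out of the $j$-th column of the block's generator matrix gives
\begin{equation*}
\big[\,x_j^{\,a+i}\,\big]_{0\le i\le b-a,\;1\le j\le n}
=\big[\,x_j^{\,i}\,\big]_{0\le i\le b-a,\;1\le j\le n}\,
\operatorname{diag}\!\big(x_1^{\,a},\dots,x_n^{\,a}\big).
\end{equation*}
The right-hand side is exactly the generator matrix of the generalized Reed--Solomon code of dimension $b-a+1$ with distinct evaluation points $x_1,\dots,x_n$ and column multipliers $v_j=x_j^{\,a}$, expressed in the monomial basis $1,x,\dots,x^{b-a}$. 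Because the $x_j$ are nonzero, each multiplier $v_j=x_j^{\,a}$ is nonzero, so this is a legitimate GRS code \cite[Section~5.3]{huffman03}; for the two blocks starting at $a=0$ (namely $B_0$ and $A_1+B_2$) the multipliers are all $1$ and we recover an ordinary Reed--Solomon code. Applying this to each of the five blocks identified above proves the claim.

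I expect the only real work to be the bookkeeping in the first paragraph: one must check that the exponent sets of the summand codes abut to form a single interval (rather than overlap or leave a gap) and that the dimensions therefore add, $\dim(A_1+B_2)=a_1+b_2$ and $\dim(A_2+B_1)=a_2+b_1$. This follows because any $b_0\le n$ of the Vandermonde rows are linearly independent over $\F_q$ --- a nonzero combination would be a nonzero polynomial of degree $<b_0$ vanishing at the $n\ge b_0$ distinct points $x_1,\dots,x_n$ --- so the stacked sub-blocks meet only in $\{\ul{0}\}$, which is precisely why conditions \ref{cond:M2}--\ref{cond:M3} are realized by this choice of generators.
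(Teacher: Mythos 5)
Your proof is correct, and it is the argument the paper implicitly relies on: the paper states this lemma without giving an explicit proof, treating it as evident from the Vandermonde structure. Your two ingredients --- the bookkeeping showing that each of $B_0$, $B_1$, $B_2$, $A_1+B_2$, $A_2+B_1$ is spanned by rows whose exponents form a contiguous block, and the column-rescaling identity exhibiting such a block as a GRS code with nonzero multipliers $v_j=x_j^{\,a}$ --- are exactly what is needed, and your linear-independence remark (a nonzero combination would be a nonzero polynomial of degree less than $b_0\leq n$ vanishing at $n$ distinct points) correctly justifies both the dimension counts and the direct-sum conditions.
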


With the choice of linear codes as described above, we now discuss the CE-QSS construction based on Theorem \ref{th:nested_code_ce_qss}.
\begin{corollary}[Construction for CE-QSS from GRS codes]
\label{co:opt_ce_qss}
Let $q\geq n+1$ be a prime power.
For any $0<z<t<d\leq n=t+z$, choosing $B_0$, $B_1$, $B_2$, $A_1$, $A_2$ and $E$ as generalized Reed-Solomon codes as given in Lemma \ref{lm:choose_grs} for the encoding in Eq.~\eqref{eq:ce_qss_encoding} with
\begin{eqnarray}
&b_0=d\,,\ \ \ b_1=z\,,\ \ \ b_2=z-d+t\,,\ \ \ e=d-t\,,\ \ \ &
\nonumber
\\&a_1=d-z\,,\ \ \ a_2=t-z&
\end{eqnarray}
gives a $((t,n\,$$=\,$$t\,$$+\,$$z,d;z))_q$ QSS scheme with following parameters having optimal storage and communication costs.
\begingroup
\allowdisplaybreaks
\begin{subequations}
\begin{gather}
m=\lcm\{d-z,t-z\}
\\w_j=\frac{d-z}{\gcd\{d-z,t-z\}}\text{\ \ for all }j\in[n]
\\\text{CC}_n(t)=\frac{t(d-z)}{\gcd\{d-z,t-z\}}
\\\text{CC}_n(d)=\frac{d(t-z)}{\gcd\{d-z,t-z\}}
\end{gather}
\end{subequations}
\endgroup
\end{corollary}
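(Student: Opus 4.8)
The plan is to verify that the GRS codes specified in Lemma~\ref{lm:choose_grs}, under the parameter substitution listed in the corollary, satisfy all four hypotheses of Theorem~\ref{th:nested_code_ce_qss}, and then to simplify the resulting generic parameters in Eq.~\eqref{eq:costs_for_ce_qss} to the claimed closed forms, finally checking optimality against the bounds of Lemma~\ref{lm:qss_storage_cost} and Corollary~\ref{co:qss_comm_cost_for_d}. First I would record the dimensions implied by the substitution: since the generator matrices are Vandermonde-type with monomials up to $x^{b_0-1}$, each of $B_0, B_1, B_2, A_1+B_2, A_2+B_1$ is a GRS code whose dimension equals its number of rows, so $\dim B_0 = b_0 = d$, $\dim B_1 = b_1 = z$, $\dim B_2 = b_2 = z-d+t$, $\dim(A_1+B_2) = a_1+b_2 = d-z+z-d+t = t$, and $\dim(A_2+B_1) = a_2+b_1 = t-z+z = t$. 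Because a GRS code of length $n$ and dimension $k$ is MDS with minimum distance $n-k+1$, and the relevant nested-pair distances $\wt(C_0\setminus C_1)$ reduce to $\wt(C_0)=n-\dim C_0+1$ for these MDS codes, I can evaluate every weight term appearing in Eq.~\eqref{eq:params_for_ce_qss} directly.

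Next I would substitute these into the three threshold/secrecy conditions. For Eq.~\eqref{seq:t_for_ce_qss} the three terms become $\wt(A_2+B_1\setminus B_1)=n-t+1$, $\wt(A_1+B_2\setminus B_2)=n-t+1$, and $\wt(B_1^\perp\setminus B_0^\perp)$; the dual of a GRS code is again GRS, so $B_0^\perp \subsetneq B_1^\perp$ are MDS of dimensions $n-d$ and $n-z$, giving $\wt(B_1^\perp\setminus B_0^\perp)=n-(n-z)+1=z+1$. With $n=t+z$ the minimum of $\{n-t+1, n-t+1, z+1\}=\{z+1,z+1,z+1\}$ equals $z+1$, so the bound reads $t\geq n-(z+1)+1=n-z=t$, satisfied with equality. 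The same computation bounds $z$ from above by $z$ in Eq.~\eqref{seq:z_for_ce_qss}, again tight. For Eq.~\eqref{seq:d_for_ce_qss} I would compute $\wt(B_0\setminus B_1)=n-d+1$ and $\wt(B_2^\perp\setminus(A_1+B_2)^\perp)$, where $(A_1+B_2)^\perp\subsetneq B_2^\perp$ are MDS of dimensions $n-t$ and $n-b_2$; this yields $\wt=n-(n-b_2)+1=b_2+1=z-d+t+1$, so $\min\{n-d+1,\, t-d+z+1\}=n-d+1$ (using $n=t+z$), and the bound reads $d\geq n-(n-d+1)+1=d$, tight. The ratio condition Eq.~\eqref{seq:d_by_t_for_ce_qss} becomes $d/t < (a_2+b_1-b_2)/a_2 = (t-z+z-(z-d+t))/(t-z)=(d-z)/(t-z)$, i.e.\ $d(t-z)<t(d-z)$, which simplifies to $dz<tz$, true since $d>t$ would be false here but in fact we need $td-dz<td-tz$, giving $-dz<-tz$, i.e.\ $tz<dz$, true as $t<d$ and $z>0$.

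Finally I would compute the parameters. Since $a_1=d-z$ and $b_1-b_2=e=d-t$, while $a_2=t-z$, the quantity $\gcd\{a_2,b_1-b_2\}=\gcd\{t-z,d-t\}=\gcd\{t-z,d-z\}$ (as $d-t=(d-z)-(t-z)$), so $m=a_1a_2/\gcd=(d-z)(t-z)/\gcd\{d-z,t-z\}=\lcm\{d-z,t-z\}$, and the remaining three costs follow by substituting $a_2+b_1-b_2=(t-z)+(d-t)=d-z$ into Eq.~\eqref{eq:costs_for_ce_qss}. The main obstacle I anticipate is the bookkeeping to confirm that each weight term collapses correctly, particularly verifying that the dual GRS codes nest in the right direction and have the dimensions I claim so that the $\min$ in each condition is attained by the expected term; once the MDS property is invoked cleanly this is routine. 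For optimality, I would observe that with $n=t+z$ and $|A|=d$ the storage bound of Lemma~\ref{lm:qss_storage_cost} gives $\sum w_j \geq nm/(t-z)$ and the communication bound of Corollary~\ref{co:qss_comm_cost_for_d} gives $\mathrm{CC}_n(d)\geq dm/(d-z)$; substituting $m=\lcm\{d-z,t-z\}$ shows the constructed values meet both bounds with equality, establishing optimality.
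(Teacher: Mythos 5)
Your proposal is correct and takes essentially the same route as the paper's proof: invoke Lemma~\ref{lm:choose_grs} together with the Singleton bound (and the fact that duals of GRS codes are again MDS) to verify every condition in Eq.~\eqref{eq:params_for_ce_qss} of Theorem~\ref{th:nested_code_ce_qss}, simplify the generic parameters in Eq.~\eqref{eq:costs_for_ce_qss} using $\gcd\{t-z,d-t\}=\gcd\{t-z,d-z\}$ and the $\gcd$--$\lcm$ identity, and confirm equality in Lemma~\ref{lm:qss_storage_cost} and Corollary~\ref{co:qss_comm_cost_for_d}. Your version is simply a more explicit working-out of the paper's argument; the only blemish is the momentary sign slip when verifying Eq.~\eqref{seq:d_by_t_for_ce_qss} ($dz<tz$ before you correct it to $tz<dz$), which you fix within the same sentence.
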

\begin{proof}
By Lemma \ref{lm:choose_grs}, we know that the codes $B_0$, $B_1$, $B_2$, $A_1+B_2$ and $A_2+B_1$ are generalized Reed-Solomon codes.
Due to the Singleton bound, for any linear MDS code $L_0$ with another linear code $L_1\subsetneq L_0$, we know that $\wt(L_0\setminus L_1)\geq\wt(L_0)=n-\dim L_0+1$.
Now applying this bound, we see that conditions in Eq.~\eqref{eq:params_for_ce_qss} from Theorem \ref{th:nested_code_ce_qss} hold true for the chosen classical codes.
The parameters $m$, $w_j$, CC$_n(t)$ and CC$_n(d)$ follow from Eq.~\eqref{eq:costs_for_ce_qss}.

The CE-QSS scheme has optimal storage cost since the bound in Lemma \ref{lm:qss_storage_cost} is satisfied with equality.
The scheme has optimal communication costs as the values of CC$_n(t)$ and CC$_n(d)$ satisfy the bound in Corollary \ref{co:qss_comm_cost_for_d} with equality.
\end{proof}

The staircase codes based construction for $((t,n=2t-1,d))_q$ CE-QTS schemes given in \cite[Section III]{senthoor19} is a special case of the construction given in Corollary \ref{co:opt_ce_qss} where $z=t-1$.

\section{Conclusion}
In this paper, we introduced the class of communication efficient QSS schemes called CE-QSS which generalized CE-QTS schemes to include non-threshold schemes.
We proposed a framework based on extended CSS codes to construct CE-QSS schemes.
A specific construction using this framework has been provided to obtain CE-QSS schemes with optimal storage and communication costs.
As future work, we can look at constructions using this framework with families of linear codes other than GRS codes.
This work could be also extended to study CE-QSS schemes which generalize universal CE-QTS schemes from \cite{senthoor20}.

\bibliographystyle{IEEEtran}

\appendices
\section{Proofs for Lemmas \ref{lm:bound_recover_s} and \ref{lm:bound_disentangle_s}}
\label{ap:proof_for_threshold_lemmas}
We first define punctured codes and shortened codes and describe some of their properties.
For a general introduction to punctured codes and shortened codes, see \cite[Section 1.5]{huffman03}.

\begin{definition}[Punctured code]
For any $[n,k]$ code $C$ and $A\subseteq[n]$, the punctured code $C^A$ is defined as
\begin{equation*}
C^A=\{(c_i)_{i\in A}:(c_1,c_2,\hdots c_n)\in C\}.
\end{equation*}
\end{definition}

\begin{definition}[Shortened code]
For any $[n,k]$ code $C$ and $A\subseteq[n]$, the shortened code $C_A$ is defined as
\begin{equation*}
C_A=\{(c_i)_{i\in A}:c_j=0\ \forall j\in\ol{A},\ (c_1,c_2,\hdots c_n)\in C\}.
\end{equation*}
\end{definition}

\begin{lemma}[Properties of punctured codes \& shortened codes]
\ \\\vspace{-2\baselineskip}
\label{lm:prop_codes}
\begin{enumerate}[label=(\roman*)]
\item\label{lm:ap_lm_2}\cite[Lemma 1]{forney94}
$\dim C^A+\dim C_{\ol{A}}=\dim C$
\item\label{lm:ap_lm_3}\cite[Lemma 2]{forney94}
$\dim C^A+\dim (C^\perp)_A=|A|$
\item\label{lm:ap_lm_0}\cite[Section 1.5.1]{huffman03}
$\dim C^A=\rank\,G_C^{(A)}$
\end{enumerate}
\end{lemma}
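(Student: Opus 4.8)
The plan is to give short self-contained arguments for the three identities directly from the definitions, since each is a standard property of punctured and shortened codes. I would order them so that the immediate facts come first and serve as stepping stones: item~\ref{lm:ap_lm_0}, then item~\ref{lm:ap_lm_2}, and finally item~\ref{lm:ap_lm_3}, which is the one genuinely involving duality.

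For item~\ref{lm:ap_lm_0}, note that the rows of $G_C$ span $C$ and that restricting a vector to the coordinates indexed by $A$ is linear, hence commutes with forming linear combinations of those rows. Consequently the row space of the submatrix $G_C^{(A)}$ is precisely $\{(c_i)_{i\in A}:\underline c\in C\}=C^A$, and taking dimensions gives $\dim C^A=\rank G_C^{(A)}$.

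For item~\ref{lm:ap_lm_2}, I would apply rank--nullity to the restriction map $\pi\colon C\to\F_q^{|A|}$ defined by $\pi(\underline c)=(c_i)_{i\in A}$. By definition the image of $\pi$ equals $C^A$, so its dimension is $\dim C^A$. The kernel is the set of codewords of $C$ that vanish on $A$, i.e.\ that are supported on $\ol A$; restriction to $\ol A$ maps this kernel bijectively onto $C_{\ol A}$, so $\dim\ker\pi=\dim C_{\ol A}$. Rank--nullity then yields $\dim C=\dim C^A+\dim C_{\ol A}$.

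Item~\ref{lm:ap_lm_3} is where the real work lies, because it mixes $C$ with $C^\perp$. The conceptual statement is the duality between puncturing and shortening, namely $(C^A)^\perp=(C^\perp)_A$ inside $\F_q^{|A|}$, from which $\dim C^A+\dim(C^\perp)_A=|A|$ would follow at once via $\dim C^A+\dim(C^A)^\perp=|A|$. To make this rigorous I would set $W=\{\underline v\in\F_q^n:v_j=0\text{ for all }j\in\ol A\}$, a space of dimension $|A|$, and observe that restriction to $A$ identifies $(C^\perp)_A$ with $C^\perp\cap W$. Then I would compute $\dim(C^\perp\cap W)=\dim C^\perp+\dim W-\dim(C^\perp+W)$ and use $(C^\perp+W)^\perp=C\cap W^\perp$ together with $W^\perp$ being the vectors supported on $\ol A$, so that $C\cap W^\perp$ restricts bijectively onto $C_{\ol A}$. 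Assembling these gives $\dim(C^\perp)_A=|A|-\dim C+\dim C_{\ol A}$, and substituting $\dim C_{\ol A}=\dim C-\dim C^A$ from item~\ref{lm:ap_lm_2} collapses this to $|A|-\dim C^A$, which is exactly the claim. The main obstacle is precisely this bookkeeping with the dual code: correctly identifying the shortened dual $(C^\perp)_A$ with the intersection $C^\perp\cap W$ and relating $(C^\perp+W)^\perp$ back to the shortened code $C_{\ol A}$, after which item~\ref{lm:ap_lm_2} finishes the count.
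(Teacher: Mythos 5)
Your proposal is correct, but it is worth noting that the paper itself gives no proof of this lemma at all: all three identities are simply cited as known results, items (i) and (ii) from Forney's dimension/length-profile paper and item (iii) from Huffman--Pless. What you have written is a self-contained reconstruction of those standard proofs, and it is sound in every step: item \ref{lm:ap_lm_0} follows because coordinate restriction commutes with taking linear combinations of the rows of $G_C$; item \ref{lm:ap_lm_2} is exactly rank--nullity applied to the restriction map $\pi\colon C\to\F_q^{|A|}$, whose kernel restricts isomorphically onto $C_{\ol{A}}$; and your treatment of item \ref{lm:ap_lm_3} is the genuinely nontrivial part, handled correctly by identifying $(C^\perp)_A$ with $C^\perp\cap W$, using $\dim(C^\perp\cap W)=\dim C^\perp+\dim W-\dim(C^\perp+W)$, the identity $(C^\perp+W)^\perp=C\cap W^\perp$, and the fact that $C\cap W^\perp$ restricts isomorphically onto $C_{\ol{A}}$, before closing the count with item \ref{lm:ap_lm_2}. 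One small remark: the duality facts you invoke, namely $(X+Y)^\perp=X^\perp\cap Y^\perp$, $(X^\perp)^\perp=X$, and $\dim X+\dim X^\perp=n$, do hold over $\F_q$ because the standard bilinear form is non-degenerate, even though it is not definite and $X\cap X^\perp$ may be nontrivial; it would be worth stating this explicitly since it is the only place where a reader might worry that an argument valid over $\mathbb{R}$ is being transported to a finite field. In short, you supplied honest proofs where the authors chose to cite, and your route (rank--nullity plus puncturing/shortening duality) is the same one the cited references use.
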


\begin{lemma}
\label{lm:proof_inter_1}
Consider linear codes $C_0$ and $C_1$ such that $C_1\subsetneq C_0\subseteq\F_q^n$. For any $W\subseteq[n]$,
\begin{eqnarray}
(C_0)_W\supsetneq(C_1)_W
\ \ \Leftrightarrow\ \ 
\exists\,\ul{c}\in C_0\setminus C_1\text{ with }\supp(\ul{c})\subseteq W.
\nonumber\\
\end{eqnarray}
\end{lemma}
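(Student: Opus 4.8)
The plan is to exploit the fact that the shortened code $C_W$ is, by its very definition, in bijection with the subcode of $C$ consisting of the codewords whose support lies inside $W$. First I would record the basic observation that the map sending a codeword $\ul{c}\in C$ with $\supp(\ul{c})\subseteq W$ to its restriction $(c_i)_{i\in W}$ is a bijection onto $C_W$: it is surjective directly from the definition of the shortened code, and it is injective because any such codeword vanishes on $\ol{W}$ and is therefore determined by its entries on $W$. Applying this identification to both $C_0$ and $C_1$, and noting that $C_1\subseteq C_0$ immediately forces $(C_1)_W\subseteq (C_0)_W$, reduces the whole statement to comparing the two support-restricted subcodes $\{\ul{c}\in C_i:\supp(\ul{c})\subseteq W\}$ for $i\in\{0,1\}$.

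For the forward implication, I would assume $(C_0)_W\supsetneq (C_1)_W$ and pick some $\ul{a}\in (C_0)_W\setminus (C_1)_W$. By the definition of the shortened code there is a codeword $\ul{c}\in C_0$ with $\supp(\ul{c})\subseteq W$ whose restriction to $W$ is $\ul{a}$. I then argue by contradiction: if $\ul{c}\in C_1$, then since $\supp(\ul{c})\subseteq W$ this codeword would contribute $\ul{a}$ to $(C_1)_W$, contradicting $\ul{a}\notin (C_1)_W$. Hence $\ul{c}\in C_0\setminus C_1$ with support contained in $W$, which is exactly what is claimed.

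For the converse, given $\ul{c}\in C_0\setminus C_1$ with $\supp(\ul{c})\subseteq W$, its restriction $(c_i)_{i\in W}$ visibly lies in $(C_0)_W$. To see it is \emph{not} in $(C_1)_W$ I would invoke injectivity of the restriction map: any element of $(C_1)_W$ lifts to a unique codeword of $C_1$ supported in $W$, so if $(c_i)_{i\in W}\in (C_1)_W$ held, its lift would have to coincide with $\ul{c}$ (both are supported in $W$ and agree there), forcing $\ul{c}\in C_1$, a contradiction. This yields $(C_0)_W\supsetneq (C_1)_W$ and completes the equivalence.

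Both directions hinge solely on the single structural fact that a codeword supported in $W$ is recovered from its restriction to $W$, so there is no genuine analytic obstacle here. The only place demanding care is keeping the correspondence between a shortened code and its underlying support-restricted subcode fully explicit, so that the statement ``$\ul{a}\in (C_1)_W$'' is translated back, via the unique lift, into ``the corresponding codeword lies in $C_1$''; overlooking the injectivity of the restriction is the one step where a sloppy argument could break down.
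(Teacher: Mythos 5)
Your proof is correct and follows essentially the same route as the paper's: both directions rest on the correspondence between elements of the shortened code and codewords supported inside $W$, via restriction and zero-extension. Your argument is in fact slightly more explicit than the paper's, which leaves the injectivity of the restriction map (and hence the uniqueness of lifts) implicit.
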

\begin{proof}
Assume $\exists\,\ul{c}=(c_1,c_2,\hdots c_n)\in C_0\setminus C_1$ with $\supp(\ul{c})\subseteq W$.
This implies $(c_i)_{i\in W}\in(C_0)_W\setminus(C_1)_W$ and hence $(C_0)_W\supsetneq(C_1)_W$.

To prove the converse, since $(C_0)_W\supsetneq(C_1)_W$, take some $\ul{c}'\in(C_0)_W\setminus(C_1)_W$.
Consider the vector $\ul{c}\in\F_q^n$ with entries in $W$ from those in $\ul{c}'$ and remaining entries zero.
Clearly, $\ul{c}\in C_0\setminus C_1$ and $\supp(\ul{c})\subseteq W$.
\end{proof}

\begin{lemma}
\label{lm:proof_inter_2}
Consider linear codes $C_0$ and $C_1$ such that $C_1\subsetneq C_0\subseteq\F_q^n$.
Let $1\leq w\leq n$.
Then
\begin{eqnarray}
&&(C_0)_W=(C_1)_W\ \ \forall\,W\in[n]\text{ such that }|W|=w
\ \ \ \ \ \nonumber
\\&&\phantom{(C_0)_W=()}\Leftrightarrow\ \wt(C_0\setminus C_1)>w.
\end{eqnarray}
\end{lemma}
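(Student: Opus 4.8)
The plan is to reduce everything to Lemma~\ref{lm:proof_inter_1}, which already characterizes when the shortened code $(C_0)_W$ strictly contains $(C_1)_W$. The first observation is that, because $C_1 \subseteq C_0$, every codeword supported on $W$ that lies in $C_1$ also lies in $C_0$, so we always have $(C_1)_W \subseteq (C_0)_W$. Consequently, for a fixed $W$ the equality $(C_0)_W = (C_1)_W$ is precisely the negation of the strict inclusion $(C_0)_W \supsetneq (C_1)_W$. By Lemma~\ref{lm:proof_inter_1}, that strict inclusion is equivalent to the existence of some $\ul{c} \in C_0 \setminus C_1$ with $\supp(\ul{c}) \subseteq W$. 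Thus $(C_0)_W = (C_1)_W$ holds if and only if \emph{no} codeword of $C_0 \setminus C_1$ has its support contained in $W$.

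With this translation in hand, I would prove the two directions by relating the support of a codeword to $|W| = w$. For the direction $\wt(C_0 \setminus C_1) > w \Rightarrow (C_0)_W = (C_1)_W$ for all such $W$, I argue by contradiction: if equality failed for some $W$ with $|W| = w$, the characterization above supplies a $\ul{c} \in C_0 \setminus C_1$ with $\supp(\ul{c}) \subseteq W$, whence $\wt(\ul{c}) = |\supp(\ul{c})| \leq |W| = w$, contradicting $\wt(C_0 \setminus C_1) > w$. For the converse, I prove the contrapositive: if $\wt(C_0 \setminus C_1) \leq w$, pick a codeword $\ul{c} \in C_0 \setminus C_1$ with $\wt(\ul{c}) \leq w$ and enlarge its support to a set $W$ with $|W| = w$ (possible since $|\supp(\ul{c})| \leq w \leq n$); then $\supp(\ul{c}) \subseteq W$, so by Lemma~\ref{lm:proof_inter_1} we get $(C_0)_W \supsetneq (C_1)_W$ for this particular $W$, i.e. the universal equality fails.

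The only point requiring any care is the interplay of the quantifiers: the left-hand side ranges over \emph{all} $W$ of size $w$, whereas the minimum distance condition concerns the \emph{existence} of a single light codeword. This is handled cleanly in each direction by choosing $W$ appropriately --- enlarging a given support into a size-$w$ set in one direction, and quantifying over all size-$w$ sets in the other --- so I do not anticipate a genuine obstacle; the lemma is essentially an unwinding of Lemma~\ref{lm:proof_inter_1} together with the identity $\wt(\ul{c}) = |\supp(\ul{c})|$.
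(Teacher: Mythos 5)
Your proof is correct and follows essentially the same route as the paper: both negate Lemma~\ref{lm:proof_inter_1} to translate the equality $(C_0)_W=(C_1)_W$ into the non-existence of a codeword of $C_0\setminus C_1$ supported inside $W$, and then resolve the quantifier over size-$w$ sets via the identity $\wt(\ul{c})=|\supp(\ul{c})|$. The only difference is one of explicitness: you spell out the support-enlargement argument (extending $\supp(\ul{c})$ to a set of size exactly $w$) that the paper compresses into the single sentence ``This is same as the condition $\wt(C_0\setminus C_1)>w$,'' and you also make explicit the inclusion $(C_1)_W\subseteq(C_0)_W$ needed to pass from ``not strictly contains'' to ``equals.''
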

\begin{proof}
By contraposition of Lemma \ref{lm:proof_inter_1}, for any $W\subseteq[n]$,
\begin{eqnarray}
(C_0)_W=(C_1)_W
\ \Leftrightarrow\ 
\nexists\,\ul{c}\in C_0\setminus C_1\text{ with }\supp(\ul{c})\subseteq W.
\nonumber\\
\end{eqnarray}
Considering only sets of size $w$, this implies
$(C_0)_W=(C_1)_W$ for all $W\subseteq[n]$ such that $|W|=w$ if and only if
\begin{eqnarray}
&&\nexists\,\ul{c}\in C_0\setminus C_1\text{ with }\supp(\ul{c})\subseteq W,
\nonumber
\\&&\phantom{\nexists\,\ul{c}\in C_0\setminus C_1\text{ w}}
\forall\,W\in[n]\text{ such that }|W|=w.\ \ 
\end{eqnarray}
This is same as the condition $\wt(C_0\setminus C_1)>w$.
\end{proof}

\begin{proof}[Proof for Lemma \ref{lm:bound_recover_s}]
By Lemma \ref{lm:proof_inter_2}, the condition $\rho\geq n-\wt(C_0\setminus C_1)+1$ holds if and only if for all $W\subseteq[n]$ such that $|W|=n-\rho$, $(C_0)_W=(C_1)_W$.
By taking $P=\ol{W}$, this is equivalent to the condition that for all $P$ such that $|P|=\rho$, $(C_0)_{\overline{P}}=(C_1)_{\overline{P}}$ \textit{i.e.}
\begingroup
\allowdisplaybreaks
\begin{eqnarray}
\dim (C_0)_{\overline{P}}&=&\dim (C_1)_{\overline{P}}
\\\label{eq:rec_cond_proof_eq_n1}
\dim C_0-\dim C_0^P&=&\dim C_1-\dim C_1^P
\\\dim C_0^P-\dim C_1^P&=&\dim C_0-\dim C_1
\\\label{eq:rec_cond_proof_eq_n2}
\rank G_{C_0}^{(P)}-\rank G_{C_0}^{(P)}&=&\dim C_0-\dim C_1.
\end{eqnarray}
\endgroup
The equivalences in Eq.~\eqref{eq:rec_cond_proof_eq_n1} and \eqref{eq:rec_cond_proof_eq_n2} follow from \ref{lm:ap_lm_2} and \ref{lm:ap_lm_0} in Lemma \ref{lm:prop_codes} respectively.
\end{proof}

\noindent
\begin{proof}[Proof for Lemma \ref{lm:bound_disentangle_s}]
By Lemma \ref{lm:proof_inter_2}, the condition $\rho\geq n-\wt(C_1^\perp\setminus C_0^\perp)+1$ holds if and only if for all $W\subseteq[n]$ such that $|W|=n-\rho$, 
$(C_1^\perp)_W=(C_0^\perp)_W$.
By taking $P=\ol{W}$, this is equivalent to the condition that for all $P$ such that $|P|=\rho$, $(C_1^\perp)_{\ol{P}}=(C_0^\perp)_{\ol{P}}$ \textit{i.e.}
\begingroup
\allowdisplaybreaks
\begin{eqnarray}
\dim (C_1^\perp)_{\ol{P}}&=&\dim (C_0^\perp)_{\ol{P}}
\\\dim C_1^{\ol{P}}&=&\dim C_0^{\ol{P}}
\label{eq:rec_cond_proof_eq_n3}
\\\rank G_{C_0}^{(\ol{P})}&=&\rank G_{C_1}^{(\ol{P})}.
\label{eq:rec_cond_proof_eq_n4}
\end{eqnarray}
\endgroup
The equivalences in Eq.~\eqref{eq:rec_cond_proof_eq_n3} and \eqref{eq:rec_cond_proof_eq_n4} follow from \ref{lm:ap_lm_3} and \ref{lm:ap_lm_0} in Lemma \ref{lm:prop_codes} respectively.
\end{proof}

\section{Bound on storage cost for QSS}
\label{ap:proof_for_storage_cost}
In this section, we derive the bound on storage cost given in Lemma~\ref{lm:qss_storage_cost}.
First, we derive the information-theoretic conditions for authorized and unauthorized sets.
Then we use these conditions to bound the entropy of a \textit{significant} set of parties.
A significant set is a set of parties adding which an unauthorized set becomes authorized. Finally we use this bound to prove the bound on storage cost.
This approach can be seen as the generalization of the results in \cite{imai03} and \cite[Section VII]{senthoor22} both of which focus on perfect QSS schemes.

We first recall some necessary concepts from quantum information theory.
For further reading on the topic, see \cite{nielsen00}.
Consider a quantum system $\mathcal{Q}$ defined over the Hilbert space $\mathcal{H}_\mathcal{Q}$ of dimension $N_\mathcal{Q}$.
Then the density matrix corresponding to $\mathcal{Q}$ can be defined as
\begin{equation}
\rho_\mathcal{Q}=\sum_{i=0}^{N_\mathcal{Q}-1}\lambda_i\ketbra{\phi_i}{\phi_i}
\end{equation}
where $\{\lambda_i\}$ gives the probability distribution in a measurement over some basis of orthonormal states $\{\ket{\phi_0},\ket{\phi_1},\hdots$ $\ket{\phi_{N_\mathcal{Q}-1}}\}$ over the Hilbert space $\Hs_\mathcal{Q}$ of dimension $N_\mathcal{Q}$.
The von Neumann entropy of $\mathcal{Q}$ is given by
\begin{equation}
\mathsf{S}(\mathcal{Q})=-\tr(\rho_\mathcal{Q}\ \text{log}\ \rho_\mathcal{Q})=-\sum_{i=1}^{N_\mathcal{Q}}\lambda_i\ \text{log}\,\lambda_i.
\end{equation}
The value $\mathsf{S}(\mathcal{Q})$ is bounded as $0\leq\mathsf{S}(\mathcal{Q})\leq\text{log}\,N_\mathcal{Q}$.
The value is maximum when $\mathcal{Q}$ is in the maximally mixed state ($\rho_\mathcal{Q}=I_{N_\mathcal{Q}}/N_\mathcal{Q}$) and zero when it is in pure state.

Consider the bipartite quantum system $AB$ whose density matrix $\rho_{AB}$ over the Hilbert space $\mathcal{H}_A\otimes\mathcal{H}_B$. 
Joint quantum entropy of $AB$ is defined as
\begin{equation}
\mathsf{S}(AB)=-\tr(\rho_{AB}\ \text{log}\ \rho_{AB}).
\end{equation}
It satisfies two important properties.
\begin{eqnarray}
\mathsf{S}(AB)\leq \mathsf{S}(A)+\mathsf{S}(B)
\label{eq:sub-addi}
\\\mathsf{S}(AB)\geq|\mathsf{S}(A)-\mathsf{S}(B)|
\label{eq:araki-lieb}
\end{eqnarray}
The property \eqref{eq:sub-addi} is called subadditivity and \eqref{eq:araki-lieb} is called the Araki-Lieb inequality.
The Araki-Lieb inequality implies that, if $AB$ is in pure state, then $\mathsf{S}(A)=\mathsf{S}(B)$.

Let $\mathcal{R}$ be the reference system over the Hilbert space of dimension $N_\mathcal{Q}$ such that the combined system $\mathcal{RQ}$ is in the pure state
\begin{equation}
\ket{\Phi_{\mathcal{RQ}}}=\sum_{i=0}^{N_\mathcal{Q}-1}\sqrt{\lambda_i}\ket{\phi_i}_\mathcal{R}\ket{\phi_i}_\mathcal{Q}.
\nonumber
\end{equation}
It is easy to see that $\mathsf{S}(\mathcal{Q})=\mathsf{S}(\mathcal{R})$ and $\mathsf{S}(\mathcal{RQ})=0$.
The following lemma discusses the condition for recovering the state of $\mathcal{Q}$ after a quantum operation acted upon it.

\begin{lemma}[Quantum data processing inequality \cite{schumacher96}]
Consider the system $\mathcal{Q}$ in an arbitrary quantum state $\rho_\mathcal{Q}$.
Let $\mathcal{R}$ be a reference system such that $\mathcal{RQ}$ is in pure state.
If $\mathcal{F}$ is a quantum operation which takes the state $\rho_\mathcal{Q}$ to another state $\rho_{\mathcal{Q}'}$ given by the system $\mathcal{Q}'$, then
\begin{eqnarray}
\mathsf{S}(\mathcal{Q})\geq\mathsf{S}(\mathcal{Q}')-\mathsf{S}(\mathcal{R}\mathcal{Q}')
\end{eqnarray}
with equality achieved if and only if the original state $\rho_\mathcal{Q}$ can be completely recovered from $\rho_{\mathcal{Q}'}$.
\label{lm:qdpi}
\end{lemma}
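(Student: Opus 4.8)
The plan is to obtain the inequality directly from the Araki--Lieb inequality \eqref{eq:araki-lieb} applied to the bipartite system $\mathcal{R}\mathcal{Q}'$, and then to extract the equality condition from a Stinespring dilation of $\mathcal{F}$. First I would observe that, since $\mathcal{F}$ is a trace-preserving operation acting on $\mathcal{Q}$ alone, the reduced state of the untouched reference $\mathcal{R}$ is unchanged by the operation, so its entropy $\mathsf{S}(\mathcal{R})$ is the same before and after; and since $\mathcal{R}\mathcal{Q}$ is initially pure, the pure-state case of the Araki--Lieb inequality gives $\mathsf{S}(\mathcal{R})=\mathsf{S}(\mathcal{Q})$. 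Now applying \eqref{eq:araki-lieb} to $\mathcal{R}\mathcal{Q}'$ yields $\mathsf{S}(\mathcal{R}\mathcal{Q}')\geq\mathsf{S}(\mathcal{Q}')-\mathsf{S}(\mathcal{R})$, which rearranges to $\mathsf{S}(\mathcal{R})\geq\mathsf{S}(\mathcal{Q}')-\mathsf{S}(\mathcal{R}\mathcal{Q}')$; substituting $\mathsf{S}(\mathcal{R})=\mathsf{S}(\mathcal{Q})$ gives the claimed bound. The inequality is thus one line away from \eqref{eq:araki-lieb}, and the substance lies entirely in the equality condition.

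For the equality condition I would introduce an environment $E$ prepared in a pure state and realize $\mathcal{F}$ as a unitary $U$ on $\mathcal{Q}E$ followed by discarding a subsystem $E'$, so that $\mathcal{Q}'$ and $E'$ together span the space of $\mathcal{Q}E$. Because $U$ is unitary and $\rho_{\mathcal{R}\mathcal{Q}}\otimes\ketbra{0}{0}_E$ is pure, the global state on $\mathcal{R}\mathcal{Q}'E'$ remains pure. Applying the pure-state entropy identities to this tripartite state gives $\mathsf{S}(\mathcal{R}\mathcal{Q}')=\mathsf{S}(E')$ and $\mathsf{S}(\mathcal{Q}')=\mathsf{S}(\mathcal{R}E')$. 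Equality in the bound means equality in the Araki--Lieb step, i.e.\ $\mathsf{S}(\mathcal{R}\mathcal{Q}')=\mathsf{S}(\mathcal{Q}')-\mathsf{S}(\mathcal{R})$, and substituting the two identities converts this into $\mathsf{S}(\mathcal{R}E')=\mathsf{S}(\mathcal{R})+\mathsf{S}(E')$. This is precisely saturation of subadditivity \eqref{eq:sub-addi} for the pair $\mathcal{R}E'$, which holds if and only if $\rho_{\mathcal{R}E'}=\rho_{\mathcal{R}}\otimes\rho_{E'}$, i.e.\ the environment is decoupled from the reference.

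It then remains to identify this decoupling with perfect recoverability, i.e.\ with the existence of a channel $\mathcal{D}$ on $\mathcal{Q}'$ such that $(\mathcal{D}\otimes I_{\mathcal{R}})\rho_{\mathcal{R}\mathcal{Q}'}=\rho_{\mathcal{R}\mathcal{Q}}$. Intuitively, $\rho_{\mathcal{R}E'}=\rho_{\mathcal{R}}\otimes\rho_{E'}$ says $E'$ carries no information about $\mathcal{R}$, hence about the input, so nothing was lost and $\mathcal{F}$ can be inverted; conversely, a perfect recovery cannot have leaked any correlation into $E'$. The substantive step is the forward construction of $\mathcal{D}$ from the product form of $\rho_{\mathcal{R}E'}$ (for instance via the transpose/Petz recovery map for $\mathcal{F}$), together with the verification that it restores $\rho_{\mathcal{R}\mathcal{Q}}$. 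I expect this equivalence, rather than the inequality, to be the main obstacle: the bound drops out of \eqref{eq:araki-lieb} immediately, whereas characterizing equality requires the dilation and an analysis of the structure of states that decouple the environment from the reference.
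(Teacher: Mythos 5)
The paper never proves this lemma itself --- it is imported from Schumacher and Nielsen \cite{schumacher96} and used as a black box in Appendix B --- so there is no in-paper argument to compare against, and your proposal must be judged as a reconstruction of that reference's proof. The inequality half of your argument is complete and correct: $\rho_\mathcal{R}$ is untouched because $\mathcal{F}$ acts on $\mathcal{Q}$ alone, $\mathsf{S}(\mathcal{R})=\mathsf{S}(\mathcal{Q})$ by initial purity, and Araki--Lieb \eqref{eq:araki-lieb} applied to $\rho_{\mathcal{R}\mathcal{Q}'}$ gives the bound. Your reduction of the equality case to decoupling, $\rho_{\mathcal{R}E'}=\rho_{\mathcal{R}}\otimes\rho_{E'}$, via the dilation identities $\mathsf{S}(\mathcal{R}\mathcal{Q}')=\mathsf{S}(E')$, $\mathsf{S}(\mathcal{Q}')=\mathsf{S}(\mathcal{R}E')$ and the saturation condition for subadditivity \eqref{eq:sub-addi}, is also correct.

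The genuine gap is exactly where you flag it: the equivalence ``decoupling $\Leftrightarrow$ recoverability'' is asserted, not proven, and it is the entire content of the ``if and only if.'' For the forward direction you do not need the Petz map; uniqueness of purification suffices. The global state $\ket{\Psi}_{\mathcal{R}\mathcal{Q}'E'}$ purifies $\rho_{\mathcal{R}}\otimes\rho_{E'}$, and so does $\ket{\Phi}_{\mathcal{R}\mathcal{Q}}\otimes\ket{\theta}_{E'F}$ where $\ket{\theta}$ purifies $\rho_{E'}$ with an ancilla $F$; hence there is an isometry $W$ from $\mathcal{Q}'$ into $\mathcal{Q}F$ (enlarging $F$ if needed) with $(I_{\mathcal{R}E'}\otimes W)\ket{\Psi}=\ket{\Phi}\otimes\ket{\theta}$, and $\mathcal{D}(\sigma)=\tr_F\big(W\sigma W^\dagger\big)$ is the required recovery channel, since $(\mathcal{D}\otimes I_{\mathcal{R}})(\rho_{\mathcal{R}\mathcal{Q}'})=\ketbra{\Phi}{\Phi}_{\mathcal{R}\mathcal{Q}}$. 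Your converse (``recovery cannot have leaked correlation into $E'$'') also needs an argument, but it can be closed without strong subadditivity: if a recovery channel with environment $E''$ exists, the post-recovery global state on $\mathcal{R}\mathcal{Q}''E'E''$ is pure and its marginal $\rho_{\mathcal{R}\mathcal{Q}''}=\rho_{\mathcal{R}\mathcal{Q}}$ is already pure, so the global state factors as $\rho_{\mathcal{R}\mathcal{Q}''}\otimes\rho_{E'E''}$; in particular $\rho_{\mathcal{R}E'}$ is a product, which by your own chain of equivalences is equality in the bound. With these two supplements your outline becomes a full proof.
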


Let $\mathcal{S}$ be the quantum system corresponding to the secret in a QSS scheme with reference system $\mathcal{R}$.
Let $W_j$ indicate the $j$th share in the scheme.
The joint system of shares from a set of parties $A=\{j_1,j_2,\hdots j_{|A|}\}$ is given by
\begin{equation}
W_A=W_{j_1}W_{j_2}\hdots W_{j_{|A|}}.
\end{equation}
Based on Lemma \ref{lm:qdpi}, we can derive the condition for a set to be authorized (or unauthorized) in a QSS scheme.
\begin{lemma}[Condition for authorized set]
\label{lm:auth_set_qit}
In a QSS scheme with $n$ parties, $X\subseteq[n]$ is an authorized set if and only if
\begin{equation}
\mathsf{S}(\mathcal{R}W_X)=\mathsf{S}(W_X)-\mathsf{S}(\mathcal{R}).
\end{equation}
\end{lemma}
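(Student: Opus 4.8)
The plan is to obtain the claim as a direct consequence of the quantum data processing inequality (Lemma~\ref{lm:qdpi}), applied to the encoding operation restricted to the parties in $X$. First I would identify the relevant quantum operation: the encoding of the secret $\mathcal{S}$ into the $n$ shares is itself a quantum operation, and composing it with the partial trace over the shares in $\ol{X}$ yields a quantum operation $\mathcal{F}$ mapping the secret system $\mathcal{S}$ to the subsystem $W_X$. Taking $\mathcal{Q}=\mathcal{S}$ and $\mathcal{Q}'=W_X$ in Lemma~\ref{lm:qdpi}, with $\mathcal{R}$ the reference system purifying the secret, the inequality reads
\begin{equation*}
\mathsf{S}(\mathcal{S})\geq\mathsf{S}(W_X)-\mathsf{S}(\mathcal{R}W_X),
\end{equation*}
with equality if and only if the secret state $\rho_\mathcal{S}$ can be completely recovered from $\rho_{W_X}$.

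Next I would connect the equality condition to the notion of an authorized set. By Definition~\ref{de:qss}, $X$ is authorized precisely when the secret can be recovered from the parties in $X$, i.e.\ from $\rho_{W_X}$. Hence the equality in the displayed inequality holds if and only if $X\in\Gamma$.

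Finally I would simplify using the fact that the combined system $\mathcal{RS}$ is in a pure state, so by the Araki-Lieb inequality~\eqref{eq:araki-lieb} we have $\mathsf{S}(\mathcal{S})=\mathsf{S}(\mathcal{R})$. Substituting this into the equality condition gives
\begin{equation*}
\mathsf{S}(\mathcal{R})=\mathsf{S}(W_X)-\mathsf{S}(\mathcal{R}W_X),
\end{equation*}
which rearranges to the claimed identity $\mathsf{S}(\mathcal{R}W_X)=\mathsf{S}(W_X)-\mathsf{S}(\mathcal{R})$, and each step is an equivalence, giving both directions at once.

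The only delicate point—and the main thing to get right—is the correct identification of the quantum operation $\mathcal{F}$ together with the observation that the ``recoverability of $\rho_\mathcal{S}$ from $\rho_{W_X}$'' appearing in the equality clause of Lemma~\ref{lm:qdpi} is exactly the defining property of an authorized set. Once this correspondence is established, the remainder is a one-line substitution using the purity of $\mathcal{RS}$.
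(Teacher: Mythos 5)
Your proposal is correct and follows essentially the same route as the paper: the paper also defines $\mathcal{E}_X = \tr_{W_{[n]\setminus X}}\circ\,\mathcal{E}$ and applies Lemma~\ref{lm:qdpi} with $\mathcal{Q}=\mathcal{S}$, $\mathcal{Q}'=W_X$, leaving the substitution $\mathsf{S}(\mathcal{S})=\mathsf{S}(\mathcal{R})$ and the identification of the equality clause with Definition~\ref{de:qss} implicit. Your write-up simply makes those implicit steps explicit, which is a faithful (and slightly more detailed) rendering of the paper's argument.
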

\begin{proof}
Let $\mathcal{E}$ be the encoding operation in the QSS scheme \textit{i.e.} $\rho_{W_{[n]}}=\mathcal{E}(\rho_\mathcal{S})$.
We can define a operation $\mathcal{E}_X$ such that 
\begin{equation}
\rho_{W_X}=\mathcal{E}_X(\rho_\mathcal{S})=\tr_{W_{[n]\setminus X}}\mathcal{E}(\rho_\mathcal{S}).
\end{equation}
Now, apply Lemma \ref{lm:qdpi} with $\mathcal{F}=\mathcal{E}_X$, $\mathcal{Q}=\mathcal{S}$ and $\mathcal{Q}'=W_X$ to obtain the result.
\end{proof}

\begin{lemma}[Condition for unauthorized set]
\label{lm:unauth_set_qit}
In a QSS scheme with $n$ parties, $Y\subseteq[n]$ is an unauthorized set if and only if
\begin{equation}
\mathsf{S}(\mathcal{R}W_Y)=\mathsf{S}(W_Y)+\mathsf{S}(\mathcal{R}).
\end{equation}
\end{lemma}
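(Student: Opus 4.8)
The plan is to recognize the claimed identity as the statement that the quantum mutual information between the reference system $\mathcal{R}$ and the shares held by $Y$ vanishes. Writing the mutual information as $I(\mathcal{R}:W_Y)=\mathsf{S}(\mathcal{R})+\mathsf{S}(W_Y)-\mathsf{S}(\mathcal{R}W_Y)$, the asserted equation $\mathsf{S}(\mathcal{R}W_Y)=\mathsf{S}(W_Y)+\mathsf{S}(\mathcal{R})$ is exactly $I(\mathcal{R}:W_Y)=0$. So the lemma reduces to showing that $Y$ is unauthorized if and only if $\mathcal{R}$ and $W_Y$ are uncorrelated, i.e. $\rho_{\mathcal{R}W_Y}=\rho_{\mathcal{R}}\otimes\rho_{W_Y}$. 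This is the decoupling counterpart to the proof of Lemma~\ref{lm:auth_set_qit}, where the authorized condition was obtained as the equality case of the quantum data processing inequality; here the unauthorized condition is the complementary product--state case.

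For the forward direction I would use the operational meaning of an unauthorized set: for $Y\in\mathcal{A}$ the reduced state $\rho_{W_Y}$ must be the same for every secret input (Definition~\ref{de:qss}). Purifying the secret by the reference system in the state $\ket{\Phi_{\mathcal{RS}}}$ and pushing this through the linear encoding channel, independence of $\rho_{W_Y}$ from the secret forces the joint state to factor as $\rho_{\mathcal{R}W_Y}=\rho_{\mathcal{R}}\otimes\rho_{W_Y}$. Taking von Neumann entropies of both tensor factors then yields the claimed equation in one step.

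For the reverse direction the entropy identity immediately gives $I(\mathcal{R}:W_Y)=0$. Invoking the equality condition of subadditivity (Eq.~\eqref{eq:sub-addi}) --- namely that vanishing mutual information forces the product form $\rho_{\mathcal{R}W_Y}=\rho_{\mathcal{R}}\otimes\rho_{W_Y}$ --- I conclude that $W_Y$ carries no correlation with $\mathcal{R}$. Since $\mathcal{R}$ purifies the secret, the parties in $Y$ can extract nothing about it, so $Y$ is unauthorized.

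The main obstacle is the precise bridge between the operational notion of an unauthorized set (``no information about the secret'') and the mathematical decoupling condition $\rho_{\mathcal{R}W_Y}=\rho_{\mathcal{R}}\otimes\rho_{W_Y}$: one must argue carefully that secret--independence of $\rho_{W_Y}$ is equivalent to this product structure with the reference, and one must invoke the equality case of subadditivity (standard, but not stated explicitly in the excerpt). Once these two facts are in hand, the remainder is a one--line entropy computation.
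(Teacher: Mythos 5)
Your proposal is sound in outline, but it takes a genuinely different route from the paper. The paper never formalizes ``no information'' directly; instead it embeds the given scheme into a pure-state scheme by adding one extra share $W_{n+1}$, invokes Lemma~\ref{lm:pure_state_qss_forb} to convert ``$Y$ is unauthorized'' into ``$[n+1]\setminus Y$ is authorized,'' applies the already-proved authorized-set condition (Lemma~\ref{lm:auth_set_qit}) to that complement, and then uses the fact that $\mathcal{R}W_YW_{[n+1]\setminus Y}$ is pure to trade entropies, $\mathsf{S}(W_{[n+1]\setminus Y})=\mathsf{S}(\mathcal{R}W_Y)$ and $\mathsf{S}(\mathcal{R}W_{[n+1]\setminus Y})=\mathsf{S}(W_Y)$, which yields the claimed identity. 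You instead read the identity as $I(\mathcal{R}:W_Y)=0$ and argue decoupling directly: secret-independence of $\rho_{W_Y}$ forces $\rho_{\mathcal{R}W_Y}=\rho_{\mathcal{R}}\otimes\rho_{W_Y}$ (a linearity/Choi-matrix argument), and conversely the equality case of subadditivity recovers the product form and hence secrecy. What each buys: the paper's route stays entirely inside its own lemma structure, at the cost of assuming the purification of a mixed-state scheme by one extra share (a fact from the Cleve--Gottesman theory it cites); your route is conceptually more direct and standard in modern quantum information, but it imports two facts the paper does not establish --- the equality condition for subadditivity, and the equivalence between operational secrecy and decoupling from the reference. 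On the latter, note one point you should make explicit: the direction ``product state $\Rightarrow$ unauthorized'' needs the reference--secret state to have full Schmidt rank (e.g.\ a maximally mixed secret), since for a pure secret state $\mathsf{S}(\mathcal{R})=0$ and the entropic identity holds vacuously regardless of whether $Y$ can recover the secret; with that hypothesis stated (it is implicit in the paper's setup as well), your argument closes correctly.
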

\begin{proof}
Consider a pure state QSS scheme constructed from the given QSS scheme by suitably adding an extra share $W_{n+1}$.
By Lemma \ref{lm:pure_state_qss_forb}, $Y$ is an unauthorized set if and only if $[n+1]\setminus Y$ is an authorized set.
By Lemma \ref{lm:auth_set_qit}, this implies that $Y$ is an unauthorized set if and only if
\begin{equation}
\mathsf{S}(\mathcal{S})=\mathsf{S}(W_{[n+1]\setminus Y})-\mathsf{S}(\mathcal{R}W_{[n+1]\setminus Y}).
\label{eq:temp_1}
\end{equation}
Since the $(n+1)$-party scheme is a pure state QSS scheme, $\mathcal{R}W_Y W_{[n+1]\setminus Y}$ is in pure state.
By Araki-Lieb inequality, this implies
$\mathsf{S}(W_{[n+1]\setminus Y})=\mathsf{S}(\mathcal{R}W_Y)
\text{ and }
\mathsf{S}(\mathcal{R}W_{[n+1]\setminus Y})=\mathsf{S}(W_Y)$.
Hence, the condition Eq.~\eqref{eq:temp_1} is equivalent to
\begin{eqnarray}
\mathsf{S}(\mathcal{S})&=&\mathsf{S}(\mathcal{R}W_Y)-\mathsf{S}(W_Y)
\\\mathsf{S}(\mathcal{R})&=&\mathsf{S}(\mathcal{R}W_Y)-\mathsf{S}(W_Y)
\\\mathsf{S}(\mathcal{R}W_Y)&=&\mathsf{S}(W_Y)+\mathsf{S}(\mathcal{R}).
\end{eqnarray}
\end{proof}

Now, we define a significant set of parties in a QSS scheme.
We first bound the joint entropy of shares from a significant set and then use this bound to prove the bound on storage cost.
\begin{definition}[Significant set]
For a QSS scheme, a set of parties $L$ is called a significant set when there exist a set of parties $Y$ such that
\begin{enumerate}[label=(\roman*)]
\item $Y$ is an unauthorized set and
\item $Y\cup L$ is an authorized set.
\end{enumerate}
\end{definition}

\begin{lemma}\cite[Theorem 3]{ogawa05}
\label{lm:sign_set}
For any significant set $L$ in a QSS scheme, 
\begin{equation}
\mathsf{S}(W_L)\geq\mathsf{S}(\mathcal{S}).
\end{equation}
\end{lemma}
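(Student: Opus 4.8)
The plan is to combine the information-theoretic characterizations of authorized and unauthorized sets from Lemmas~\ref{lm:auth_set_qit} and~\ref{lm:unauth_set_qit} with subadditivity~\eqref{eq:sub-addi} and the Araki--Lieb inequality~\eqref{eq:araki-lieb}, recalling throughout that $\mathsf{S}(\mathcal{R})=\mathsf{S}(\mathcal{S})$ since $\mathcal{RS}$ is pure. Fix a witness $Y$ for the significant set $L$, so that $Y$ is unauthorized and $Y\cup L$ is authorized. First I would reduce to the case $Y\cap L=\emptyset$ by replacing $Y$ with $Y\setminus L$: this set is still unauthorized (any subset of an unauthorized set reveals no more about the secret, hence is itself unauthorized), while $(Y\setminus L)\cup L=Y\cup L$ remains authorized. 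With $Y$ and $L$ now disjoint we have $W_{Y\cup L}=W_Y W_L$, which is what makes the entropy manipulations clean.

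The key steps then run as follows. I apply the Araki--Lieb inequality in its triangle form $\mathsf{S}(A)-\mathsf{S}(AB)\leq\mathsf{S}(B)$ with $A=\mathcal{R}W_Y$ and $B=W_L$, so that $AB=\mathcal{R}W_{Y\cup L}$, giving
\begin{equation*}
\mathsf{S}(W_L)\ \geq\ \mathsf{S}(\mathcal{R}W_Y)-\mathsf{S}(\mathcal{R}W_{Y\cup L}).
\end{equation*}
Substituting the unauthorized condition $\mathsf{S}(\mathcal{R}W_Y)=\mathsf{S}(W_Y)+\mathsf{S}(\mathcal{R})$ for $Y$ and the authorized condition $\mathsf{S}(\mathcal{R}W_{Y\cup L})=\mathsf{S}(W_{Y\cup L})-\mathsf{S}(\mathcal{R})$ for $Y\cup L$ yields
\begin{equation*}
\mathsf{S}(W_L)\ \geq\ \mathsf{S}(W_Y)-\mathsf{S}(W_{Y\cup L})+2\mathsf{S}(\mathcal{R}).
\end{equation*}
Finally I invoke subadditivity $\mathsf{S}(W_{Y\cup L})\leq\mathsf{S}(W_Y)+\mathsf{S}(W_L)$, which rearranges to $\mathsf{S}(W_Y)-\mathsf{S}(W_{Y\cup L})\geq-\mathsf{S}(W_L)$. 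Combining the two gives $\mathsf{S}(W_L)\geq-\mathsf{S}(W_L)+2\mathsf{S}(\mathcal{R})$, hence $\mathsf{S}(W_L)\geq\mathsf{S}(\mathcal{R})=\mathsf{S}(\mathcal{S})$, as claimed.

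I expect the main obstacle to be arranging the inequalities so that the terms $\mathsf{S}(W_Y)$ and $\mathsf{S}(W_{Y\cup L})$ cancel favourably: it is essential to peel $W_L$ off using Araki--Lieb (which lower-bounds $\mathsf{S}(W_L)$) and then to close the loop with subadditivity pointing the other way; applying subadditivity in both places would instead leave an uncontrolled term $\mathsf{S}(W_{Y\cup L})-\mathsf{S}(W_Y)$ of the wrong sign. The only other point requiring care is the possible overlap of $Y$ and $L$, which is handled at the outset by the downward closure of the adversary structure.
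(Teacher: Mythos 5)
Your proof is correct and follows essentially the same route as the paper's: the same Araki--Lieb application with the bipartition $(\mathcal{R}W_Y)(W_L)$, the same substitutions from Lemmas~\ref{lm:auth_set_qit} and~\ref{lm:unauth_set_qit}, and the same closing use of subadditivity to get $2\mathsf{S}(W_L)\geq 2\mathsf{S}(\mathcal{R})$. The only addition is your explicit reduction to a disjoint witness $Y\setminus L$ via downward closure of the adversary structure, which the paper skips by simply choosing $Y$ disjoint from $L$ at the outset; this is a harmless (indeed slightly more careful) refinement, not a different argument.
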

\begin{proof}
Let $Y$ be an unauthorized set such that $Y\cap L=\{\}$ and $Y\cup L$ is an authorized set.
By Lemma \ref{lm:auth_set_qit},
\begin{equation}
\mathsf{S}(\mathcal{R}W_Y W_L)=\mathsf{S}(W_Y W_L)-\mathsf{S}(\mathcal{R}).
\end{equation}
Applying the Araki-Lieb inequality to $\mathcal{R}W_L W_Y$, we know that $\mathsf{S}(\mathcal{R}W_Y W_L)\geq\mathsf{S}(\mathcal{R}W_Y)-\mathsf{S}(W_L)$.
This implies
\begin{equation}
\mathsf{S}(\mathcal{R}W_Y)-\mathsf{S}(W_L)\leq\mathsf{S}(W_Y W_L)-\mathsf{S}(\mathcal{R}).
\end{equation}
Since $Y$ is an unauthorised set, applying Lemma~\ref{lm:unauth_set_qit}, we obtain
\begin{equation}
\mathsf{S}(\mathcal{R})+\mathsf{S}(W_Y)-\mathsf{S}(W_L)\leq\mathsf{S}(W_Y W_L)-\mathsf{S}(\mathcal{R}).
\end{equation}
Applying the subadditivity property on $W_Y W_L$, we obtain
\begin{eqnarray}
\mathsf{S}(\mathcal{R})+\mathsf{S}(W_Y)-\mathsf{S}(W_L)
&\leq&\mathsf{S}(W_Y)+\mathsf{S}(W_L)-\mathsf{S}(\mathcal{R})
\ \ \ \ \ 
\\2\,\mathsf{S}(\mathcal{R})
&\leq&2\,\mathsf{S}(W_L)
\\\mathsf{S}(\mathcal{R})
&\leq&\mathsf{S}(W_L)
\\\mathsf{S}(\mathcal{S})
&\leq&\mathsf{S}(W_L).
\end{eqnarray}
\end{proof}

\begin{proof}[Proof for Lemma \ref{lm:qss_storage_cost}]
Consider the $((t,n;z))_q$ QSS scheme.
For any $L\subseteq[n]$ such that $|L|=t-z$, we can always find a set $Y$ of $z$ parties such that $|L\cup Y|=t$.
This implies that any set of $t-z$ parties in a $((t,n;z))_q$ QSS scheme is a significant set.

While adding up the von Neumann entropies corresponding to all sets of size $t-z$, we obtain a lower bound
\begin{eqnarray}
\sum_{\substack{L\subseteq[n]\\\text{s.t. }|L|=t-z}}\mathsf{S}(W_L)
\ &\geq&\sum_{\substack{L\subseteq[n]\\\text{s.t. }|L|=t-z}}\mathsf{S}(\mathcal{S})
\label{eq:temp_2}
\\&=&\binom{n}{t-z}\,\mathsf{S}(\mathcal{S}).
\label{eq:temp_3}
\end{eqnarray}
The inequality in Eq.~\eqref{eq:temp_2} follows from Lemma \ref{lm:sign_set}.

Now, we find an upper bound for the term.
By subadditivity property, we know that,
\begin{eqnarray}
\label{eq:temp_3b}
\sum_{\substack{L\subseteq[n]\\\text{s.t. }|L|=t-z}}\mathsf{S}(W_L)
\ &\leq&\sum_{\substack{L\subseteq[n]\\\text{s.t. }|L|=t-z}}\sum_{j\in L}\,\mathsf{S}(W_j)
\end{eqnarray}
For a particular $j\in[n]$, there are $\binom{n-1}{t-z-1}$ combinations for set $L$ which contain $j$, out of the total $\binom{n}{t-z}$ possible combinations for set $L$.
Hence, by rearranging the entropy terms in RHS of Eq. \eqref{eq:temp_3b}, we obtain
\begin{eqnarray}
\sum_{\substack{L\subseteq[n]\\\text{s.t. }|L|=t-z}}\mathsf{S}(W_L)
\ \leq\binom{n-1}{t-z-1}\sum_{j=1}^{n}\,\mathsf{S}(W_j).
\label{eq:temp_4}
\end{eqnarray}
Comparing the bounds in Eq.~\eqref{eq:temp_3} and \eqref{eq:temp_4},
\begin{eqnarray}
\binom{n-1}{t-z-1}\sum_{j=1}^{n}\,\mathsf{S}(W_j)
\ &\geq&\ \binom{n}{t-z}\,\mathsf{S}(\mathcal{S})
\\\sum_{j=1}^{n}\,\mathsf{S}(W_j)
\ &\geq&\ \frac{n}{t-z}\ \mathsf{S}(\mathcal{S}).
\label{eq:temp_5}
\end{eqnarray}
The inequality in Eq.~\eqref{eq:temp_5} is similar to \cite[Theorem 4]{ogawa05}.
Since the $j$th share in a $((t,n;z))_q$ QSS scheme has $w_j$ qudits, the dimension of the share equals $\log q^{w_j}=w_j\log q$.
This implies
\begin{eqnarray}
\sum_{j=1}^{n}w_j\log q
&\geq&\frac{n}{t-z}\ \mathsf{S}(\mathcal{S})
\\\sum_{j=1}^{n}w_j
&\geq&\frac{n}{t-z}\,\frac{\mathsf{S}(\mathcal{S})}{\log q}.
\label{eq:temp_6}
\end{eqnarray}
Since $w_j$ is independent of the state of the secret $\rho_\mathcal{S}$, the above bound on the storage cost holds for all possible values of $\mathsf{S}(\mathcal{S})$.
We obtain the tightest bound possible from Eq.~\eqref{eq:temp_6} when $\mathsf{S}(\mathcal{S})$ is maximum.
This happens when the secret is in the maximally mixed state giving $\mathsf{S}(\mathcal{S})=\log q^m=m\log q$.
\begin{eqnarray}
\sum_{j=1}^{n}w_j&\geq&\frac{nm}{t-z}.
\end{eqnarray}
\end{proof}
More generally, if $N$ is the dimension of the secret in a $((t,n;z))$ QSS scheme, then the joint system of the $n$ shares is of dimension at least $N^{n/(t-z)}$.

\end{document}